\title{Distributed Symmetry Breaking on Power Graphs via Sparsification}
\newif\ifdraft
\newcommand{\pr}{\mathbb{P}}
\newcommand{\ev}{\mathbb{E}}
\newcommand{\One}{\mathds{1}}
\newcommand*{\QED}{\null\nobreak\hfill\ensuremath{\square}}
\newcommand{\ra}{\rightarrow}
\DeclareMathOperator{\poly}{poly}
\DeclareMathOperator*{\argmin}{argmin}
\newcommand{\IS}{\mathcal{I}}
\newcommand{\congest}{$\mathsf{CONGEST}$\xspace}
\newcommand{\mpc}{$\mathsf{MPC}$\xspace}
\newcommand{\local}{$\mathsf{LOCAL}$\xspace}
\newcommand{\diam}{\mathsf{diam}} 
\newcommand{\dist}{\operatorname{dist}}
\newcommand{\ID}{\operatorname{ID}}
\newcommand{\ball}{\operatorname{Ball}}
\newcommand{\LOCAL}{$\mathsf{LOCAL}$\xspace}
\newcommand{\CONGEST}{$\mathsf{CONGEST}$\xspace}
\newtheorem{theorem}{Theorem}[section]
\newtheorem{lemma}[theorem]{Lemma}
\newtheorem{corollary}[theorem]{Corollary}
\newtheorem{definition}[theorem]{Definition}
\newtheorem{claim}[theorem]{Claim}
\newtheorem*{lemma*}{Lemma}
\newcommand{\maxdeg}{\widehat{{\Delta}}} 
\newcommand{\maxactivedeg}{\Delta_A}
\newcommand{\detsparsification}{DetSparsification\xspace}
\newcommand{\bandwidth}{\mathsf{bandwidth}}
\newcommand{\border}{\text{Border}}
\newcommand{\lovasz}{Lov\'{a}sz\xspace}
\newcommand{\myemail}[1]{\,$\cdot$\, {\small #1}}
\newcommand{\myaff}[1]{\,$\cdot$\, {\small #1}\par\medskip}
\newenvironment{myabstract}
{\list{}{\listparindent 1.5em%
        \itemindent    \listparindent
        \leftmargin    1cm
        \rightmargin   1cm
        \parsep        0pt}%
    \item\relax}
{\endlist}
\newenvironment{mycover}
{\list{}{\listparindent 0pt
        \itemindent    \listparindent
        \leftmargin    1cm
        \rightmargin   1cm
        \parsep        0pt}%
    \raggedright
    \item\relax}
{\endlist}
\begin{document}

\begin{mycover}
{\huge\bfseries\boldmath Distributed Symmetry Breaking on Power Graphs via Sparsification \par}
\bigskip
\bigskip
\bigskip
\textbf{Yannic Maus}
\myemail{yannic.maus@ist.tugraz.at}
\myaff{TU Graz}

\textbf{Saku Peltonen}
\myemail{saku.peltonen@gmail.com}
\myaff{Aalto University}

\textbf{Jara Uitto}
\myemail{jara.uitto@aalto.fi}
\myaff{Aalto University}

\bigskip

\end{mycover}
\begin{myabstract}
\noindent\textbf{Abstract.}
In this paper we present efficient distributed algorithms for classical symmetry breaking problems, maximal independent sets (MIS) and ruling sets, in power graphs. 
We work in the standard \congest model of distributed message passing, where the communication network is abstracted as a graph $G$.
Typically, the problem instance in \congest is is identical to the communication network $G$, that is, we perform the symmetry breaking in $G$.
In this work, we consider a setting where the problem instance corresponds to a power graph $G^k$, where each node of the communication network $G$ is connected to all of its $k$-hop neighbors.

A $\beta$-ruling set is a set of non-adjacent nodes such that each node in $G$ has a ruling neighbor within $\beta$ hops; a natural generalization of an MIS.
On top of being a natural family of problems, ruling sets (in power graphs) are well-motivated through their applications in the powerful \emph{shattering} framework [BEPS JACM'16, Ghaffari SODA'19] (and others).
We present randomized algorithms for computing maximal independent sets and ruling sets of $G^k$ in essentially the same time as they can be computed in $G$. 
Our main contribution is a deterministic polylogarithmic time  algorithm for computing  $k$-ruling sets of $G^k$, which (for $k>1$) improves \emph{exponentially} on the current state-of-the-art runtimes.  
Our main technical ingredient for this result is a deterministic sparsification procedure which may be of independent interest. 

 We also revisit the shattering algorithm for MIS [BEPS J'ACM'16] and present different approaches for the  post-shattering phase. Our solutions are algorithmically and analytically simpler (also in the LOCAL model) than existing solutions and obtain the same runtime as [Ghaffari SODA'16].  
\end{myabstract}

\thispagestyle{empty}

\clearpage
\thispagestyle{empty}
\tableofcontents
\clearpage
\setcounter{page}{1}
\section{Introduction}
In this paper we provide efficient deterministic and randomized algorithms for symmetry breaking problems on power graphs, that is, we compute maximal independent sets and ruling sets on the power graph $G^k$ for an integer $k$, where $G$ is the input graph. 
To illustrate the setting, let us define the central problem  of our work. 
A \emph{maximal independent set (MIS)}  $S$ of a graph $G$ is a set of non-adjacent nodes such that every node $v$ of $G$ is \emph{dominated} by a node in $S$, that is, there is a node in $S$ with distance at most $1$ from $v$. \emph{Ruling sets} generalize this notion by relaxing the distance of domination. In the power graph $G^k$, any two nodes nodes of $G$ are connected if they are at most $k$ hops apart. Hence, in an MIS of $G^k$ any two nodes have distance at least $k+1$ and every node of $G$ is dominated by a node of $S$ with distance at most $k$. 
MIS and ruling sets have been studied extensively in the classic message passing models of distributed computing, i.e., the \LOCAL and the \CONGEST model, e.g., see \cite{Luby:1986ub,alon86,BEPS16,awerbuch1989network,ghaffari16_MIS,Gha19,SEW13,KMW18,BBKO22}.  

Understanding symmetry breaking on power graphs is crucial as they appear naturally in various settings. 
A classic example is given by the frequency assignment problem. In order to avoid interference in a network of wireless transmitters, one wants to assign frequencies to the nodes of a communication network such that all neighbors of each node receive different frequencies. The problem is a vertex coloring problem on the power graph $G^2$ \cite{KMR01,HKM20,HKMN20}. 
Problems on power graphs also appear as subroutines when solving problems for $G$. One example is the state-of-the-art randomized algorithm to compute an MIS (of $G$) that relies on the computation of ruling sets of power graphs, both in the \LOCAL model \cite{BEPS16,ghaffari16_MIS} and in the \CONGEST model \cite{Gha19}. A second example is the current state-of-the-art deterministic algorithms for computing MIS \cite{FGG22}. A third example where such ruling sets appear as subroutines is the algorithm to compute spanners in \cite{EM19}. In \Cref{sec:reasons}, we provide further examples and further motivate (symmetry breaking) problems on power graphs. The main model of our work is the \CONGEST model of distributed computing. 

\vspace{-1mm}
\paragraph{The challenges of working with power graphs in the \CONGEST model.}  In the \LOCAL and the \CONGEST model, a communication network is abstracted as an $n$-node graph $G$ with nodes representing computing entities and edges communication links \cite{linial92,peleg2000distributed}. Nodes are equipped with $O(\log n)$-bit IDs. In order to solve some problem in the network, the nodes  communicate with each other in synchronous rounds. In each round, nodes are allowed to perform arbitrary local computations and send messages  to each of their neighbors in $G$. In the \LOCAL model messages can be of \emph{unbounded} size, while in the \CONGEST model message sizes are restricted to $O(\log n)$ bits.  The \emph{time complexity} of an algorithm is the number of rounds until each node has computed its own part of the solution, e.g., whether it is contained in an independent set or not.  Classically in the literature, an algorithm for a problem---think of the MIS problem--- assumes that the communication network is also the problem instance. In contrast, in our work the power graph $G^k$ serves as the problem instance while $G$ remains the communication network. In the \LOCAL model, the setting does not yield any major difficulties as an algorithm designed for $G$ (e.g., to compute an MIS of $G$), can be run on $G^k$ with a multiplicative overhead of $k$ rounds (to compute an MIS of $G^k$). However, such a statement is not true in the \CONGEST model. In fact, in the \CONGEST model, a node does not even know its degree in the problem instance $G^k$ and if many vertices want to send different messages to their neighbors in $G^k$, congestion appears in the communication network and the messages cannot be delivered efficiently. 
This seemingly small difference has huge effects and makes it much more challenging to construct algorithms.

\subsection{Our contributions.} 
An \emph{$r$-ruling set} $S$ of a graph $G$ is a set of non-adjacent nodes such that for every node $v$ of $G$ there is a node in $S$ with distance at most $r$ from $v$. Hence, in an $r$-ruling set of $G^k$ any two nodes have distance at least $k+1$ and the domination distance is $r\cdot k$.  In the literature this often appears as an $(\alpha,\beta)$-ruling set where $\alpha$ specifies the minimum distance between nodes and $\beta$ the domination distance. As ruling sets relax the domination guarantee of maximal independent sets, they are usually easier to compute. 
Often, ruling sets are sufficiently powerful to replace MIS computations as subroutines in algorithmic applications. 
Our main  contribution is the first efficient deterministic algorithm to compute $k$-ruling sets for $G^k$. Throughout the paper $\widetilde{O}(x)$ omits factors that are logarithmic in $x$ and $\Delta$ refers to the maximum degree of the graph $G$, even when we are solving a problem on the power graph $G^k$.

\begin{restatable}[$k$-ruling set of $G^k$]{theorem}{corDetRulingFinal}
    \label{corollary:kksquaredRulingSet}
    Let $k \ge 1$ be an integer (potentially a function of $n$). There is a deterministic distributed algorithm that computes a $k$-ruling set of $G^k$ in polylogarithmic time in the \CONGEST model. More detailed, the round complexity is $\widetilde{O}(k^2 \cdot \log^4 n \cdot \log \Delta)$ rounds. 
\end{restatable}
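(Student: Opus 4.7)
The plan is to combine a standard degree-reduction framework with a deterministic sparsification subroutine that serves as the main engine, as hinted by the abstract. I maintain a set $V_{\mathrm{act}}$ of still-undominated, still-undecided nodes together with a partial ruling set $S$, and iterate over $O(\log \Delta)$ phases, halving the maximum $G^k$-degree among active nodes in each phase. A single phase proceeds in two steps: first, invoke the deterministic sparsification to obtain a sparse subgraph $H\subseteq G^k$ in which every $v\in V_{\mathrm{act}}$ retains only $\polylog(n)$ neighbors while $H$ still certifies that high-$G^k$-degree vertices can be dominated by a small independent set; second, compute a partial ruling set on $H$ via a classical deterministic algorithm (e.g., network decomposition), add it to $S$, and retire the $G^k$-dominated vertices from $V_{\mathrm{act}}$. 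Vertices whose $G^k$-degree drops below the current threshold stay active but are deferred to subsequent phases.

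For the accounting, each of the $O(\log \Delta)$ phases consists of one call to the sparsifier, which I budget at $\widetilde{O}(k\cdot \log^3 n)$ rounds on $G$ (a $G^k$-edge is a $k$-hop path, and the sparsifier does $\polylog(n)$ passes over it), plus one run of a polylogarithmic-time deterministic ruling-set algorithm on $H$. Because $H$ has polylogarithmic degree, the latter fits inside the \CONGEST bandwidth and also costs $\widetilde{O}(k\cdot \log^3 n)$ rounds on $G$. Summing over $O(\log \Delta)$ phases and absorbing lower-order factors yields the advertised $\widetilde{O}(k^2 \cdot \log^4 n \cdot \log \Delta)$ bound. The ruling guarantee follows by induction: every vertex retired in a phase is dominated in $G^k$ by a newly added $S$-vertex, and the independence of $S$ is preserved because a freshly chosen $S$-vertex is computed from an MIS-like routine on $H$ that exchanges forbidden-neighbor information through $k$-hop BFS on $G$.

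The main obstacle is the deterministic sparsification itself, and specifically making it respect the \CONGEST bandwidth. A node $v$ has potentially $\Theta(\Delta^k)$ neighbors in $G^k$ and cannot enumerate them with $O(\log n)$-bit messages without incurring $\Omega(\Delta^{k-1})$ congestion on some $G$-edge. The sparsifier must therefore operate through aggregates: each vertex learns only a short sketch describing the sampled structure in its $k$-neighborhood, not the identities of selected $G^k$-neighbors. A natural starting point is a randomized sparsifier that samples each active vertex independently with probability tuned to the current degree bound; this is then derandomized via the method of conditional expectations over a small family of hash seeds, with the conditional expectations computed by $k$-hop aggregations on $G$. Proving that the resulting $H$ is simultaneously (i) constructible within this bandwidth budget, (ii) usable by an off-the-shelf deterministic ruling-set routine, and (iii) strong enough to force a constant-factor drop in the maximum active $G^k$-degree every phase is where the technical heart of the argument lies.
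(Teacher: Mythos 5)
There is a genuine gap, and it sits exactly where you defer it: the deterministic sparsification of $G^k$ itself. Your plan is to sparsify $G^k$ \emph{directly}, derandomizing a sampling step by the method of conditional expectations while letting each node learn only a ``short sketch'' of its $k$-hop neighborhood rather than the identities of its active $G^k$-neighbors. But the bad events being controlled (``more than $O(\log n)$ of my active $G^k$-neighbors are sampled'', ``none of my active $G^k$-neighbors is sampled'') are functions of the hash values of the specific IDs in the active $G^k$-neighborhood; to evaluate the conditional expectation of these events under a partially fixed seed, a node must know that ID set (or you must supply a pessimistic estimator that can be computed from aggregates, which you do not). When the active set is still dense, a node may have $\Theta(\Delta^k)$ active $G^k$-neighbors and cannot learn them in \CONGEST. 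The paper's resolution is precisely to \emph{not} sparsify $G^k$ in one shot: it sparsifies gradually through the powers $G^1, G^2, \dots, G^k$, producing $Q_1 \supseteq Q_2 \supseteq \dots$, where the $O(\log n)$ bound on distance-$(s-1)$ $Q_{s-1}$-degrees makes it cheap (via the communication tools) for every node to learn the exact IDs of its active $G^s$-neighbors before derandomizing stage by stage at level $s$; the price is an additive $2s$ loss in domination per level, summing to $\Theta(k^2)$. That loss is the entire reason the theorem proves a $(k+1,k^2)$-ruling set rather than an MIS of $G^k$. (A further omission of the same kind: the conditional-expectation aggregation is global, so you either pay $\diam(G)$ per seed bit or must run the whole procedure inside clusters of a network decomposition of $G^{2k+1}$-type separation, as the paper does; your $\widetilde{O}(k\log^3 n)$ per-call budget assumes this away.)

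Your outer loop also signals that something is off. You retire a vertex only when it is dominated \emph{in $G^k$} by a new $S$-vertex, and you terminate after $O(\log\Delta)$ degree-halving phases; if this worked, every vertex would end up within distance $k$ of $S$, i.e., you would have a deterministic polylogarithmic-round MIS of $G^k$ in \CONGEST --- strictly stronger than the stated theorem and than anything the paper's deterministic machinery yields (its MIS of $G^k$ is randomized). Moreover, the claim that the maximum active $G^k$-degree halves in each phase is never justified: computing a partial ruling set on a sparse $H$ and removing the dominated vertices gives no bound on the residual $G^k$-degrees of the survivors. In the paper, the degree reduction happens \emph{inside} the sparsifier (high-active-degree nodes are guaranteed a sampled node within distance $2$ in $G^s$ and their neighborhoods are deactivated, at the cost of the growing domination radius), and afterwards a \emph{single} black-box MIS computation on $G^k[Q_{k-1}]$, simulated with $O(k+\log^2 n)$ overhead per round, finishes the job; no outer $O(\log\Delta)$ loop of repeated sparsify-then-rule phases is needed, and none of the independence/domination bookkeeping has to be threaded through such a loop.
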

For a constant $k>1$, \Cref{corollary:kksquaredRulingSet} improves \emph{exponentially} on the previous state of the art that required $O(n^{1/k})$ rounds \cite{EM19}.  This previous algorithm is an extension to $G^k$ of a  classic deterministic algorithm for computing $O(\log_B n)$-ruling sets (of $G$) in $O(B\cdot \log_B n)$ rounds \cite{awerbuch1989network,SEW13,henzinger2016deterministic,KMW18,EM19}, where $B$ is a parameter that can trade domination for runtime.

For $k=1$, \Cref{corollary:kksquaredRulingSet} computes a $(2,1)$ ruling set (aka an MIS), for $k=2$, it computes a $(3,4)$ ruling set and for $k=3$ it is a $(4,9)$ ruling set. 
 Our main technical contribution in order to obtain \Cref{corollary:kksquaredRulingSet}  is a novel sparsification procedure. On a very high level, we turn $G^k$ into a much sparser representation $\bar{G}$ such that an MIS of $\bar{G}$ is a good ruling set on the original $G^k$. The benefit is that we can communicate more efficiently on $\bar{G}$. 

 \smallskip

\noindent \textbf{Randomized symmetry breaking  in power graphs.}
Our second result is an algorithm for MIS of $G^k$ with (essentially) the same runtime as the state of the art for computing an MIS of $G$.

\begin{restatable}{theorem}{thmMISPower}
    \label{thm:MISPower}
    There is a randomized distributed algorithm that computes a maximal independent set of $G^k$ in $\widetilde{O}(k^2\log \Delta \cdot \log\log n + k^4\log^5 \log n)$ rounds of the \congest model, with high probability.
\end{restatable}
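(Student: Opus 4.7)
The plan is to invoke the classical shattering framework for MIS, adapted to the power graph $G^k$. The runtime splits naturally into the two summands of the theorem: a \emph{pre-shattering} phase contributing $\widetilde{O}(k^2 \log \Delta \cdot \log \log n)$ and a \emph{post-shattering} phase on the residual small components contributing $\widetilde{O}(k^4 \log^5 \log n)$. Crucially, I will \emph{never} try to materialize $G^k$ explicitly, and will instead simulate each round of a Ghaffari-style MIS process on $G^k$ using $\widetilde{O}(k^2)$ rounds of \CONGEST on $G$.

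\textbf{Pre-shattering.} I simulate Ghaffari's randomized MIS process (or the simpler ``revisited'' variant the paper announces) on the virtual graph $G^k$. In one round each vertex $v$ updates a single desire level $p_v$, broadcasts $p_v$ (and a derived mark bit) to all its $G^k$-neighbors, and needs back only an aggregate (sum / maximum) of the incoming values. Hence a single virtual round reduces to two symmetric primitives: (a) every vertex broadcasts an $O(\log n)$-bit message to its $k$-hop neighborhood in $G$, and (b) every vertex aggregates a commutative-associative function of the $O(\log n)$-bit values held by that same neighborhood. Both primitives can be implemented in $\widetilde{O}(k^2)$ rounds of \CONGEST on $G$ by building a $k$-hop BFS tree and pipelining a single short message up and down each level (the $k$ rounds of reach are multiplied by $k$ rounds of pipelining across the levels). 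Running the virtual algorithm for $O(\log \Delta \cdot \log \log n)$ rounds and invoking the standard shattering lemma guarantees that the set $U$ of still-undecided vertices induces connected components of $G^k[U]$ of size $\poly(\log n)$, with high probability.

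\textbf{Post-shattering.} Each remaining component $C$ has $N := \poly(\log n)$ vertices and diameter $\poly(\log n)$ in $G^k$, hence diameter $k \cdot \poly(\log n)$ in $G$. On every such component I compute an MIS of $G^k[C]$ deterministically using the \detsparsification-based machinery developed for \Cref{corollary:kksquaredRulingSet}, tuned here to ``MIS on a $\poly(\log n)$-vertex instance''. Its round complexity on an $N$-vertex instance is $\widetilde{O}(k^2 \polylog N \cdot \log \Delta)$; plugging $N = \poly(\log n)$ turns every $\log n$ factor into a $\log \log n$ factor, and the outer $k^2$ simulation overhead composes with the $k^2$ already inside the deterministic algorithm to give $k^4$. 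Since the components are vertex-disjoint in $G^k$, their $(k/2)$-neighborhoods in $G$ are disjoint as well (after minor shrinking if needed), so all components can be processed in parallel without creating congestion between them, and the phase costs $\widetilde{O}(k^4 \log^5 \log n)$.

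\textbf{Main obstacle.} The chief technical hurdle is the $\widetilde{O}(k^2)$-round \CONGEST simulation of a single virtual round of the $G^k$-algorithm. Any attempt that transmits one explicit message per $G^k$-edge costs $\Theta(\Delta^k)$ because of the congestion that single edges of $G$ must absorb. The simulation only succeeds because Ghaffari's process is \emph{symmetric}: every vertex broadcasts the \emph{same} short message to all its $G^k$-neighbors and only needs a commutative-associative aggregate back, turning the round into a $k$-hop single-message broadcast plus a $k$-hop aggregation. A secondary subtlety lies in the parallel post-shattering step: the deterministic algorithm must be simultaneously executed on many components, and one has to verify that the $k^2$ simulation primitive still composes correctly when many invocations share underlying edges of $G$; this is handled by the same sparsified representation used in the paper's deterministic main theorem.
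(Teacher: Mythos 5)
Your proposal has two genuine gaps, one in each phase. In the pre-shattering phase you assume that a round of a Ghaffari-style MIS process on $G^k$ reduces to ``broadcast one short message to the $k$-hop neighborhood plus a commutative--associative aggregation over that neighborhood,'' implementable in $\widetilde{O}(k^2)$ rounds via BFS trees and pipelining. This is exactly the step that does not work in \congest: the aggregate (e.g.\ the sum of the desire levels of a node's $G^k$-neighbors, or even the node's degree in $G^k$) must be taken over the \emph{distinct} members of each node's own $k$-hop ball, there are $n$ such overlapping balls, and avoiding double counting across the many paths by which a value reaches a node requires learning the contributors' identities, which can cost $\Omega(\Delta)$ per node. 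The paper explicitly states that a black-box simulation of Ghaffari's algorithm on $G^k$ is prohibitively expensive and instead runs the \emph{BeepingMIS} algorithm, whose only per-step information need is ``does some \emph{other} node within distance $k$ beep?''; this OR-type question is answered by tagging beeps with IDs and having every node forward at most two beeps with distinct identifiers (\Cref{lem:simBeeping}), giving an $O(k)$-round simulation per step and $O(k^2\log\Delta)$ overall. Your proposal contains no mechanism for the per-node neighborhood aggregation it relies on.

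In the post-shattering phase you assert that the residual components of $G^k$ have $\poly(\log n)$ vertices and diameter, and you finish them with the deterministic sparsification machinery, substituting $\log n \to \log\log n$. The shattering guarantee (\Cref{lem:shatteringBeeping}) only bounds component sizes by $O(\log_\Delta n\cdot\Delta^{4k})$, which is enormous for large $\Delta$; what is small is merely any $5$-independent, suitably connected subset of a component (at most $t=\log_\Delta n$ nodes). Consequently your deterministic finish is based on a false premise and would in any case retain $\log\Delta$- and $\log n$-type factors (the FGG22 MIS cost and the component diameter in $G$) rather than yielding $\widetilde{O}(k^4\log^5\log n)$. The paper instead computes a $(5k+1,O(k^2\log\log n))$-ruling set of the undecided nodes, partitions them into well-connected balls (\Cref{lem:GhaRuling}), builds a distance-$k$ ball graph whose components have at most $t$ balls by \Cref{lem:54rulingConnected} and \Cref{lem:shatteringBeeping}~(P1), computes a network decomposition of that ball graph, and then finishes each cluster by running $O(\log_N n)$ \emph{parallel} BeepingMIS executions with fresh IDs from $[N]$, $N=O(\Delta^{4k}\log n)$, so that the executions fit into the bandwidth and at least one succeeds with high probability in $n$. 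None of these ingredients (ruling set, ball graph, network decomposition of the ball graph, probability amplification via parallel low-bandwidth executions) is present in your outline, and they are what produce the $k^4\log^5\log n$ term.
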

The best result that can be achieved with previous work is an $O(k\log n)$-round algorithm based on Luby's algorithm for computing an MIS (see \Cref{sec:lubyGk} for details) \cite{Luby:1986ub,alon86}. 
\Cref{thm:MISPower} compares favorably to the state of art for computing an MIS of $G$ in $O(\log\Delta\log\log n + \poly\log\log n)$ rounds in the \CONGEST model \cite{Gha19}. 
The randomized complexity of MIS on $G^k$ in \LOCAL is $O(k^2\log\Delta +k\poly\log \log n)$, simply by running the best known algorithm for MIS on $G$ on the power graph \cite{ghaffari16_MIS}. 
By combining \Cref{thm:MISPower} with the sparsification methods of \cite{KP12} and \cite{BKP14}  we obtain the following corollary. 
\begin{restatable}{corollary}{thmkkBetaRulingRand}
    \label{cor:kkBetaRulingRand}
    There is a randomized distributed algorithm that computes a $\beta$-ruling set of $G^k$ in $\widetilde{O}(\beta \cdot k^{1 + 1/(\beta-1)}  (\log \Delta)^{1/(\beta-1)} + \beta \cdot k \log \log n +  k^{4}\log^5 \log n)$ rounds. 
\end{restatable}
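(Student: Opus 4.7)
The plan is to compose \Cref{thm:MISPower} with the iterative random sparsification framework of Kothapalli--Pemmaraju \cite{KP12} and Bisht--Kothapalli--Pemmaraju \cite{BKP14}. The generic recipe of those works reduces $\beta$-ruling set computation to an MIS computation on a sparsified subgraph: iterate $\beta - 1$ sampling phases in which each surviving vertex is kept with some probability, chosen so that the maximum degree among the survivors shrinks and every non-survivor still has a survivor in its one-hop neighborhood. After the last phase we compute an MIS on the induced subgraph on the final survivor set; that MIS, lifted through the chain of survivor-neighbors built up across phases, is a $\beta$-ruling set of the starting graph.

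Applied to the problem instance $G^k$, the maximum degree to be sparsified is at most $\Delta^k$. A single sampling phase amounts to $O(1)$ rounds of local communication on $G^k$, which we simulate by $O(k)$ \CONGEST rounds on $G$. Tuning the sampling probabilities so that after $\beta - 1$ phases the surviving maximum degree is roughly $(k \log \Delta)^{1/(\beta-1)}$ gives the $\widetilde{O}(\beta \cdot k^{1 + 1/(\beta-1)}(\log \Delta)^{1/(\beta-1)})$ term, while the $\beta \cdot k \log \log n$ term captures the lower-order polylogarithmic overhead of the standard KP/BKP high-probability analysis. Once the survivor subgraph has polylogarithmic maximum degree, we invoke \Cref{thm:MISPower} on it; the $\log \Delta$ factor there collapses into $\log \log n$-type factors, so the final MIS step contributes only the additive $k^{4} \log^{5} \log n$ term, yielding the claimed overall bound.

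The main obstacle is that the KP and BKP frameworks were originally designed for the setting in which the problem instance and the communication network coincide, whereas here the problem instance is $G^k$ while communication takes place over $G$. We will need to verify two things: (i) each sampling phase is implementable in $O(k)$ \CONGEST rounds over $G$, which is feasible because only a single survival bit per vertex (together with, e.g., a crude surviving-degree estimate) has to be propagated within $k$-hop neighborhoods, and (ii) \Cref{thm:MISPower} continues to apply when the MIS is to be computed on the induced subgraph of $G^k$ restricted to the surviving vertex set rather than on all of $G^k$. Item (ii) is a routine modification: non-survivors can simply be frozen as inactive at the start of the MIS algorithm, leaving its analysis intact with the smaller effective maximum degree $(k\log \Delta)^{1/(\beta-1)}$ in place of $\Delta$.
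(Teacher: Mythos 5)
Your proposal matches the paper's proof essentially step for step: iterate the \cite{KP12}/\cite{BKP14} sparsification on $G^k$ with balanced parameters (simulated over $G$ with an $O(k)$-factor slowdown via the beeping simulation), then compute an MIS of the induced subgraph $G^k[Q_{\beta-1}]$ using \Cref{thm:MISPower}, which the paper justifies exactly as in your item (ii) via \Cref{cor:MISPowerInduced}. One small correction: in the paper the degree of the final survivor graph is $O(f_{\beta-1}\log n)=O(\log n)$ rather than $(k\log\Delta)^{1/(\beta-1)}$ (the latter quantity is the per-iteration round cost), but this does not affect the conclusion that the MIS step costs only $\widetilde{O}(k^4\log^5\log n)$.
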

The only previous randomized algorithm for computing a ruling set of $G^k$ is a  $O(k\log\log n)$-ruling set algorithm that works in $O(k^2\log\log n)$ rounds \cite{Gha19}. The technique can inherently not compute $O(1)$-ruling sets of $G^k$.  

\newpage

\noindent \textbf{Simplifying shattering for MIS and Ruling Sets on $G$ (in \LOCAL and \CONGEST).}  Ruling sets of power graphs are essential in the the current state-of-the-art randomized algorithms to compute an MIS in \LOCAL \cite{BEPS16,ghaffari16_MIS} and the \CONGEST model \cite{Gha19}. In our last contribution, we revisit\footnote{We revisited the details of these algorithms as they rely on the shattering framework that we also use to obtain  \Cref{thm:MISPower}. However, the setting in \Cref{thm:MISPower} is more involved. }  the \emph{current} state-of-the-art algorithms for computing an MIS in $G$ in \LOCAL \cite{BEPS16,ghaffari16_MIS} and  \CONGEST  \cite{Gha19}.  These results are based on the  by-now-standard \emph{shattering technique} that uses a random process to solve the respective problem on most of the graph such that the remaining unsolved parts only consists of  (many) \emph{small connected components}. 
The components are small in the sense that a certain power graph ruling set of a single component has at most logarithmically many nodes. This fact is exploited to solve the remaining problem on these small components efficiently. 
The challenge here is that to algorithmically exploit the fact that these ruling sets are of a small size, one also has to  compute them. Intuitively, one wants to just run a classic $(5,O(\log\log n))$-ruling set algorithm on each of the remaining connected components in parallel. The crux  is that the distance between ruling set nodes needs to be measured in $G$ (for the ruling set to be small) and not just within the small component. But a node cannot tell efficiently whether a node in distance $5$ in $G$ is in the same small component as itself or not. Hence, the nodes cannot easily determine whether they can both be contained in the ruling set or not and one cannot readily treat different small components independently. 

Thus, the arXiv version  of \cite{BEPS16}  presents an involved two-phase shattering procedure with an involved analysis. It turns out that the presented analysis of bounding the size of the ruling set has a small (but crucial) mistake (see Sections~\ref{ssec:nutshell} and \ref{sec:revisedShattering}  for details). 
The journal version of the respective work presents a fix through an even more involved analysis \cite{BEPS16}. 
The later works \cite{ghaffari16_MIS,Gha19} build upon the arXiv version of \cite{BEPS16}. While we believe that both works can be adapted to build upon the arguments in the journal version of \cite{BEPS16}, this cannot be done in a black-box manner. Both works use a different ruling set algorithm than \cite{BEPS16} and the internals of the ruling set algorithm are crucial for the arguments in \cite{BEPS16}.

As our last contribution, we revisit the shattering framework and present two different algorithmic and analytical solutions for computing an MIS that work in the \LOCAL and the \CONGEST model. 
While these do not improve upon the complexity of \cite{BEPS16,ghaffari16_MIS,Gha19}, our approach is simpler and provides a fix for results that build upon the arXiv version of \cite{BEPS16}.

\begin{restatable}{theorem}{thmMISrepaired}
\label{thm:MISrepaired}
There are randomized \LOCAL and \CONGEST algorithms that w.h.p.~compute a maximal independent set  on any $n$-node graph with maximum degree $\Delta$ and that run in $O(\log\Delta)+\poly\log\log n$ rounds and in $O(\log\Delta\log\log n)+\poly\log\log n$ rounds, respectively. 
\end{restatable}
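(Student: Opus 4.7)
The plan is to follow the two-phase shattering framework but replace the intricate intermediate ruling set step with two alternative post-shattering procedures that sidestep the subtlety identified in the journal vs.\ arXiv discussion above.

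In the pre-shattering phase, I would run Ghaffari's randomized MIS algorithm for $\Theta(\log\Delta)$ rounds in \LOCAL, and its \CONGEST implementation for $\Theta(\log\Delta\cdot \log\log n)$ rounds. By the standard shattering analysis, with high probability the set $U$ of still-undominated vertices induces a subgraph whose connected components each have at most $\poly\log n$ vertices and $\poly\log n$ diameter. Since these components are connected components of $G[U]$, no edges of $G$ with both endpoints in $U$ cross between them; hence any independent set computed component-wise, taken together with the MIS vertices already selected during phase 1, yields a valid MIS of $G$. The goal of the post-shattering phase is then to compute an MIS on each component in $\poly\log\log n$ rounds.

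For the first approach, I would simply invoke a state-of-the-art deterministic MIS algorithm (e.g., the Rozhoň--Ghaffari network-decomposition based MIS, both in \LOCAL and in \CONGEST) inside each small component in parallel. Because each component has $N=\poly\log n$ vertices, the deterministic runtime $\poly\log N$ becomes $\poly\log\log n$. The key observation that makes this clean, and that avoids the BEPS complication, is that we no longer need to compute a ruling set whose independence is measured in $G$: since distinct components of $G[U]$ are not adjacent in $G$, independence inside each component is independence in $G$, and the algorithms may be run completely obliviously in parallel. The second approach I would present uses a ruling-set-based pipeline as in BEPS/Ghaffari, but where the ruling set is computed in the \emph{induced} subgraph $G[U]$ rather than in a power graph thereof, so the ``is my distance-$\beta$ neighbor in the same component'' question never arises; a classical deterministic $O(\log_B N)$-ruling set algorithm with suitable parameters then feeds a deterministic MIS solver on the residual instance, again in $\poly\log\log n$ rounds.

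The main obstacle is the \CONGEST implementation of the post-shattering phase: even though each component is small, the deterministic subroutine needs to communicate inside a component whose diameter can be $\Theta(\poly\log n)$ while respecting $O(\log n)$-bit messages, and different components of $G[U]$ may lie within a short $G$-distance of one another, so their executions share communication links. To handle this, I would argue that the shattering analysis also bounds the number of components whose BFS balls (up to the required radius) can simultaneously pass through any single edge, so that standard scheduling (random delays in the spirit of Ghaffari--Kuhn--Maus, or a direct round-robin using the logarithmic bound on overlap) produces only a $\poly\log\log n$ congestion overhead. A secondary issue is verifying the shattering bound on component size for the \CONGEST variant of phase 1, which requires an argument that the local dependence of the random choices in Ghaffari's algorithm is preserved under its \CONGEST implementation; this will be a direct adaptation of the arguments the paper has already deployed for Theorem~\ref{thm:MISPower}.
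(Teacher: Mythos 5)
There is a genuine gap, and it sits at the very first claim of your post-shattering plan. The shattering guarantee does \emph{not} give components of $\poly\log n$ size: Lemma~\ref{lem:shatteringV2}~(P2) only bounds each connected component of undecided nodes by $O(t\cdot\Delta^4)$ with $t=\log_\Delta n$, i.e.\ by $\poly(\Delta)\cdot\log n$, which is huge when $\Delta$ is, say, $n^{1/10}$. Consequently your first approach collapses: running a deterministic $\poly\log N$-round MIS algorithm on each component gives $\poly(\log\Delta+\log\log n)$ rounds, e.g.\ $\log^{7}\Delta$, which is not $O(\log\Delta)+\poly\log\log n$. The components are ``small'' only in the refined sense of (P1): any $5$-independent (in $G$), suitably connected subset of a component has fewer than $t$ nodes. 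The entire ruling-set/ball-graph machinery of the paper exists precisely to convert that property into an algorithmic handle — one builds a ball graph on at most $t$ rulers, computes a network decomposition of this $t$-node graph in $\poly\log t=\poly\log\log n$ rounds, and finishes cluster by cluster. You cannot bypass it by treating the components as literally polylogarithmic objects.

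Your second approach runs into exactly the pitfall the statement is meant to repair. Computing a $(5,\beta)$-ruling set directly in the induced subgraph $G[U]$ gives $5$-independence only with respect to distances in $G[U]$, which is \emph{weaker} than $5$-independence in $G$ (two rulers can be at $G$-distance $2$ through decided nodes), so Lemma~\ref{lem:5indepUndecided} does not apply to it and no $O(t)$ bound on the number of rulers follows; this is the flaw of the arXiv version of BEPS discussed in Section~\ref{sec:revisedShattering}. The paper's two fixes both require an ingredient you omit: Approach~1 runs a \emph{second} pre-shattering phase inside each component and then bounds the ruling set of the resulting tiny components via a union bound over only $\binom{t\Delta^4}{t}\le\Delta^{5t}$ candidate sets (Lemma~\ref{lem:secondphaseshattering}) — a bound that is only available because the first phase already shrank the component to $t\Delta^4$ nodes; Approach~2 keeps one pre-shattering phase but exploits the internals of the ruling set algorithm of \cite{Gha19} to certify that the balls around rulers are well-connected in $G$ (Claim~\ref{lem:GhaRuling}), so that (P1) applies to a greedily completed superset of the rulers in each ball-graph component. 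Without one of these two mechanisms (or an equivalent one), the size of your ruling set, and hence of the ball graph, is unbounded, and the $\poly\log\log n$ post-shattering budget cannot be met. The congestion/scheduling concerns you raise for \CONGEST are real but secondary; the missing piece is the probabilistic argument that makes the ball graph have $O(\log_\Delta n)$ vertices per component.
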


Even though \Cref{thm:MISrepaired} deals with the classic setting of computing an MIS of the graph $G$ (and not its power graph), the flaw in the current argument is closely related to properly dealing with  ruling sets  in power graphs. Some other works in the literature build upon the ideas of the arXiv version of \cite{BEPS16}, e.g.,  algorithms for $\Delta$-coloring \cite{GHKM18} or the \lovasz Local Lemma \cite{FG17}, and thus undergo a similar issue. However, we note that the aforementioned power graph ruling sets are only relevant when $\Delta\geq\poly\log n$ as otherwise one can use simpler approaches to exploit the smallness of the components. 
But for $\Delta\geq\poly\log n$ it is not necessary to correcting the flaw in the two papers. 
For the $\Delta$-coloring problem a very recent paper provided a genuinely different algorithm to solve the problem for $\Delta\geq \poly\log n$, making a fix unnecessary \cite{FHM22}. For the \lovasz Local Lemma problem, one can use the algorithm by Chung, Pettie and Su  to solve the problem in $O(\log n)$ rounds \cite{CPS17}, which is faster than the $O(\Delta^2)+\poly\log\log n$ rounds of \cite{FG17} when $\Delta\geq \poly\log n$.  Last but not least, let us remark that the state-of-the-art randomized $(\Delta+1)$-coloring algorithms are not impacted by the flaw as they only rely on shattering when $\Delta\leq \poly\log n$ \cite{CLP20,HN21}. 
For more details on the background of \Cref{thm:MISrepaired} and our solutions, see \Cref{ssec:nutshell} and \Cref{sec:revisedShattering}. 
Note that \Cref{thm:MISrepaired} is also central to countless results in the literature as the algorithm belongs to the most used subroutines in the area. Also \Cref{thm:MISPower} and \Cref{cor:kkBetaRulingRand} rely on extensions of the algorithmic ideas used for \Cref{thm:MISrepaired}. 

\subsection{Why should we care about problems on power graphs?}
\label{sec:reasons}
 We have already mentioned that ruling sets of power graphs are important subroutines.  In this section, we provide various further reasons for studying \CONGEST algorithms on power graphs.

\noindent \textbf{Derandomization and learning distant information.}
The main challenge when working with power graphs is that  even though we aim at solving problems on $G^k$, the communication network is just $G$ itself and a node cannot immediately communicate with its neighbors in $G^k$. For an example, consider the simple task of learning some small individual piece information from each of your neighbors. In $G$, this problem is trivial and can clearly be achieved in a single \CONGEST round. 
However, as soon as we turn to $G^2$, we suffer a huge overhead in the number of rounds. Solving this task requires $\Omega(\Delta)$ rounds in the worst case, where $\Delta$ is the maximum degree of the graph which may be very large. 

But learning large amounts of information that is not stored at immediate neighbors of $G$ is an important ingredient of recent efficient \LOCAL algorithms. 
Prime examples of this behavior are the general derandomization results \cite{SLOCAL17,GHK17,RG20} in the \LOCAL model. Note that besides the aspect of learning large amounts of information, another crucial ingredient to these results is the computation of so  called network decompositions of the power graph $G^k$, usually for a non-constant $k$. 
It is a major open problem of the area to determine for which type of problems  such derandomizations can be obtained in the \CONGEST model.

On the positive side, for selected problems on $G$, there are efficient \CONGEST algorithms  that are all based on limiting the amount of information that one has to aggregate for an efficient derandomization of a (simple) randomized procedure.  Examples are given by algorithms for $(\Delta+1)$-coloring \cite{BKM19,GK20}, MIS \cite{censor2017derandomizing,FGG22}, or minimum dominating set approximations \cite{DKM19,FGG22}. 
Our sparsification results are also based on derandomization (in a more extreme setting) and hence add to the class of problems that can be efficiently derandomized in the \CONGEST model. 
In \Cref{ssec:nutshell} we detail on how we circumvent the necessity to learn large amounts of information used by the sparsification process in the \LOCAL model.

\noindent \textbf{Power graph and virtual graph problems  as subroutines.} Another prevalent ingredient to many recent results is that they solve intermediate problems on virtual graphs. 
For example, in the state of the art deterministic algorithm for MIS in \LOCAL and \CONGEST, we must simulate an algorithm (for some intermediate problem) on $G^2$ \cite{FGG22}.
Hence, despite the fact that we consider the setting where the communication network equals the input graph, intermediate steps require solving problems on power graphs. 
In \local, the unbounded message sizes allow us to communicate in $G^2$ with a constant overhead and hence, algorithms for the intermediate steps in the virtual graph $G^2$ are straight-forward to implement with a constant overhead in the runtime.
In contrast, handling these intermediate steps is much more involved in the \congest model.
In general, we  believe that the study of power graph problems contributes to the general important theme of detaching the input graph from the communication network.  Problems on power graphs serve as a clean abstraction to develop tools that can be used whenever the problem instance and the communication network are not exactly the same.

\noindent \textbf{Robustness of algorithms and techniques.} 
More broadly, studying problems on power graphs also serves as a clean abstraction to develop  algorithmic techniques that are \emph{robust} in the sense that they can work in settings with stronger communication restrictions. In the long run we also expect this kind of research to lead to algorithms that are more model independent. 
For example, many recent \mpc algorithms are fast implementations of  communication efficient \LOCAL or \CONGEST algorithms~\cite{GU19,CDP19,Chang2019b,Czumaj2021,Balliu2022}. 
An extreme case of communication efficiency  are algorithms in the beeping model which is similar to the \CONGEST model with $1$-bit messages. 
An excellent example is the pre-shattering phase of Ghaffari's MIS algorithm that when computing an MIS for $G$ works in the beeping model~\cite{Ghaffari2017}. 
Due to its robust design, it has been used (with slight adaptations) to obtain the state-of-the-art in other settings, such as \mpc and LCA~\cite{GU19,Ghaffari2022}.
\Cref{thm:MISPower} is also based on an extension of this result to $G^k$ that works in a stronger version of the beeping model, allowing $O(\log\log n)$-sized messages. See \Cref{sec:randomizedPowerGraphs}  for details. 

Another aspect that makes algorithms for power graphs more robust is that they have to operate in the setting where nodes do not know their degree (in $G^k$).  Even though there are ways to remove the necessity of knowing your degree from an algorithm \cite{KSV13}, these cannot be applied in a black-box manner in all settings.

\subsection{Further related work}

\label{app:tableRelatedWork}
\begin{table}[p]
    \centering
    \renewcommand{\arraystretch}{1.4}
    \hspace*{-1.4cm} 
    \begin{tabular}{|l|l|l|l|l|l|} \hline
    \textbf{Paper}      & \textbf{Result} & & \textbf{Model} & \textbf{Running time} \\ \hline\hline
    \cite{RG20} & MIS & det. & \LOCAL & $O(\log^7 n)$  \\ \hline
    \cite{censorhillel2016derandomizing} + \cite{RG20} & MIS & det. & \congest & $O(\log^8 n)$ \\ \hline
    \cite{FGG22} & MIS & det. & \congest & $\widetilde{O}(\log n \cdot \log^2 \Delta)$  \\ \hline
    \cite{awerbuch1989network}, \cite{henzinger2016deterministic} & $(k+1, k\cdot \log n)$ & det. & \congest & $O(k \cdot \log n)$ \\ \hline
    \cite{SEW13}, \cite{KMW18}$^*$  & $(k+1,k \lceil \log_B n \rceil)$ & det. & \congest & $O(k \cdot B \cdot \log_B n)$ \\ \hline
    \cite{SEW13}, \cite{KMW18} & $(k+1,kc)$ & det. & \congest & $O(k \cdot c \cdot n^{1/c})$ \\ \hline
    \rowcolor[HTML]{EFEFEF} 
    \textbf{New} & $(k+1, k^2)$ & det. & \congest & $\widetilde{O}(k^2  \log^4 n \cdot \log \Delta)$  \\
    \hline\hline
    \cite{Luby:1986ub} & MIS of $G^k$ & rand. & \congest & $O(k \cdot \log n)$  \\ \hline
    \cite{BEPS16} + \cite{RG20} & MIS & rand. & \local & $O(\log^2 \Delta + \poly \log \log n)$\\ \hline
    \cite{ghaffari16_MIS} + \cite{RG20}&MIS& rand. & \local & $O(\log \Delta  + \poly\log\log n)$  \\ \hline
    \cite{Gha19} + \cite{GGR20} & MIS & rand. & \congest & $O(\log \Delta \cdot \log \log n + \poly\log\log n)$  \\ \hline
    \rowcolor[HTML]{EFEFEF} 
    \textbf{New} & MIS of $G^k$ & rand. & \congest & $\widetilde{O}(k^2\log \Delta \cdot \log\log n + k^4\log^5 \log n)$ \\ 
    \hline\hline
    \cite{GV07} & $(-, O(\log \log n))^\dagger$ & rand. & \congest & $O(\log\log n)$ \\ \hline
    \cite{KP12} & $(-,2)^\ddagger$ & rand. & \congest & $O\left(\log \Delta / (\log n)^\epsilon\right)$ \\ \hline
    \cite{BKP14}+\cite{ghaffari16_MIS}+\cite{RG20} & $(2,\beta)$ & rand. & \local & $O(\beta \cdot (\log \Delta)^{1/\beta} + \poly \log \log n)$ \\ \hline
    \cite{BKP14}+\cite{Gha19}+\cite{GGR20} & $(2,\beta)$ & rand. & \congest & $O(\beta \cdot (\log \Delta)^{1/(\beta-1)} + \poly \log \log n)$\\ \hline
    \cite{Gha19} & $(k+1, O(k^2 \log \log n))$ & rand. & \congest & $O(k^2 \log \log n)$ \\ \hline 
    \rowcolor[HTML]{EFEFEF} &&&& \\[-6mm]
    \rowcolor[HTML]{EFEFEF} 
    \textbf{New} & $(k+1, k\beta)$ & rand. & \congest & $\begin{aligned} 
    &O\big(\beta k^{1 + 1/(\beta-1)}  (\log \Delta)^{1/(\beta-1)}\big) \\ &+\widetilde{O}\big(\beta k \log \log n +  k^{4}\log^5 \log n\big) \end{aligned}$  \\ \hline
    \end{tabular}
    \caption{
        Summary of related work on the MIS and ruling set problem in the \congest and \local models. The \congest algorithms that can be used to solve the corresponding problem on $G^k$ for any $k \ge 1$ are explicitly specified. In \local, all algorithms can trivially be used for $G^k$ with a slowdown factor of $\poly k$ (not shown in the runtimes). 
        (*) Additionally, given a $\gamma$-coloring of $G^k$, there is a deterministic $(k+1,k \lceil \log_B \gamma \rceil)$-ruling set algorithm running in $O(k \cdot B \cdot \log_B \gamma)$ rounds. $(\dagger, \ddagger)$  The respective algorithms do not satisfy the independence condition of ruling sets but compute bounded degree subgraphs with the respective domination property.  The degree bounds are ($\dagger$) $O(\log^5 n)$ and  $O(\log n \cdot 2^{(\log n)^\epsilon})$ for $(\ddagger)$.}
    \label{table:relatedwork}
\end{table}

 Recent years have seen several results for problems in power graphs in the \CONGEST model, reaching from verifying solutions efficiently (or showing that this is not possible) \cite{FHN20}, and answering several computational questions in the settings, e.g., for the already discussed problem of distance-$2$ coloring \cite{HKM20,HKMN20} or optimization problems \cite{BCMPP20}.

\paragraph{Ruling sets.} We first focus on results in \LOCAL. 
We have already discussed the deterministic ruling set algorithm of \cite{SEW13,henzinger2016deterministic,KMW18}. Actually, the result is more general and it provides an $O(\log_B C)$-ruling set of $G$ in $O(B\cdot \log_BC)$ time if the graph is equipped with a vertex coloring with $C$ colors. 
By choosing the parameter $B$  appropriately and combining it with Linial's algorithm that computes a $O(\Delta^2)$-coloring in $O(\log^* n)$ rounds, one can compute a $\beta$-ruling set in $O(\beta \cdot \Delta^{2/\beta}+\log^* n)$ rounds. 

Gfeller and Vicari~\cite{GV07} provide a randomized sparsification algorithm to compute a $O(\log \log n)$-dominating set\footnote{The definition of ruling sets is different in~\cite{GV07}: what we refer to as a $t$-dominating set is called a $t$-ruling set in \cite{GV07}, while a $(2,t)$-ruling set is called an independent $t$-ruling set.} $S \subseteq V$ in $O(\log \log n)$ rounds, such that the maximum degree of $G[S]$ is $O(\log^5 n)$. This can be combined with the aforementioned approach to obtain a randomized  an $O(\log\log n)$-ruling set algorithm with round complexity $O(\log\log n)$. Ghaffari extends this approach to $G^k$ in the \CONGEST model and obtains an $O(k\log\log n)$-ruling set in $O(k^2\log\log n)$ rounds \cite{Gha19}. 

Kothapalli and Pemmaraju \cite{KP12} provide another sparsification method that computes a dominating set with degree $O(\Delta'\log n)$ in $O(\log\Delta/\log \Delta')$ rounds. Ghaffari uses multiple iterations of this sparsification method to compute a $\beta$-ruling set in $O(\beta \log^{1/\beta}\Delta)+\poly\log\log n$ rounds in \LOCAL \cite{ghaffari16_MIS}. He also provides similar but slightly weaker results in \CONGEST \cite{Gha19}.

The recently developed powerful round elimination technique \cite{B19}  has been used to prove lower bounds for the computation of ruling sets in \LOCAL \cite{BBO22,BBKO22}. Parameterized by the number of nodes the lower bounds for $\beta$-ruling sets is $\Omega(\log n/(\beta \log \log n))$ for deterministic algorithms and $\Omega(\log \log n/(\beta \log \log \log n))$ for randomized algorithms, as long as $\beta$ is at most $\approx~\sqrt{\log n/\log \log n}$ and $\approx\sqrt{\log \log n/\log \log \log n}$, respectively. 
As a function of the maximum degree $\Delta$, the lower bound is $\Omega(\beta \cdot \Delta^{1/\beta})$. 
In \Cref{table:relatedwork}, we provide an overview of known ruling set and MIS algorithms and contrast them with our results. 

\paragraph{Related work for graph sparsification.}
A fundamental building block of our most involved result (\Cref{corollary:kksquaredRulingSet}) is the method of graph sparsification~\cite{Ahn12, Hegeman2015, Ghaffari2017, GU19, Ass19, CDP19, Czumaj2021}.
Roughly speaking, the idea is to turn the input graph into a sparser representation that still allows us to solve the given problem.
Sparsification has been used in many different settings and computational models and often the exact properties we want from the sparsification vary depending on the setting.

\textit{Global sparsification.} In models with $\widetilde{O}(n)$ memory, such as streaming, sketching, congested clique, and linear memory MPC, the goal is to reduce the total (global) number of edges in the graph.
If the sparser representation has roughly $n$ \emph{edges}, then it can be processed locally in constant time.
For example, the current state-of-the-art for connectivity/MST~\cite{Jurdzinski2018, Nowicki2021}, MIS~\cite{Ahn2015, Ghaffari2018}, and $(\Delta + 1)$-vertex-coloring (and list-coloring)~\cite{Chang2019b, Ass19, Alon2020, Czumaj2021} are based on this method.

\textit{Local sparsification.} Another sparsification approach is to sparsify the graph \emph{locally}.
To get an intuition, suppose that the output of each node $v$ depends only on some local information, i.e., the output of node $v$ can be decided by examining the graph in the small $T$-hop neighborhood around node $v$.
Then, a local sparsification reduces the number of edges in that neighborhood and, at the same time, preserves the property that a correct output for $v$ can be determined from the local neighborhood of $v$ in the sparser graph.
Combined with the memory-hungry graph exponentiation technique~\cite{Lenzen2010}, this approach has been successfully used in sublinear models such as the low-space MPC model.
For example, the current state-of-the-art for MIS~\cite{GU19, CDP19} and $(\Delta + 1)$-vertex-coloring~\cite{Chang2019b, Czumaj2021} are based on this method.

In the context of the \congest model, locally sparsifying the communication graph is a natural way to avoid congestion.
In one previous work, this line of thinking was used for vertex coloring $G^2$~\cite{HKM20}.
However, in the context of coloring, one can split the \emph{problem} into independent instances with disjoint color palettes.
This property creates a fundamental difference between coloring and other symmetry breaking problems, such as MIS and ruling sets, which do not enjoy the luxury of splitting into disjoint instances.
Sparsified graphs are also of independent interest, for example in the context of finding distance preserving spanners, both in cases of linear memory models~\cite{dinitz2020, Dory2021} and local sparsification~\cite{forster2021}.

\section{Notation and \texorpdfstring{$k$}{k}-wise independent random variables}
\label{sec:preliminaries}
We always use $G=(V,E)$ to refer to the original input graph. Similarly, the degree $d(v)$ and (non-inclusive) neighborhood $N(v)$ of a vertex do not change throughout our algorithms. For $s \ge 0$, the \textit{power graph} $G^s$ is the graph where $V(G^s) = V$ and $E(G^s) = \{ \{v, w\} \in V \times V : \dist_G(v,w) \le s \}$. 
A subgraph of the power graph induced by a set of nodes $X$ is denoted $G^s[X]$. Note that this is not the same as $(G[X])^s$: the latter only contains edges formed by paths only using nodes in $X$. 
$N^s(v)$ is called the \textit{distance-$s$ neighborhood of $v$}, which is the neighborhood of $v$ in $G^s$. Let $d^s(v) := |N^s(v)|$. For any $X \subseteq V$, let $N^s(X) := \cup_{v \in X}N^s(v)$. We use \textit{distance-$s$ $X$-neighborhood} to refer to $N^s(v, X) := N^s(v) \cap X$. \textit{Distance-$s$ $X$-degree} is defined as $d^s(v, X) := |N^s(v, X)|$. 

$\IS \subseteq V$ is $\alpha$-independent in $G$, if for all distinct $v, w \in \IS$, $\dist_G(v,w) \ge \alpha$. For $\alpha=2$, we simply say that $\IS$ is \textit{independent} in $G$. For any $S \subseteq V$, $Q \subseteq S$ is a $\beta$-dominating set of $S$, if for all $u \in S$, there exists some $v \in Q$ such that $\dist_G(u,v) \le \beta$. When $S=V$, we say that $Q$ is a $\beta$-dominating set. An $(\alpha, \beta)$-\textit{ruling set} of a graph $G=(V,E)$ is a subset $Q \subseteq V$, such that $Q$ is $\alpha$-independent and $\beta$-dominating. $\alpha$ and $\beta$ are generally referred to as the \textit{independence} and \textit{domination} parameters, respectively. Note that a $(2,1)$-ruling set is a maximal independent set of $G$ and a $(k+1,k)$-ruling set is a maximal independent set of $G^k$. A set $S \subseteq V$ is $k$-connected in $G$ if for all $S' \subset S: \dist_G(S', S \setminus S) \le k$. Equivalently, $S$ is $k$-connected if $G^k[S]$ is connected.

\textit{A breadth-first search tree} (BFS-tree) $T \subseteq G$ with root $r \in V$ and depth $s \ge 0$ is a tree such that $V(T) = N^s(v) \cup \{v\}$ and $\forall v \in V(T): \dist_T(v,r) = \dist_G(v,r)$. We say that a BFS tree $T\subseteq G$ is known in the distributed setting, if each $v \in V(T)$ knows its immediate \textit{ancestor}$(T,v) \in N(v)$, \textit{descendants}$(T,v) \subseteq N(v)$ and the ID of the root node \textit{root}$(T) \in V$. A depth-$s$ BFS tree includes all nodes in the distance-$s$ neighborhood of the root node. BFS trees are non-unique whenever the underlying graph is not a tree. We say that a BFS tree $T$ is \textit{spanning} when $V(T) = V(G)$.

\vspace{3mm}

\begin{definition}[Network decomposition]
    \label{def:networkDecomp}
        A $(c,d)$-\textit{network decomposition} of $G$ is a partition of $V$ into disjoint $d$-diameter clusters. There is a $c$-coloring of the clusters, such that adjacent clusters have different colors. The cluster diameter can be measured in two ways. A cluster $C \subseteq V$ has \textit{strong-diameter} $d$, if any two vertices of $C$ have distance at most $d$ in $G[C]$. In a \textit{weak-diameter} cluster, any two vertices of $C$ have distance at most $d$ in $G$. For communication between the nodes in a cluster, we associate $C$ with a \textit{Steiner tree} $T_C$, which is a tree subgraph of $G$ with a set of \textit{terminal} and \textit{non-terminal} nodes. Terminal nodes are equal to the nodes in the cluster, while non-terminal nodes may be any nodes in the graph. The diameter of the Steiner tree is $O(d)$. The Steiner trees of clusters may overlap with each other. We say that a network decomposition has \textit{congestion} $\tau$, if each edge is in at most $\tau$ Steiner trees for clusters of any single color. In power graphs, a \textit{network decomposition of $G^k$} clusters the vertices with a \textit{separation} of $k+1$. For any two clusters $C,C'$ of the same color, it is required that $\dist_G(C,C') > k$.
\end{definition}

\paragraph{k-wise independent variables.} A collection of discrete random variables $X_1, \dots, X_n$ is $k$-\textit{wise independent} if for any $I \subseteq [n]$ with $|I|\le k$ and any values $x_i$, we have $\pr(\wedge_{i \in I} X_i = x_i) = \prod_{i \in I} \pr(X_i = x_i)$. We can simulate such variables by picking a random hash function from a family of $k$-wise independent hash functions: 

\begin{definition}
    For $N, L, k \in \mathbb{N}$ such that $k \leq N$, a family of functions $\mathcal{H}=\{h:[N] \rightarrow[L]\}$ is $k$-wise independent if for all distinct $x_{1}, \ldots, x_{k} \in[N]$, the random variables $h(x_{1}), \ldots, h(x_{k})$ are independent and uniformly distributed in $[L]$ when $h$ is chosen uniformly at random from $\mathcal{H}$. 
\end{definition}

\begin{lemma}[Corollary 3.34 in \cite{vadhan2012pseudorandomness}]
    \label{lem:hashfamily}
    For every $a, b, k$, there is a family of $k$-wise independent hash functions $\mathcal{H}=\{h:$ $\left.\{0,1\}^{a} \rightarrow\{0,1\}^{b}\right\}$ such that choosing a random function from $\mathcal{H}$ takes $k \cdot \max \{a, b\}$ random bits, and evaluating a function from $\mathcal{H}$ takes time $\poly(a, b, k) .$
\end{lemma}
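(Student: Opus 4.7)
The plan is to give the standard polynomial-evaluation construction of a $k$-wise independent hash family, then verify the bit complexity and evaluation complexity claimed in the lemma. Let $m := \max\{a,b\}$ and work over the finite field $\mathbb{F}_{2^m}$, which we can identify with $\{0,1\}^m$ once we fix an irreducible polynomial of degree $m$ over $\mathbb{F}_2$. Both the domain $\{0,1\}^a$ and the output space $\{0,1\}^b$ embed into $\{0,1\}^m$ (the former by zero-padding, the latter by truncation to the first $b$ bits, a map I will call $\tau_b : \mathbb{F}_{2^m} \to \{0,1\}^b$).

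The family $\mathcal{H}$ I would define is indexed by tuples $(c_0, c_1, \ldots, c_{k-1}) \in \mathbb{F}_{2^m}^k$: to each such tuple associate the polynomial $p(x) = \sum_{i=0}^{k-1} c_i x^i$ of degree less than $k$ and the hash function $h(x) := \tau_b\bigl(p(\iota(x))\bigr)$, where $\iota : \{0,1\}^a \hookrightarrow \mathbb{F}_{2^m}$ is the zero-padding embedding. Sampling $h$ uniformly from $\mathcal{H}$ means sampling each coefficient $c_i$ uniformly and independently from $\mathbb{F}_{2^m}$, which uses exactly $k \cdot m = k \cdot \max\{a,b\}$ random bits, matching the seed length in the statement.

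The next step is the $k$-wise independence argument. Fix any distinct inputs $x_1, \ldots, x_k \in \{0,1\}^a$ and any target outputs $y_1,\ldots, y_k \in \{0,1\}^b$. Their images $\iota(x_1),\ldots,\iota(x_k)$ are distinct in $\mathbb{F}_{2^m}$. For any prescribed values $z_1,\ldots,z_k \in \mathbb{F}_{2^m}$ at these points, Lagrange interpolation gives a unique polynomial of degree less than $k$ hitting those values, so the joint distribution of $\bigl(p(\iota(x_1)),\ldots,p(\iota(x_k))\bigr)$ is exactly uniform on $\mathbb{F}_{2^m}^k$. Composing with $\tau_b$ coordinatewise: since $\tau_b$ is a surjective group homomorphism $\mathbb{F}_2^m \twoheadrightarrow \mathbb{F}_2^b$, the pushforward of the uniform distribution on $\mathbb{F}_{2^m}^k$ is the uniform distribution on $(\{0,1\}^b)^k$. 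Hence $(h(x_1),\ldots,h(x_k))$ is uniform, establishing $k$-wise independence. This is the step I expect to be the conceptual core; the subtlety is only that one has to check the factorization through $\tau_b$ preserves joint uniformity, but this is immediate from $\tau_b$ being a surjective linear map on $\mathbb{F}_2$-vector spaces.

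Finally I would verify the evaluation complexity. To evaluate $h$ on input $x$, one embeds via $\iota$ (free), evaluates $p$ at $\iota(x)$ using Horner's rule (which uses $k-1$ multiplications and $k-1$ additions in $\mathbb{F}_{2^m}$), then truncates to $b$ bits. Arithmetic in $\mathbb{F}_{2^m}$ reduces to polynomial arithmetic over $\mathbb{F}_2$ modulo the fixed irreducible polynomial, which takes $\poly(m)$ time; Horner's rule therefore takes $\poly(m,k) = \poly(a,b,k)$ time overall. Combining these ingredients yields the lemma.
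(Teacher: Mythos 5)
Your construction is correct and is precisely the standard argument behind the cited result: the paper itself gives no proof and simply imports this as Corollary 3.34 of Vadhan's survey, whose proof is the same polynomial-evaluation family over $\mathbb{F}_{2^{\max\{a,b\}}}$ with zero-padding of inputs and truncation of outputs. Your interpolation argument for joint uniformity, the seed-length count of $k\cdot\max\{a,b\}$ bits, and the $\poly(a,b,k)$ evaluation via Horner's rule all match that source, so there is nothing to fix.
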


\section{Technical overview }
\label{ssec:nutshell}
 
\paragraph{Deterministic Sparsification.}
Our main technical ingredient is a deterministic sparsification procedure. 
 From a high level point of view, we gradually sparsify the input graph, i.e., we compute a sequence of sparser node sets $V \supseteq Q_0\supseteq Q_1\supseteq \ldots \supseteq Q_k$, while ensuring that every node in $V$ remains within  constant distance of the set.
More precisely, for each $1 \le s \le k$, the distance from any node in $Q_{s-1}$ to $Q_s$ is at most $2$ in~$G^s$ (at most $2s$ in $G$), and the set $Q_{s}$ is sparse in $G^s$, that is, every node has a bounded number of $Q_s$-neighbors in $G^s$. We use the sparsity of $Q_s$ to efficiently limit the congestion when computing the sparser set $Q_{s+1}$. 
At the end $Q_{k-1}$ is sparse enough to efficiently simulate any algorithm on $G^{k}[Q_{k-1}]$, i.e., the subgraph of $G^{k}$ induced by $Q_{k-1}$. For example, to compute a $k$-ruling set of $G^k$, we can use sparsification to find $Q_{k-1}$ and then compute an MIS of $Q_{k-1}$ on the power graph $G^k$, where only nodes in $Q_{k-1}$ are allowed to enter the MIS.

We believe that this sparsification procedure is of independent interest and may be helpful for other problems and in other models of computation. We summarize it in the next lemma. 
Recall, the distance-$k$ $Q$-degree of a node $v$ is the number of neighbors in $N^k(v)\cap Q$. 
\begin{restatable}[Sparsification in Power Graphs]{lemma}{lemSparsification} \label{lem:sparsification}
    Let $k \ge 1$ (potentially a function of $n$). There is a deterministic distributed algorithm that, given 
    a subset $Q_0 \subseteq V$, finds a set of vertices $Q \subseteq Q_0$ such that for all $v \in V$, 
    \begin{itemize}
        \item (bounded distance-$k$ $Q$-degree): $d^k(v,Q) \le 72\log n = O(\log n)$
        \item (domination): $\dist_{G}(v, Q) \le k^2 + k + \dist_{G}(v, Q_0)$
    \end{itemize}
    The algorithm runs in
    $O(\diam(G) \cdot k \cdot \log^2 n \cdot \log \Delta + k^2 \cdot \log \Delta)$ rounds in the \congest model.
\end{restatable}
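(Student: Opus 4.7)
The plan is to build a decreasing chain $Q_0\supseteq Q_1\supseteq \cdots \supseteq Q_k=Q$, where phase $s$ takes $Q_{s-1}$ to $Q_s$. At the end of phase $s$ I would maintain the invariants (i) $d^s(v,Q_s)\le 72\log n$ for every $v\in V$, and (ii) every node of $Q_{s-1}$ has a $Q_s$-node within $G^s$-distance $2$ (i.e.\ within $G$-distance $2s$). Since phase $s$ adds at most $2s$ to the domination distance, telescoping yields $\sum_{s=1}^k 2s = k^2+k$, matching the target bound. The sparsity of $Q_{s-1}$ in $G^{s-1}$ is precisely what will make simulating communication inside the $G^s$-induced subgraph on $Q_{s-1}$ efficient in \CONGEST, so the invariants feed each other.

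Phase $s$ is implemented as $O(\log \Delta)$ halving rounds. In a halving round, call a surviving node \emph{heavy} if its current $d^s$-degree in the surviving set still exceeds $72\log n$, and \emph{light} otherwise. Light nodes are immediately added to $Q_s$ and the heavy nodes they dominate in $G^s$ are declared covered (and dropped). Each still-heavy node is tentatively kept with probability $1/2$ using bits drawn from an $O(\log n)$-wise independent hash (\Cref{lem:hashfamily}). A standard second-moment / Markov argument shows that in expectation the number of heavy nodes shrinks geometrically, so after $O(\log \Delta)$ halvings the set is exhausted. Invariant (ii) is preserved because a dropped heavy node is within $G^s$-distance $1$ of the light node that dominated it, or at worst within $G^s$-distance $2$ of the $Q_s$-node dominating that light node.

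To derandomize, I would use the method of conditional expectations along a spanning BFS tree of $G$, with the potential function equal to the expected number of nodes whose surviving $d^s$-degree still exceeds the threshold, summed over all of $V$. By \Cref{lem:hashfamily}, a seed of only $O(\log^2 n)$ bits suffices; I would fix this seed bit by bit, re-evaluating the potential after each candidate flip. The potential is an additive sum over $V$, so it can be aggregated on the BFS tree in $O(\diam(G))$ rounds; together with the $O(\log n)$ \CONGEST overhead of evaluating each node's local contribution (which relies on the previous sparsity of $Q_{s-1}$ to avoid congestion when gathering its distance-$s$ view), each bit-fix costs $O(\diam(G)\cdot \log n)$, so a whole halving round costs $O(\diam(G)\cdot \log^2 n)$. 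Summing over $O(\log \Delta)$ halvings and $k$ phases gives the $O(\diam(G)\cdot k \cdot \log^2 n \cdot \log \Delta)$ term. The additive $O(k^2\log \Delta)$ accounts for the purely local $O(s)$-hop neighborhood computations needed at level $s$, summed as $\sum_{s\le k} s\cdot \log \Delta$.

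The main obstacle I expect is the \CONGEST bookkeeping: the halving step needs each node to know its $d^s$-degree in the \emph{current} surviving set, yet the bandwidth is only $O(\log n)$ per edge and $s$ can be as large as $k$. The leverage that should make this work is the inductive sparsity of $Q_{s-1}$ in $G^{s-1}$, which caps the congestion of any $s$-hop flood that originates at a $Q_{s-1}$-node; combined with pipelining, this gives an $O(s)$-round neighborhood-aggregation primitive that can be reused inside both the halving step and the MCE evaluations. Making these two layers---the local $G^s$-simulation and the global BFS-based MCE---compose cleanly, while preserving the invariants (i) and (ii) deterministically, will be the technical heart of the argument.
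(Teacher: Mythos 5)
Your outer structure (the chain $Q_0\supseteq\cdots\supseteq Q_k$, the invariants, the telescoping $\sum_s 2s=k^2+k$, and the idea that sparsity of $Q_{s-1}$ in $G^{s-1}$ pays for the \CONGEST simulation of phase $s$) matches the paper. The gap is in the inner per-phase algorithm. Your rule ``light nodes (current $d^s$-degree $\le 72\log n$ in the surviving set) immediately join $Q_s$'' does not bound $d^s(v,Q_s)$ for $v\in V$: in a star (in $G^s$) with center $v$ and $\Delta^s$ leaves, every leaf is light in the very first halving round, all of them enter $Q_s$, and $v$ ends up with $\Delta^s$ $Q_s$-neighbors. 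Lightness is a property of a node's own degree and says nothing about how many light nodes a fixed observer $v$ sees, so the bounded-degree guarantee --- which must hold for \emph{all} of $V$, not just for nodes that survive or join $Q$ --- fails outright. A second problem is domination: a heavy node eliminated by the probability-$1/2$ coin flip has no witness in $Q_s$ within $G^s$-distance $2$; your preservation argument only covers heavy nodes dropped because a light neighbor covered them, so invariant (ii) is not maintained deterministically. Relatedly, your MCE potential (the expected number of still-heavy nodes) only certifies geometric shrinkage of the heavy set; it encodes neither of the two per-node guarantees you actually need, and nothing in your scheme prevents a node from accumulating new $Q_s$-neighbors across successive halving rounds.

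The paper resolves exactly these points differently: in stage $i$ it samples each active node with probability $\Theta(2^i\log n/\Delta_A)$ (tuned to the current maximum active degree), defines for every $v\in V$ two bad events --- ``$v$ has more than $72\log n$ sampled neighbors'' and ``$v$ has high active degree but no sampled neighbor'' --- each of probability at most $1/n^3$ under $O(\log n)$-wise independence, and derandomizes so that \emph{no} bad event occurs; it then deactivates the entire distance-$2$ neighborhood (in $G^s$) of sampled nodes, which both certifies the $+2$ domination loss and guarantees that no node gains $Q$-neighbors in more than one stage, so the per-stage bound $72\log n$ is also the final bound. If you want to keep a halving-style scheme you would need to replace ``light nodes join'' by a selection step whose per-observer load is controlled by an explicit per-node bad event, and to only remove nodes that have a selected node within distance $2$; at that point you have essentially reconstructed the paper's stage structure. (Minor additional point: with an $O(\log^2 n)$-bit seed and an extra $O(\log n)$ cost per bit-fix your own accounting gives $O(\diam(G)\log^3 n)$ per halving round, a $\log n$ factor above the claimed bound; the paper avoids this by having each node learn the IDs of its active $G^s$-neighbors once per stage, after which every conditional expectation is computed locally in zero rounds and each bit-fix costs only $O(\diam(G))$.)
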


Using a network decomposition , we can replace the diameter factor in \Cref{lem:sparsification} with a term that is polylogarithmic in $n$ (see \Cref{app:noDiameter} for more details and in particular \Cref{lem:ndSparsification} for the precise statement).  
For $k=1$ (and if initialized with $Q_0=V$), we obtain a polylogarithmic-round algorithm to compute a set with domination distance $2$ such that each $v \in V$ has at most $O(\log n)$ neighbors in $Q$. Note that the degree bound is a \emph{local property}. If one instead wants to compute a set with the same domination distance and \emph{globally} minimize its size, it is known that even constant approximation algorithms require near-quadratic time in  \CONGEST  \cite{BCMPP20}. 

\textbf{How does the sparsification work?} The core idea of our deterministic sparsification is to derandomize the following randomized sampling algorithm: First assume that the graph $G^k$ is $\Delta^k$-regular. 
We can sample  every node with probability $O(\log n/\Delta^k)$ into a set $Q$. 
Then every node in $V$ gets $\Theta(\log n)$ distance-$k$ $Q$-neighbors, with high probability. 
Even though this is a trivial $0$-round algorithm, it is non-trivial to derandomize it, as the constraints imposed by the nodes depend on the decision of nodes in distance $\Theta(k)$. 
Generally, there is little hope that one can directly (and deterministically) find good random bits (aka good conditions) in order to fulfill the constraints of all nodes.
Instead, we perform a more fine-grained sparsification method, in which we gradually sparsify the graph, each time slightly losing in the domination property. This can be viewed as first computing the set $Q_k$ for $k=1$, then for $k=2$, and so on.

Continuing in the randomized setting, we explain our approach for finding $Q_1$ in $G$. The sampling algorithm has two objectives. For all $v \in V$, the number of neighbors in $Q_1$ should be at most $O(\log n)$, while the distance to $Q_1$ should be at most 2 (or at most some constant). Since degrees are not uniform, it is difficult to find the right sampling probability that satisfies the two objectives for all nodes. 
Hence, the process is slowed down to $O(\log \Delta)$ \textit{stages}. All nodes in $Q_0$ start as \textit{active} ($Q_0=V$ in our applications). In each stage, we sample active nodes to $Q_1$. Initially, the sampling probability is low, so that high degree nodes get at least one sampled neighbor, while not exceeding the $O(\log n)$ bound. At the end of each stage, the distance-2 neighborhood of sampled nodes is deactivated. This guarantees that high degree nodes do not get more sampled neighbors in later stages, effectively decreasing the maximum \textit{active degree}. The decrease in active degree means that we can sample in the next stage with a slightly higher probability. In the end, only nodes with low active degree remain, which can be included in the sampled set. The result $Q_1$ is a 2-dominating set, while all $v \in V$ have at most $O(\log n)$ $Q_1$-neighbors.

\textbf{Sparsifying $G$.} Our deterministic sparsification for $G$ is based on derandomizing the previous sampling algorithm. The randomized analysis works with $O(\log n)$-wise independence. This allows simulating the randomness in one stage with a $O(\log^2 n)$-bit random seed. Derandomization is done stage by stage, using the method of conditional expectations. In each stage, we need to guarantee two events: (1) each node gets at most $O(\log n)$ sampled neighbors, while (2) each high active degree node gets at least one of its neighbors sampled. Both events only depend on the events in the immediate neighborhood in $G$, which makes the required information easily available for each node. The bits of the seed are fixed one by one. In order to fix a single bit $b$ (to $0$ or $1$) of the seed, each node computes conditional expectations for its two events, for both choices of $b$. To compute the conditional expectations, a node needs to know the values of the already fixed bits of the seed and the IDs of its active neighbors. In $G$, learning the relevant identifiers can be done in one round, because the two events are determined by the decisions of active neighbors in the immediate neighborhood. Then, the conditional expectations of all nodes are aggregated (summed up) at a leader node who can then decide on the better choice for the bit $b$. Then all nodes proceed with the next bit. The method of conditional expectation implies that all nodes' events  hold at the end of this process. 

\textbf{Sparsifying power graphs.} Fix some $1 \le s \le k$. The $s$th \textit{iteration} of the sparsification is simulated on the power graph $G^s$. The output of the previous iteration acts as the initial set of active nodes $Q_{s-1}$. The $s$th iteration results in $Q_s \subseteq Q_{s-1}$, where $Q_s$ is sparse in $G^s$, while weakening the domination distance of $Q_{s-1}$ by at most $2$ in $G^s$ (or $2s$ in $G$). The sum of increases in distance over $s$ iterations is $\sum_{j=1}^s 2j\leq s^2+s$, hence $Q_s$ is a $(s^2 + s)$-dominating set of $G$ (when initialized with $Q_0=V$).

A main challenge of our work is to ensure that all nodes can obtain the necessary information for derandomization (distance-$s$ neighbor's random bits and IDs) when sparsifying power graphs. 
To guarantee sparsity in $G^s$, all nodes must remain as observers (and also relay messages), taking part in the derandomization.
Here, in a nutshell, the sparsity with regard to the previous iteration helps to learn the required information. 
More formally, we build communication tools to efficiently run the algorithm on $G^s$, relying on the sparsity of $Q_{s-1}$.

\textbf{Communication tools.} In order to benefit from the sparsity of $Q := Q_s$, we develop communication tools (see \Cref{lem:comms}) that allow us to execute basic communication primitives on power graphs. The sparsity of $Q$ in $G^s$ can be used to efficiently send messages from $Q$ to their neighbors in $G^{s+1}$. The communication algorithms include sending broadcasts from nodes in $Q$ to their neighbors in $G^{s+1}$ in $O(s+\log n)$ rounds, and simulating one round of a \CONGEST algorithm on $G^{s+1}[Q]$ in $O(s + \log^2 n)$ rounds (then, one can basically assume that the algorithm is running on the communication network $G^{s+1}[Q]$ with $O(s+\log^2 n$) overhead). The efficiency is based on the bounded number $\maxdeg$ of distance-$s$ neighbors in $Q$ for \emph{all} nodes in $G$. This effectively bounds the number of messages forwarded through any edge in the graph. For example with broadcasts, for any edge $\{v,w\}$ of the communication network, the number of nodes $x \in Q$ whose broadcast must be forwarded from $v$ to $w$ is at most $d^s(v,Q) \le \maxdeg$, because the message is forwarded for at most $s+1$ hops. 
The communication tools are also used to obtain the sparsification lemma (\Cref{lem:sparsification}), where the sparsity of $Q_{s}$ in $G^{s}$ is used to run the $(s+1)$th iteration of sparsification on $G^{s+1}$ efficiently. 
In our ruling set application, we use the communication tools to simulate an MIS algorithm on $G^k$. 

\smallskip
\noindent\textbf{Deterministic $k$-ruling set of $G^k$.} Our deterministic sparsification algorithm is used to compute a sparse subset $Q:=Q_{k-1} \subseteq V$, while maintaining constant domination distance to the rest of the graph. After sparsifying the graph, we compute an MIS of $G^k[Q]$. Using our communication tools, we can simulate any MIS algorithm on $G^k[Q]$ in a black-box manner. In general, this approach yields a $(k+1, \beta+k)$-ruling set of $G$, where $\beta$ is the domination distance of $Q$ (see \Cref{theorem:sparsificationToRulingset} for the formal statement). With our sparsification algorithm, the result is a $(k+1,k^2)$-ruling set of $G$, or equivalently a $k$-ruling set of $G^k$.  Messaging between distance-$k$ neighbors in $Q$ can be implemented with a $(k+\log^2 n)$-factor slowdown. 
Combined with the state of the art MIS algorithm \cite{FGG22}, we compute a $(k+1,k^2)$-ruling set in $O(\poly \log n)$-rounds (\Cref{corollary:kksquaredRulingSet}). 
For a constant $k>1$ this improves exponentially upon prior work that required $O(k^2 \cdot n^{1/k})$ rounds \cite{SEW13,KMW18,EM19}.

\paragraph{Randomized MIS of \texorpdfstring{$G^k$}{G\^k}.} For our randomized results, we use the \textit{shattering} framework, where a randomized \textit{base algorithm} is used to solve the problem efficiently on most parts of the graph. The remaining connected components (in $G^k$) are small, with high probability. The small connected components are solved with a different algorithm. Also see the next paragraph on \Cref{thm:MISrepaired} for further details on the shattering framework.  
For the base algorithm, we cannot use a black-box simulation of Ghaffari's MIS algorithm from \cite{ghaffari16_MIS}, as simulation on $G^k$ would be prohibitively expensive. 
Hence, we use the \textit{BeepingMIS} algorithm of \cite{Ghaffari2017} (with a minor but crucial modification), which works in a simple beeping model of communication. 
However, we need to be careful when forwarding the beeps: With cycles in the communication network, nodes may confuse the beep of a neighbor with that of their own, when $k \ge 3$. To avoid this, we equip the beeps with an identifier of the beeping node. Nodes forward an arbitrary subset of at most two beeps, which is enough for any beeping node to distinguish if there is a beeping neighbor or not. 
In the \textit{post-shattering phase}, the remaining unsolved parts of the graph form small connected components in $G^k$, each with $N=O(\Delta^{4k} \cdot \log n)$ nodes, with high probability. To find a solution in the remaining components, we run $O(\log_N n)$ executions of the BeepingMIS algorithm in parallel, which guarantees that at least one of the executions succeeds, with high probability in~$n$. To limit congestion, we assign unique IDs to the nodes in the connected component from $[N]$. This bounds the total communication to $O(\log_N n \cdot \log N) = O(\log n)$ (for simulating one step for all instances). This approach achieves the same runtime as the state of the art algorithm for MIS of $G$ \cite{Gha19}, up to slowdown factors of $k$.

\paragraph{\Cref{thm:MISrepaired},  shattering in $G$, in \LOCAL and \CONGEST.} Since the seminal work of \cite{BEPS16} the shattering technique has become an essential tool in the area. The technique has two phases. In the \emph{pre-shattering phase}  the problem at hand (e.g., the MIS problem) is solved in large parts of the graph via a very efficient randomized process such that with high probability only small components---think of components of polylogarithmic size---remain afterwards. In the \emph{post-shattering phase}  one employs a different algorithm, usually a deterministic algorithm, that \emph{finishes} the small components very efficiently by exploiting their small size. The difficulty here is that the components are  actually of size $\poly\Delta \cdot \log n$, which is not small for large values of $\Delta$. 

\textbf{The high level solution.} Fortunately, components have further beneficial properties: Fix a small component $C$ and some suitable $\beta$ and $\alpha\geq 5$ (the constant $5$ depends on the properties of the pre-shattering phase and works for the MIS algorithms in \cite{BEPS16,ghaffari16_MIS,Gha19}, other problems may need other values). 
Then it is known that any $(\alpha,\beta)$-ruling set $R_C$ of $C$ has at most $O(\log n)$ nodes. 
If we had such a ruling set available, we could exploit its size algorithmically by creating a virtual graph as follows: 
Each node in $R_C$ forms a connected \emph{ball} of nodes around it, such that each node in $C$ joins a unique ball.
Then, one can build a virtual graph $H$ with one vertex for each such ball. 
Two balls are connected in $H$ if the respective balls contain nodes that are adjacent in the original graph $G$. Since $|V(H_C)|=|R_C|=O(\log n)$, and since one round of communication in $H_C$ can be simulated in $G$ in $O(\beta)$ rounds (in the \LOCAL model) one can compute a network decomposition (ND) of $H$ very efficiently, e.g., with the algorithm of \cite{RG20}. 
The details do not matter for the current exposition, but it is known that an ND of $H_C$ implies an ND of $C$ which can be used to compute an MIS on $C$.

\textbf{The challenge.} The challenge is that it is unclear how to compute a ruling set $R_C$. The algorithm would have to run on all small components in parallel and running it on the induced subgraph does not work as the resulting ruling set needs to be $5$-independent in $G$; $5$-independence in $G[C]$ is insufficient. 
It is also critical if a node $v\in C, v\neq R_C$ is  dominated by some node in the ruling set of some other small component $C'\neq C$.  To illustrate the difficulty, observe that a node cannot tell efficiently whether a node in distance $5$ in $G$ is in the same small component as itself or not. Hence, the nodes cannot easily determine whether they can both be contained in the ruling set or not. 

\textbf{The solution.} Inspired by a combination of the different versions of \cite{BEPS16}, we present the following solution. After the pre-shattering phase, we perform a second randomized pre-shattering phase that is run on all small components in parallel (it works w.h.p.\ in $n$ and there are at most $n$ components so we can perform a union-bound over the error probabilities of different components). It splits each component into so called \emph{tiny components}. Then, we prove the following lemma.

\begin{lemma*}[Informal version of \Cref{lem:secondphaseshattering}]
Let $\alpha=5$ and let $\beta$ be an integer and consider a small component $C$. Then, each $(\alpha,\beta)$-ruling set of its tiny components  has at most $O(\log n)$ nodes. Both distances of the ruling set are measured in the graph induced by $C$ and the bound on the size even holds if a node of one tiny component is dominated by a node in another tiny component.     
\end{lemma*}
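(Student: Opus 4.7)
The plan is to design a second, independent random sparsification phase that is run inside each small component $C$ with fresh randomness and analyzed very much like the first pre-shattering phase. Concretely, I would arrange matters so that every node $v \in C$ becomes \emph{surviving} (i.e., ends up inside some tiny component) only if a local event $E_v$ holds, where $E_v$ depends exclusively on the random bits of nodes in some constant-radius neighborhood of $v$ in $G$, and so that $\pr[E_v] \le p$ for a tunable parameter $p$ that matches the second-phase sampling probability.

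Next, to bound the size of any $(5,\beta)$-ruling set $R$ of the tiny components of $C$, I would argue as follows. Every node of $R$ must itself be a surviving node---otherwise it could not belong to a ruling set \emph{of} the tiny components---so the event $\bigcap_{v \in R} E_v$ must hold. Since $R$ is $5$-independent in $G[C]$, the constant-radius neighborhoods in $G$ of distinct nodes of $R$ are pairwise disjoint; this is precisely where the value $\alpha = 5$ enters the picture. Hence the events $\{E_v\}_{v \in R}$ are mutually independent, and $\pr\bigl[\bigcap_{v \in R} E_v\bigr] \le p^{|R|}$.

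The plan then continues with a union bound over $R$. The first-phase guarantee yields $|C| \le \poly(\Delta) \log n$, so the number of candidate $k$-subsets of $C$ is at most $|C|^k$. By choosing $p$ small enough that $p \cdot |C| \le 1/2$, the probability that some $(5,\beta)$-ruling set of size $k$ exists inside $C$ is at most $2^{-k}$; setting $k = \Theta(\log n)$ with a sufficiently large constant drives this below $n^{-c-1}$, and a final union bound over the at most $n$ small components then gives the claimed $O(\log n)$ bound, with high probability in $n$.

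The main obstacle I anticipate is the calibration of $p$: it must be small enough to make $p \cdot |C| < 1$ (to keep the union bound closed), yet large enough that the tiny components are nontrivial and that the subsequent deterministic post-shattering algorithm can finish each of them within the global complexity budget. A secondary subtlety, explicitly highlighted in the lemma's statement, is that a tiny component may be dominated by a ruling-set node residing in a \emph{different} tiny component. This is exactly why the argument must bound $|R|$ through the survival events of the individual $R$-nodes---rather than by charging one representative per tiny component---and why the disjoint-neighborhood independence argument above is essential.
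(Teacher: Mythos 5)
Your proposal follows essentially the same route as the paper: rerun the randomized pre-shattering procedure independently inside each small component $C$ with fresh randomness, use a product-type bound on the probability that all nodes of a $5$-independent set stay undecided, and close with a union bound over all candidate subsets of $C$ --- feasible precisely because the first phase has already shrunk $C$ to $\poly(\Delta)\cdot\log n$ nodes --- plus a union bound over the at most $n$ components; this is exactly the proof of \Cref{lem:secondphaseshattering}. One step is misstated, and it is exactly the delicate point of this section: $5$-independence of $R$ in $G[C]$ does \emph{not} make the constant-radius neighborhoods of the $R$-nodes disjoint \emph{in $G$} (distances in an induced subgraph can only exceed distances in $G$), so the independence claim as you wrote it is unjustified. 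What saves the argument --- and what the paper states explicitly --- is that the second phase runs on the induced subgraph $C$, so the survival events depend only on randomness and topology inside $C$, and $5$-independence \emph{in $C$} gives the disjointness you need; the paper packages this as applying \Cref{lem:5indepUndecided} to the graph $C$. Finally, the calibration of $p$ that you flag as the main obstacle is a non-issue in the paper's instantiation: running the same IndependentSet procedure for $\Theta(c\log\Delta)$ steps gives a bound of $\Delta^{-c|U|}$ per candidate set, which beats the at most $\binom{t\Delta^4}{t}\le\Delta^{5t}$ candidate sets once $c$ is a large enough constant (and in fact yields the stronger bound $t=\log_\Delta n$ on $|R|$, of which the informal $O(\log n)$ claim is a weakening).
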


The benefit of this lemma is that we can  run a ruling set algorithm (in order to dominate all remaining nodes in tiny components) on the graph induced by the small components, making the algorithm fully independent between different small components. Instead of worrying whether a node in distance $5$ in $G$ is within the same component as oneself, one can simply ignore all edges that are not contained in $G[C]$ and work in the graph induced by small components.

\textbf{Proof of the lemma.} While the whole setup is rather complicated and very similar to the one in \cite{BEPS16}, the proof of the lemma is very simple. If we fix one specific ruling set $R_C$ as in the lemma, then the core shattering argument of \cite{BEPS16} states that w.h.p.\ in $n$ the set $R_C$ is of size $O(\log n)$. The intuitive reason for this fact is that each node only remains undecided with a small probability and these events are independent  for nodes that are at least $5$ hops apart. Hence, w.h.p.\ not more than $O(\log n)$ nodes with pairwise distance $5$ can remain undecided.  
Now, we obtain the lemma by performing a union bound over all such ruling sets. The crucial point in this solution is that one can rely on the small size of $C$ (that holds w.h.p.\ after the first pre-shattering phase) to show that there are only $\poly n$ many of these sets making the union bound a feasible approach.\footnote{The informed reader will notice that this is different from (and simpler than) the standard proof for showing that small components emerge after the first pre-shattering phase. In that setting a similar union-bound is not possible. }

\textbf{Previous solutions.} The arXiv version  of \cite{BEPS16} has a very similar two-phase setup, but uses a different (and faulty\footnote{See \Cref{sec:revisedShattering} for details.}) approach to prove a statement that is similar to our lemma. The (correct) journal version of \cite{BEPS16} also uses a similar two-phase structure, but then uses a different and more involved argument to show that a network decomposition of the virtual graph $H$ can be computed efficiently. 
Besides some other technical details, the main difference is that they show that each ball  satisfies certain power graph connectivity requirements if nodes are assigned to rulers \emph{on-the-fly} while computing the ruling set (their proof is insufficient if nodes simply join the ball of the closest ruler). Hence, the proof requires internals of the used ruling set algorithm and it is difficult to formally black-box it, as it is  done in later works. 

\textbf{Our second solution.} Our second solution  shows that using the internals of current state-of-the-art ruling set algorithms one can omit the second pre-shattering phase. While this is simpler than the proof in the journal version of \cite{BEPS16}, we believe that our solution that is based on the two-phase structure is more approachable. However, the version with a single pre-shattering phase is easier to generalize to $G^k$, which is needed for \Cref{thm:MISPower} and \Cref{cor:kkBetaRulingRand}.

\section{Communication tools} \label{sec:comTools}
In this section, we present the communication tools that we need to communicate efficiently on sparse representations of power graphs. 

\begin{lemma}[Learning IDs in distance-$(s+1)$ $Q$-neighborhood]\label{lem:sendingIDs}
	Let $s \ge 1$ and $\maxdeg \ge 1$ be integers. Let $Q \subseteq V$ such that all $v \in V$ have at most $\maxdeg$ distance-$s$ $Q$-neighbors. Each $v \in V$ has a unique $a$-bit identifier ID$(v)$. Given that all $v \in V$ know the set of IDs in $N^{s}(v,Q)$, all $v \in V$ can learn the set of IDs in $N^{s+1}(v,Q)$ in $O(\maxdeg \cdot a / \bandwidth)$ rounds, using $\bandwidth$-bit messages.
	
	Furthermore, suppose that for each $v \in Q$, there is a depth-$s$ BFS tree $T_v$ rooted in $v$, given distributedly. 
	Each tree can be extended to depth $s+1$ in additional $O(\maxdeg \cdot a / \bandwidth)$ rounds.
\end{lemma}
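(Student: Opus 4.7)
The plan is to propagate the distance-$s$ $Q$-neighborhood information outward by exactly one hop, using the identity $N^{s+1}(v,Q)=N^s(v,Q)\cup \bigcup_{u\in N(v)}N^s(u,Q)$: a node $w\in Q$ lies within distance $s+1$ of $v$ iff it lies within distance $s$ of either $v$ itself or of some $G$-neighbor of $v$. So I would have every $u\in V$ send its already-known list of IDs in $N^s(u,Q)$ to each of its $G$-neighbors in parallel. By the hypothesis on $Q$ this list has at most $\maxdeg$ entries of $a$ bits each, so a single directed edge carries $\maxdeg\cdot a$ bits, and in the \CONGEST model with $\bandwidth$-bit messages this finishes in $O(\maxdeg\cdot a/\bandwidth)$ rounds. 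Then each $v$ unions its own $N^s(v,Q)$ with the messages received from its neighbors to obtain $N^{s+1}(v,Q)$.

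For the BFS-tree extension, I would reuse the exchange above: after the first step, each $u$ knows $N^s(w,Q)$ for every neighbor $w$. For each newly discovered root $v\in N^{s+1}(u,Q)\setminus N^s(u,Q)$, node $u$ must have some neighbor $w$ with $v\in N^s(w,Q)$, and any such $w$ sits at distance exactly $s$ from $v$ since $\dist_G(u,v)=s+1$. So $u$ picks one such $w$ by a local deterministic rule (say the smallest ID) and sets $\text{ancestor}(T_v,u):=w$. Then $u$ must inform the chosen $w$ so that it can register $u\in\text{descendants}(T_v,w)$ and learn the new edge of $T_v$; I would implement this by sending along each edge $\{u,w\}$ the list of roots $v$ for which $u$ selected $w$ as its ancestor.

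The only nontrivial point, and the step I expect to be the main obstacle to state precisely, is bounding the load of this informing phase, because a priori $u$ could be added to many trees. The key observation is that every $v$ transmitted across the edge $\{u,w\}$ satisfies $v\in N^s(w,Q)$ by construction. Hence the list $u$ sends to $w$ is a subset of $N^s(w,Q)$ and therefore contains at most $\maxdeg$ identifiers, so the per-edge load is again $\maxdeg\cdot a$ bits and fits into $O(\maxdeg\cdot a/\bandwidth)$ rounds. This is exactly the mechanism by which the assumed $Q$-sparsity in $G^s$ converts into per-edge congestion in $G$, and it is what drives the round complexity; the remainder is bookkeeping at each $w$ to update $\text{descendants}(T_v,w)$ for every $v$ in the list it receives, while $u$ already knows $\text{ancestor}(T_v,u)=w$ and $\text{root}(T_v)=v$ from its own choice.
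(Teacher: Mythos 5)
Your proposal is correct and follows essentially the same route as the paper: one round of pipelining the $N^s(\cdot,Q)$ lists over each edge (per-edge load $\maxdeg\cdot a$ bits), taking unions, and then a confirmation message from each node to its chosen parent for the tree extension. Your explicit observation that the confirmations sent over an edge $\{u,w\}$ form a subset of $N^s(w,Q)$, and hence again cost at most $\maxdeg\cdot a$ bits per edge, is exactly the (implicit) reason the paper's proof charges the same complexity for the extension step.
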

\begin{proof}
	Each node $v \in V$ sends the set of IDs in $N^{s}(v,Q)$ to each of its neighbors $w \in N(v)$ by pipelining. By assumption, $d^{s}(v,X) \le \maxdeg$ for all $v \in V$. Hence, pipelining the IDs takes at most $\maxdeg \cdot a / \bandwidth$ rounds. Each $v \in V$ takes the union of the incoming sets of IDs to form $\cup_{w \in N(v)} N^{s}(w,Q) = N^{s+1}(v,Q)$. 
	
	To extend the BFS trees, each $v \in V$ is added to trees of nodes in $Q$ at distance exactly $s+1$. For each $x \in N^{s+1}(v,Q) \setminus N^s(v,Q)$, we add $v$ as a leaf node to the tree $T_x$. Let $w_x \in N(v)$ be a neighbor who sent $\ID(x)$ to $v$. One such neighbor $w \in N(v)$ is chosen arbitrarily, in case there are many. By assumption, $w$ is already included in $T_x$. Now, $v$ sends a confirmation to $w_x$ including $\ID(x)$. $v$ sets ancestor$(T_x,v) := w_x$, and $w$ adds $v$ to descendants$(T_x, w)$. The additional time complexity is the same as the time taken to send the IDs. 
\end{proof}

\begin{lemma}[Sending messages from $Q \subseteq V$] \label{lem:comms}
	Let $s \ge 1$ and $\maxdeg \ge 1$ be integers. Let $Q \subseteq V$ such that all $v \in V$ have at most $\maxdeg$ distance-$(s-1)$ $Q$-neighbors. For each $v \in Q$, let $T_v$ be a BFS tree rooted in $v$ with depth $s$. Assume that each $v \in Q$ knows the set of IDs in $N^{s}(v,Q)$, as well as $N^{s-1}(w,Q)$ for each $w \in N(v)$. There is a deterministic algorithm for the following tasks:
	\begin{itemize}
		\item Broadcast: Each $v \in Q$ sends an $m$-bit message $\text{msg}_v$ to all $w \in N^{s}(v)$ \\in $O(s + m\maxdeg /\bandwidth)$ rounds.  
		\item $Q$-message: Each $v \in Q$ sends an $m$-bit message $\text{msg}_{v,w}$ to each $w \in N^{s}(v, Q)$ \\in $O(s + (m+a)\maxdeg^2/\bandwidth)$ rounds. 
	\end{itemize}  
	Each node has a $a$-bit identifier that is unique within its $s$-neighborhood. The algorithm uses $\bandwidth$-bit messages, where $\bandwidth \ge~\maxdeg$.  
\end{lemma}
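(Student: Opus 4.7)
My plan is to route each message along the given depth-$s$ BFS trees using pipelined store-and-forward. Every $v \in Q$ injects its message (or messages) at the root of $T_v$, and every intermediate node forwards each incoming packet to the appropriate descendants in $T_v$ following the locally stored ancestor/descendant pointers. For the broadcast task the intended recipients are all of $V(T_v)$, so each node simply relays an arriving chunk to all its descendants in the corresponding tree. For the $Q$-message task, each packet is tagged with its $a$-bit destination $\ID(w)$; an intermediate node reads the tag and forwards the packet to the unique child in $T_v$ whose subtree contains $w$, which it identifies from its descendant list, using that IDs are unique inside the $s$-neighborhood.

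\textbf{Congestion analysis.} The main calculation is bounding the number of packets that cross any single directed edge $(u,u')$ of $G$, taken parent-to-child. Such an edge can appear in $T_v$ only if $\dist_G(v,u) \le s-1$, i.e., $v \in Q$ with $v \in N^{s-1}(u,Q) \cup \{u\}$; by hypothesis there are $O(\maxdeg)$ such sources. For the broadcast task this already yields at most $O(\maxdeg)$ packets of $m$ bits per edge. For the $Q$-message task, each such source $v$ contributes packets only for destinations $w \in Q$ lying in the subtree of $u'$ in $T_v$, which has depth at most $s-1$, so $w \in N^{s-1}(u',Q) \cup \{u'\}$ and there are $O(\maxdeg)$ destinations per source; multiplying yields at most $O(\maxdeg^2)$ packets per edge, each of size $m+a$ bits.

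\textbf{Pipelining and the main obstacle.} Plugging the per-edge congestion into a standard pipelined schedule along the trees yields an $O(s + m\maxdeg/\bandwidth)$-round broadcast and an $O(s + (m+a)\maxdeg^2/\bandwidth)$-round $Q$-message protocol: the additive $O(s)$ is the time for the first chunk to reach the deepest leaf of any $T_v$, and the second term drains the heaviest edge at $\bandwidth$ bits per round (the hypothesis $\bandwidth\geq\maxdeg$ makes sure a node can fit one chunk from each of its trees into a round). The step I expect to need the most care is the scheduling itself: an intermediate node belongs to many trees at once, and in the broadcast case we cannot afford an $a$-bit source tag on every packet without breaking the advertised bound. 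My plan for this is to have each node use its knowledge of the IDs in $N^{s-1}(\cdot,Q)$ on itself and its neighbors (given by hypothesis) to fix, once and for all, a canonical ordering of the trees incident to each of its edges, so that the sender and receiver of a shared edge agree implicitly, from the round number alone, on which tree each transmitted chunk belongs to; for the $Q$-message case the destination ID is already part of the packet, so forwarding is just a lookup into the local descendant list of $T_v$.
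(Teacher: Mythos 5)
Your broadcast half is essentially the paper's argument (per-edge congestion $O(\maxdeg)$ trees, bandwidth split among trees, pipelining down the depth-$s$ trees, and an implicit per-edge ordering so no source tags are needed), and your congestion counts for both tasks are correct. One small quibble there: the cleanest way to get the implicit ordering is not from the $N^{s-1}(\cdot,Q)$ sets (which the lemma only guarantees to nodes of $Q$, while intermediate tree nodes may lie outside $Q$), but from the shared tree pointers themselves -- the sender knows $\text{descendants}(u,T)$ and the receiver knows $\text{ancestor}(u',T)$ together with $\text{root}(T)$, so both endpoints agree on the set of trees using the edge and can sort the slots by root ID; this is exactly what the paper does.

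The $Q$-message half, however, has a genuine gap. Your routing rule asks an intermediate node $u'$ of $T_v$, upon receiving a packet tagged with $\ID(w)$, to forward it to ``the unique child whose subtree contains $w$, identified from its descendant list.'' But a BFS tree being known distributedly means each node knows only its immediate ancestor, its immediate children, and the root ID; it does \emph{not} know the membership of the subtrees below its children, and neither the lemma's assumptions nor knowledge of $N^{s-1}(\cdot,Q)$ of the children would supply this ($w \in N^{s-1}(u'',Q)$ only says $w$ is close to the child $u''$ in $G$, not that $w$ is a tree-descendant of $u''$ in $T_v$). So the forwarding decision cannot be made, and point-to-point routing inside the trees does not work as described. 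The paper avoids routing altogether: the root $x$ first hands each neighbor $w\in N(x)$ the entire bundle $S_{x,w}=\{(\text{msg}_{x,y},\ID(y)): y\in N^{s-1}(w,Q)\}$ (using its assumed knowledge of $N^{s-1}(w,Q)$), and this bundle of $O((m+a)\maxdeg)$ bits is then \emph{broadcast in full} inside the subtree $T_{x,w}$, with destinations filtering by their own ID -- which is exactly where the $\maxdeg^2$ in the bound comes from. Your scheme could be repaired by a preprocessing convergecast that teaches every tree node which $Q$-IDs lie below each of its children (the per-edge load $O(a\maxdeg^2)$ bits fits the budget), but as written this step is missing, so the $Q$-message protocol is not implementable from the stated assumptions.
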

\begin{proof}
	Both algorithms use the BFS tree rooted in each $v \in Q$ for sending messages from $v$ to nodes in $N^{s}(v)$. First, we show that any edge is in at most $P:=2 \maxdeg$ trees. Let $e=\{v,w\} \in E$ be any edge and let $x \in Q$ such that $e$ is part of the tree of $x$. The tree $T_x$ is formed by a breadth-first search from $x$ to depth $s$. Hence, either $x \in N^{s-1}(v,Q)$ or $x \in N^{s-1}(w, Q)$. The distance-$(s-1)$ $Q$-degree of any node is at most $\maxdeg$, so $e$ is in at most $2\maxdeg$ trees.
	
	\textit{Broadcast}: We use the broadcast algorithm from \cite[Section 5]{GGR20}. The following paragraph is a summary of their algorithm. Each tree $T_v$ is only allocated $b':=\bandwidth/P$ bits of communication bandwidth on its edges. The root $v$ breaks its message into $m/b'$ pieces. The message is sent from the root piece by piece, with the $i$th piece being sent in the $i$th round. Other nodes of the tree forward pieces to their descendants, without congestion. We can execute this algorithm in parallel in all trees -- in total, the number of bits per round sent on a single edge is at most $P\cdot b' = \bandwidth$. The full message can be formed without including information about each tree: given a set of pieces, a sender $v$ forms a message to each $w \in N(v)$ including the pieces for trees $T$ where $w \in \text{descendants}(v,T)$. The sender $v$ sorts the pieces by the ID of $\text{root}(T)$ and concatenates them. The receiver $w$ splits the message into $b'$-bit pieces and maps them to the corresponding trees, using knowledge of $\text{ancestor}(w,T)$. The total time required for the last piece of a message to reach all leaves is $O(s + m/b') = O(s + m P/\bandwidth)$ rounds. In our application, $P=2\maxdeg$, so the round complexity of the broadcast algorithm is $O(s + m\maxdeg/\bandwidth)$. 
	
	\textit{Q-message}: Consider any $x \in Q$. It has an $m$-bit message $\text{msg}_{x,y}$ to each $y \in N^{s}(x,Q)$. The total number of messages from $x$ is at most $\Delta \cdot \maxdeg$, because $|N^{s}(x,Q)| = |\cup_{w \in N(v)} N^{s-1}(w,Q)| \le \Delta \cdot \maxdeg$, since $d^{s-1}(w,Q) \le \maxdeg$ by assumption. The algorithm has two steps. In the first step, we distribute the messages evenly between the immediate neighbors of $x$. Each message $\text{msg}_{x,y}$ is packaged as a tuple of the form $(\text{msg}_{x,y}, \ID(y))$. Now, $x$ sends each immediate neighbor $w \in N(x)$ a set of messages $S_{x,w} := \{ (\text{msg}_{x,y}, \ID(y)) : y \in N^{s-1}(w,Q) \}$, using the assumption that $x$ knows $N^{s-1}(w,Q)$ for each $w \in N(x)$. This can send some tuples to multiple neighbors. Still, for any $w$, $|S_{x,w}| \le \maxdeg$, since $d^{s-1}(w,Q) \le \maxdeg$ by assumption. The set of tuples sent from $x$ to $w$ fit in $M:=(m+a)\maxdeg$ bits. The first step takes $O(M/\bandwidth)$ rounds, and it can be done by all nodes in $Q$ in parallel. 
	
	In the second step, we split the tree $T_x$ of each $x \in Q$ into $O(\Delta)$ subtrees $T_{x,w}$, rooted at each $w \in N(x)$. $T_{x,w}$ is the subtree induced by $w$ and \textit{all} descendants of $w$ in $T_x$. For any $x \in Q$, the subtrees of $T_x$ are edge-disjoint and node-disjoint. Earlier, we showed that for any $e \in E$, there are at most $2\maxdeg$ root nodes $x \in Q$ such that $e \in T_x$. By edge-disjointness of the subtrees of $T_x$, for any $e \in E$, there are at most $2\maxdeg$ subtrees $T_{x,w}$ such that $e \in T_{x,w}$. Now, the task in each subtree $T_{x,w}$ is to deliver the messages $\text{msg}_{x,y}$ to each $y \in N^{s-1}(w,Q) \cap V(T_{x,w})$. This is done in parallel in all subtrees of all $x \in Q$ as follows. The full set of tuples $S_{x,w}$ is sent as a single broadcast in $T_{x,w}$, using the previously described broadcast algorithm of \cite{GGR20}. All nodes $v \in T_{x,w}$ have the required knowledge to run the broadcast algorithm: each non-root node knows $\text{ancestor}(v,T_{x,w})$ and $\text{descendants}(v,T_{x,w})$, since they are the same as in $T_x$. The ID of the original root $x$ can be used as the root ID, since the subtrees of $T_x$ are node-disjoint. Running the broadcast algorithm takes $O(s + M\maxdeg/\bandwidth)$ rounds, where $M$ is the size of the broadcasted message and the depth of $T_{x,w}$ is at most $s-1$. The nodes $y \in N^{s-1}(w,Q) \cap V(T_{x,w})$ find $\text{msg}_{x,y}$ from the received set of tuples using their ID. The total complexity is $O\left(\frac{M}{\bandwidth} + s + \frac{M\maxdeg}{\bandwidth}\right) = O\left(s + \frac{(m+a)\maxdeg^2}{\bandwidth}\right)$ 
\end{proof}
The results of \Cref{lem:comms} are tight, which can be seen in \Cref{fig:commTools}. 
\begin{figure}[t]
	\caption{Example graph for \Cref{lem:comms} with nodes in $Q$ in grey. With $s=3$, the total number of messages over $\{v,w\}$ is up to $\maxdeg$ (when nodes in $Q$ broadcast) and $\maxdeg^2/4$ ($Q$-message).}
	\label{fig:commTools}
	\centering
	\includegraphics[width=0.45\textwidth]{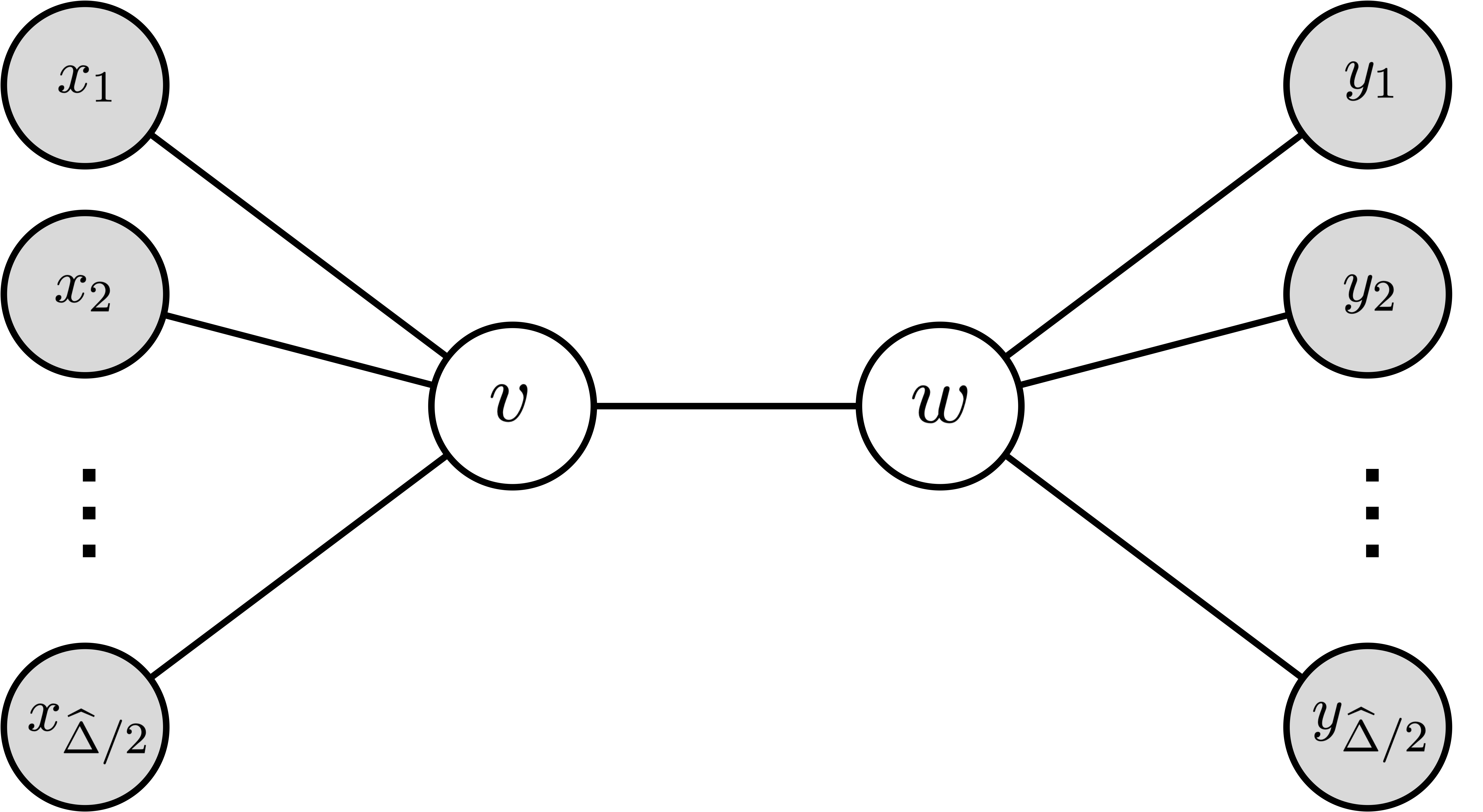}
\end{figure}

\begin{lemma}[Convergecast in spanning tree] \label{lem:convergecast}
	Let $T$ be a spanning BFS tree rooted in some $r \in V$, given distributedly. Suppose that each $v \in V$ has a $m$-bit non-negative integer $x_v$. There is a distributed algorithm that computes the sum $\sum_{v \in V} x_v $ in the root. The algorithm runs in \[O\left(\diam(G) +  (m+\log n)/\bandwidth\right)\] rounds in the \congest model, using $\bandwidth$-bit messages.  
\end{lemma}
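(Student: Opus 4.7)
My key observation is that $\sum_v x_v \le n \cdot 2^m$, so the final sum (and in fact every partial subtree sum) fits in $s := m + \lceil \log n \rceil$ bits. I will transmit the subtree sums along the BFS tree as $K := \lceil s / \bandwidth \rceil$ chunks of $\bandwidth$ bits each, ordered from least significant to most significant, in a pipelined bottom-up convergecast. Concretely, every node $v$ maintains $S_v := x_v + \sum_{c \in \mathrm{children}(v)} S_c$ and forwards it chunk-by-chunk to its parent, so that later chunks at $v$ overlap in time with earlier chunks already traveling higher up the tree.

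\textbf{Schedule and correctness.} I will set $h(v) := 0$ if $v$ is a leaf and $h(v) := 1 + \max_{c} h(c)$ otherwise, where $c$ ranges over the children of $v$; note $h(r) \le \diam(G)$. The schedule is that every node $v$ sends the $i$-th chunk of $S_v$ to its parent in round $h(v) + i + 1$, for $i = 0, 1, \dots, K-1$. Because each child $c$ of $v$ satisfies $h(c) \le h(v) - 1$, that child emits its $i$-th chunk in round $h(c) + i + 1 \le h(v) + i$, so $v$ has it available at the start of round $h(v) + i + 1$, exactly in time to combine it with the chunks from $v$'s other children and with chunk $i$ of $x_v$ and then forward the result.

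\textbf{Carry handling and complexity.} When node $v$ computes chunk $i$ it sums at most $\Delta + 1$ values of $\bandwidth$ bits together with the carry left over from chunk $i - 1$; the low $\bandwidth$ bits become chunk $i$ of $S_v$, and the high bits form the new carry. A short induction (starting with no carry before chunk $0$) shows that the carry is always bounded by $2(\Delta + 1)$, hence fits in $O(\log n)$ bits; it is kept purely as local state at $v$ and is never transmitted, so the $\bandwidth$-bit message budget is respected. Under the schedule above, the root $r$ receives its last chunk in round $h(r) + K = O(\diam(G) + (m + \log n)/\bandwidth)$, matching the claim. The part I would want to double-check is precisely the carry bookkeeping in the regime $\bandwidth < \log n$, where a single carry can span more than one $\bandwidth$-bit block; but since the carry lives only inside $v$'s local memory, this only affects $v$'s per-round local arithmetic and not the communication pattern, so the round complexity is unaffected.
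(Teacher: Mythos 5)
Your proof is correct, but it takes a genuinely different route from the paper: the paper disposes of this lemma in one line by invoking the pipelined aggregation primitive of \cite[Lemma 5.1]{GGR20} as a black box (observing, as you do, that every partial sum is at most $n\cdot 2^m$ and hence fits in $O(m+\log n)$ bits), whereas you reprove that primitive from scratch via a chunked, least-significant-first pipelined convergecast with local carry bookkeeping. Your argument is sound: chunk $i$ of the subtree sum $S_v$ depends only on chunk $i$ of $x_v$ and of each child's $S_c$ plus the carry accumulated from chunks $0,\dots,i-1$, the carry stays bounded by $O(\Delta)$ and is never transmitted, and the height-plus-$K$ schedule gives $O(\diam(G) + (m+\log n)/\bandwidth)$ rounds since the BFS tree's height is at most $\diam(G)$. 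The one implementation detail worth cleaning up is that your fixed schedule assumes every node knows its height $h(v)$ in the tree, which is not part of the paper's notion of a distributedly given BFS tree; this is easily avoided by replacing the fixed schedule with the event-driven variant in which $v$ forwards chunk $i$ as soon as it has received chunk $i$ from all children, for which the same induction shows chunk $i$ leaves $v$ by round $h(v)+i+1$ at the latest. In short, the paper's proof buys brevity by citation; yours buys self-containedness and makes explicit the carry/bandwidth issue (including the regime $\bandwidth < \log n$) that the citation hides.
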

\begin{proof}
	The sum $\sum_{v \in V} x_v$ is at most $n\cdot 2^b$. Using the algorithm of \cite[Lemma 5.1]{GGR20} with $O(\log n + b)$-bit messages, $\sum_{v \in V} x_v$ is computed in the root in $O(\diam(G) + (m+\log n)/\bandwidth)$ rounds.
\end{proof}

\noindent A spanning BFS tree for \Cref{lem:convergecast} can be formed by leader election in $O(\diam(G))$ time, by starting a BFS token from each node and forwarding the token of the tree $T$ whose root has the smallest identifier. When using \Cref{lem:convergecast} in network decomposition clusters, we use the Steiner tree of the cluster as the spanning tree. 

\subsection*{Simulating virtual graphs with communication network $G$}

\begin{definition}[Virtual graph $H=(V_H,E_H)$]
	Let $G=(V,E)$ be the communication network. Let $V_H \subseteq V$ and $E_H \subseteq \{\{v,w\} \mid v,w \in V_H, v \neq w \}$.  $H=(V_H, E_H)$ is called a \textit{virtual graph} of $G$.
\end{definition}

\noindent Any subgraph of $G$ is a virtual graph of $G$, but the definition does not require the edges of $H$ to be present in $G$. We may for example define $H$ as $G^s$, for some $s \ge 1$. 

Let $\mathcal{A}$ be any \congest algorithm (deterministic or randomized). Suppose that we want to run $\mathcal{A}$ on $H$, but the available communication network is $G$ instead of $H$. We define generally what it means to \textit{simulate $\mathcal{A}$ on $H$ with communication network $G$}:

\begin{definition}[Simulating $\mathcal{A}$ on virtual graph $H$ with communication network $G$]
	Let $\mathcal{A}$ be any \congest algorithm and let $H=(V_H, E_H)$ be a virtual graph of $G$. Assume that nodes in $V_H$ know their input for $\mathcal{A}$, if there is any, as well as know their degree in $H$, if it is required in the algorithm. $\mathcal{A}$ is simulated round by round. To simulate a round of $\mathcal{A}$:
	\begin{enumerate}
		\item Each $v \in V_H$ form a set of messages $\{\text{msg}_{v,w} : w \in N_H(v) \}$, based on their state in $\mathcal{A}$, in zero rounds. 
		\item Communication between neighbors in $H$ is done in the communication network $G$, using some message passing algorithm. The algorithm may relay to each $v \in V_H$ the full set of incoming messages $\{\text{msg}_{w,v} : w \in N_H(v) \}$, or some aggregation of the messages with sufficient information for $v$ to update its state in $\mathcal{A}$, exactly as if it received the full set of messages. The type of aggregation is specified alongside the simulated algorithm. 
		\item Each $v \in V_H$ updates its state based on its previous state in $\mathcal{A}$ and the received information, without further communication. 
	\end{enumerate}
\end{definition}

\noindent From now on, the communication network is assumed to be $G$, without explicitly specifying it.

The efficiency of simulating $\mathcal{A}$ on $H$ depends heavily on the type of algorithm, the virtual graph $H$ and how $H$ relates to the communication network $G$. In general, edges of $G$ may be forced to relay messages between an arbitrary number of pairs of nodes in $V_H$. Furthermore, in many algorithms, it is not possible to compact the set of relayed messages. We will do algorithm-specific optimizations later. 
For now, we focus on the setting where the virtual graph is $G^s[Q]$, for some $s \ge 2$ and $Q \subseteq V$. 
The next lemma shows that if $Q$ satisfies certain sparseness conditions, any \congest algorithm can be simulated efficiently: 

\begin{lemma}[Simulating $\mathcal{A}$ on $G^s{[Q]}$ with communication network $G$]
	\label{lem:simulation}
	Let $\mathcal{A}$ be any \congest algorithm. Let $s \ge 1$ and $\maxdeg \ge 1$ be integers. Let $Q \subseteq V$ such that all $v \in V$ have at most $\maxdeg$ distance-($s-1$) $Q$-neighbors. For each $v \in Q$, let $T_v$ be a BFS tree rooted in $v$ with depth $s$, known distributedly. Assume that each $v \in Q$ knows the set of IDs in $N^{s}(v,Q)$, as well as $N^{s-1}(w,Q)$ for each $w \in N(v)$. Let $T^{\mathcal{A}}(H)$ be the time complexity of running algorithm $\mathcal{A}$ on an input graph $H$. There is a deterministic \congest algorithm that simulates $\mathcal{A}$ on $G^s[Q]$ with communication network $G$ in $O((s + \maxdeg^2) \cdot T^{\mathcal{A}}(G^s[Q]))$ rounds. 
\end{lemma}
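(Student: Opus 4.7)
The plan is a straightforward round-by-round simulation in which one round of $\mathcal{A}$ on $G^s[Q]$ is implemented by a single invocation of the $Q$-message primitive from Lemma~\ref{lem:comms}. The first observation is that all preconditions of that primitive, invoked with the same parameters $s$ and $\maxdeg$, are exactly the hypotheses we are given: every $v \in V$ has at most $\maxdeg$ distance-$(s-1)$ $Q$-neighbors, each $v \in Q$ owns a depth-$s$ BFS tree $T_v$ distributed over the network, and each $v \in Q$ already knows the IDs in $N^s(v,Q)$ as well as in $N^{s-1}(w,Q)$ for every $w \in N(v)$. Hence no preprocessing is required before the simulation begins.

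To simulate a single round of $\mathcal{A}$, each $v \in Q$ locally forms, from its current $\mathcal{A}$-state, the outgoing messages $\text{msg}_{v,w}$ for every $w \in N^s(v,Q)$, which is precisely the set of $G^s[Q]$-neighbors of $v$. Since $\mathcal{A}$ is a \congest algorithm, the messages have size $m = O(\log n)$ and the identifiers have size $a = O(\log n)$. We now invoke the $Q$-message primitive with bandwidth $\bandwidth = \Theta(\log n)$, delivering all messages in $O(s + (m+a)\maxdeg^2/\bandwidth) = O(s + \maxdeg^2)$ rounds. Each $v \in Q$ then updates its $\mathcal{A}$-state from the received messages without any additional communication, which matches the required interface for the three-step round-simulation definition of $\mathcal{A}$ on a virtual graph.

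Summing this per-round cost over the $T^{\mathcal{A}}(G^s[Q])$ rounds of $\mathcal{A}$ yields the claimed bound $O\bigl((s + \maxdeg^2) \cdot T^{\mathcal{A}}(G^s[Q])\bigr)$. The only point that requires a moment of care is the side condition $\bandwidth \ge \maxdeg$ that Lemma~\ref{lem:comms} imposes: in all sparsification-based applications of this simulation lemma we will have $\maxdeg = O(\log n)$, so the condition is met by the standard $\Theta(\log n)$-bit \congest bandwidth, and in the (non-occurring) regime $\maxdeg > \log n$ one can simulate a wider channel by using $O(\maxdeg/\log n)$ physical rounds per simulated round, which is absorbed in the stated asymptotics. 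No deeper obstacle is expected, since the genuinely technical work — routing through the overlapping depth-$s$ BFS trees while controlling per-edge congestion — has already been encapsulated inside Lemma~\ref{lem:comms}, and the present lemma is essentially a packaging statement that lifts that primitive to an arbitrary \congest algorithm.
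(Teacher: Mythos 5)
Your proposal is correct and follows essentially the same route as the paper: one round of $\mathcal{A}$ is simulated by a single $Q$-message invocation of \Cref{lem:comms}, whose preconditions coincide with the lemma's hypotheses, giving $O(s+\maxdeg^2)$ rounds per simulated round and the claimed total. Your extra remark about the side condition $\bandwidth \ge \maxdeg$ is a reasonable (and slightly more explicit) addition that the paper's proof glosses over, and it does not change the argument.
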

\begin{proof}
	We use the \textit{$Q$-message} algorithm of \Cref{lem:comms}, which allows each $v \in Q$ to send a message of $O(\log n)$ bits to each $w \in N^s(v,Q)$ in $O(s + \maxdeg^2)$ rounds. These are exactly the neighbors of $v$ in $G^s[Q]$.  Note that the required knowledge to run $Q$-message is available by assumption. Simulating each round takes $O(s + \maxdeg^2)$ rounds, so the total time complexity is $O((s + \maxdeg^2) \cdot T^{\mathcal{A}}(G^s[Q]))$.
\end{proof}

\section{Sparsification of Power Graphs}
\label{sec:detSparsification}

The goal of this section is to prove \Cref{lem:sparsification}, our main sparsification result for power graphs. First, in the next two sections, we prove the following deterministic sparsification result.  
\Cref{lem:Gsparsification} is standalone and does not refer to power graphs. In \Cref{sec:sparsGk}, we use it iteratively  to obtain a sequence $V\supseteq Q_1\supseteq Q_2\supseteq \ldots \supseteq Q_k$ where $Q_i$ is sparse in the power graph~$G^i$. 

\newpage
\begin{lemma}[Deterministic Sparsification] \label{lem:Gsparsification}
    Let $A \subseteq V$ be a set of initially active vertices. There is a deterministic distributed algorithm that finds a set of vertices $Q \subseteq A$ such that for all $v \in V$, 
    \begin{itemize}
        \item (bounded $Q$-degree): $d(v,Q) \le 72\log n = O(\log n)$
        \item (domination): $\dist_{G}(v, Q) \le 2 + \dist_{G}(v, A)$
    \end{itemize}
    Let $\maxactivedeg \ge \max_{v \in V} d(v,A)$ be an input parameter given to all nodes, which is at least the maximum number of active neighbors. 
    The algorithm runs in
    $O(\diam(G) \cdot \log^2 n \cdot \log \maxactivedeg)$ rounds in \congest.
\end{lemma}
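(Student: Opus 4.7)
The plan is to derandomize a staged random sampling process. I would run the algorithm in $L = O(\log \maxactivedeg)$ \emph{stages}, maintaining at each stage $i$ a set $A_i \subseteq A$ of still-active vertices (with $A_1 = A$). A target active degree $\maxactivedeg_i = \maxactivedeg / 2^{i-1}$ is associated to stage $i$. In stage $i$, each $v \in A_i$ is independently put into a candidate set $S_i$ with probability $p_i = c \log n / \maxactivedeg_i$ for a suitable constant $c$. The set $Q$ that we eventually output is $\bigcup_i S_i$. After stage $i$, every vertex in $N^2(S_i)$ is removed from $A_i$ to form $A_{i+1}$; this is the mechanism that preserves the degree bound, since once a vertex $v$ has an $S_i$-neighbor $u$, then every other active vertex $w \in N(v)$ lies in $N^2(u)$ and so is removed, meaning $v$ can never collect additional sampled neighbors in subsequent stages. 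Finally, any vertex that is still in $A_{L+1}$ has $O(\log n)$ active neighbors, so placing all of $A_{L+1}$ into $Q$ adds only $O(\log n)$ to the neighbor count. Domination by distance $2 + \dist_G(v, A)$ follows because every vertex that was ever active either ends in $Q$ or had a $N^2$-neighbor added to $Q$.

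The heart of the argument is to show that a single stage can be implemented deterministically. For stage $i$, I would define for each $v \in V$ two bad events: (E1) $|N(v) \cap S_i| > 72 \log n$, and (E2) if $d(v, A_i) \ge \maxactivedeg_i$, then $N(v) \cap S_i = \emptyset$. A standard Chernoff-style computation, carried out with $O(\log n)$-wise independent bits rather than fully independent ones (using a tail bound for $k$-wise independent sums), shows that both events have probability $\le 1/n^3$. Hence the expected number of bad events is at most $2/n$, and by a union bound a seed that avoids all of them exists. By \Cref{lem:hashfamily}, such an $O(\log n)$-wise independent family can be sampled from a seed of only $O(\log^2 n)$ bits, and each $v$ can evaluate its own indicator variable from this seed in local computation.

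Derandomization proceeds bit by bit via the method of conditional expectations. I would elect a leader root $r$ and build a spanning BFS tree of $G$ once in $O(\diam(G))$ rounds. For each of the $O(\log^2 n)$ seed bits of stage $i$, every $v$ computes the conditional expected number of its own bad events given the bits fixed so far, under both possible extensions of the next bit; this is purely a local computation because $v$'s bad events depend only on the sampling indicators of its neighbors in $A_i$, which $v$ knows. The two conditional totals over all $v$ are then summed up the BFS tree using \Cref{lem:convergecast} in $O(\diam(G))$ rounds each, the leader picks the value of the bit that keeps the total $\le 2/n < 1$, and broadcasts it back in $O(\diam(G))$ rounds. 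After all $O(\log^2 n)$ bits are fixed, no bad event occurred (since the total is $<1$ and integer-valued), so every node satisfies (E1) and (E2) in this stage. Each node then locally determines whether it belongs to $S_i$, and the deactivation step can be performed in $2$ rounds of local communication. Summing over $L = O(\log \maxactivedeg)$ stages, each using $O(\log^2 n)$ bits at $O(\diam(G))$ rounds per bit, gives the claimed $O(\diam(G) \cdot \log^2 n \cdot \log \maxactivedeg)$ round complexity.

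The main obstacle I anticipate is the tail bound for $k$-wise independent sums needed to make (E1) and (E2) hold with $1/\poly(n)$ probability while keeping $k = O(\log n)$; I would rely on the standard Schmidt--Siegel--Srinivasan style concentration inequality for $k$-wise independent variables. A secondary subtlety is ensuring that all ingredients for locally evaluating conditional expectations are actually available: each $v$ must know $d(v, A_i)$ and the identifiers of its active neighbors, which can be refreshed in a single round at the start of each stage. Once these pieces are in place, the construction cleanly fits the template sketched in \Cref{ssec:nutshell}.
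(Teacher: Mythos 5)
Your proposal is correct and follows essentially the same route as the paper: a staged sampling process with sampling probability doubling each stage, deactivation of the $2$-neighborhood of sampled nodes (which is exactly the paper's argument that no vertex accumulates sampled neighbors across stages), bad events bounded via the Schmidt--Siegel--Srinivasan tail bound under $O(\log n)$-wise independence, and per-stage derandomization of an $O(\log^2 n)$-bit seed by the method of conditional expectations aggregated over a global BFS tree, yielding the same $O(\diam(G)\cdot\log^2 n\cdot\log\maxactivedeg)$ bound. The only blemishes are cosmetic (e.g., the expected number of bad events is $2/n^2$, not $2/n$, and one should cap the number of stages so the sampling probability stays below $1$), neither of which affects the argument.
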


\subsection{Randomized Sparsification via Sampling}
\label{sec:randomized sampling}

We start with a \textit{randomized} sparsification algorithm to find a certain sparse set of vertices, satisfying the properties of \Cref{lem:Gsparsification}, with high probability. See Algorithm \ref{algorithm:samplingrandomized}. The algorithm consists of $r := \lfloor \log \maxactivedeg - \log \log n\rfloor-5$ stages. Let $H_1 := A$, where $A \subseteq V$ is any set of initially active nodes. For $1 \le i \le r$, $H_i$ is a set of \textit{active nodes} in the respective stage.  
In each stage $i$, we sample a set $M_i \subseteq H_i$. Each node $v \in H_i$ is included in $M_i$ with probability $\frac{24\cdot 2^i \cdot \log n}{\maxactivedeg}$, where the decisions of the nodes are $8 \log n$-wise independent. We deactivate all sampled nodes, as well as active nodes that have a sampled node within 2 hops in $G$. 
This is done by sending a flag from each sampled node, propagated for two hops, where multiple incoming flags can be forwarded as one. Once nodes are deactivated, they stay inactive forever. Let $H_{i+1} = H_i \setminus (M_i \cup N^2(M_i))$ 
be the remaining active nodes. After $r$ stages, the algorithm returns $Q := \cup_{i=1}^{r+1} M_i$, consisting of the sampled sets $M_1, \dots, M_r$ and the remaining active nodes $M_{r+1}:=H_{r+1}$. 

\begin{algorithm}[h]
    \DontPrintSemicolon
    \SetAlgoLined
    \SetKwIF{If}{ElseIf}{Else}{if}{:}{else if}{else}{}
    \SetKwFor{For}{for}{:}{}
    \SetKwFor{ForEach}{foreach}{:}{}
    \KwIn{$\maxactivedeg \ge \max_{v \in V} d(v,A)$,\\ \hspace{2mm} Each $v \in V$ knows if it is in a set of initially active nodes $A \subseteq V$.}
    Set $r := \lfloor \log \maxactivedeg - \log \log n\rfloor-5$ and $H_1 := A$\;
    \For{\upshape{stage} $i= 1, \dots, r$}{
        $M_i := \emptyset$\;
        \ForEach{$v \in H_i$ in parallel}{
            Join $M_i$ with probability $\frac{24\cdot 2^i \cdot \log n}{\maxactivedeg}$\;
        }
        $H_{i+1} := H_i \setminus (M_i \cup N^2(M_i))$\;
    }
    $M_{r+1} := H_{r+1}$\;
    \Return{$Q:=\cup_{i=1}^{r+1} M_i$}
    \caption{Randomized sparsification}
    \label{algorithm:samplingrandomized}
\end{algorithm}

\vspace{-3mm}
\begin{definition}[Active degree]
    For $v \in V$, its \textit{active degree} in stage $i$ is defined as $d(v,H_i)$.
    We say that $v \in V$ has a \textit{high active degree in stage~$i$} if $d(v,H_i) \ge \maxactivedeg / 2^i$. 
\end{definition}

\noindent The active degree of a node changes throughout the algorithm. Note that inactive nodes are never reactivated, so we have $d(v, H_1) \ge d(v,H_2) \ge \dots \ge d(v,H_{r+1})$ for all $v \in V$. Also, whether $v$ itself is active or inactive does not affect its active degree directly. Next, we prove the properties that hold after each stage with high probability.

\begin{theorem}[Theorem 2.5 in \cite{SSS95}]
    \label{theorem:chernoff}
    Let $X$ be the sum of $p$-wise independent $[0, \lambda]$-valued random variables with expectation $\mu = \ev(X)$ and let $\delta \le 1$. 
    Then $\pr(|X - \mu| \ge \delta \mu) \le e^{-\lfloor \min\{ p/2, \delta^2 \mu / (3\lambda) \} \rfloor }$. 
\end{theorem}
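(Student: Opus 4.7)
The plan is to prove this Chernoff-type tail bound for $p$-wise independent summands by the classical method of high moments, which is the standard route for such results. First I would rescale so that $\lambda = 1$ by dividing each $X_i$ (and hence $X$ and $\mu$) by $\lambda$; the stated bound is invariant under this reduction, since $\mu$ becomes $\mu/\lambda$ and $\delta$ is unchanged. Set $Y_i = X_i - \ev[X_i] \in [-1,1]$ and $Y = \sum_i Y_i = X - \mu$. For an even integer $2q \le p$ to be optimized later, Markov's inequality gives
\[
    \pr(|Y| \ge \delta\mu) \;\le\; \frac{\ev[Y^{2q}]}{(\delta\mu)^{2q}},
\]
so everything reduces to estimating the $2q$-th central moment.

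The next step expands $\ev[Y^{2q}] = \sum_{i_1,\ldots,i_{2q}} \ev[Y_{i_1}\cdots Y_{i_{2q}}]$ and exploits $p$-wise independence: since $2q \le p$, every such expectation factors across the distinct indices appearing in the tuple. Because $\ev[Y_i] = 0$, any tuple in which some index appears with multiplicity exactly one contributes zero. Thus only tuples whose distinct entries each appear with multiplicity $\ge 2$ matter, and any such tuple has at most $q$ distinct indices. Grouping tuples by their induced set partition $\pi$ of $\{1,\ldots,2q\}$ (all blocks of size $\ge 2$) and using $|Y_i|\le 1$ to bound $\ev[Y_i^j] \le \ev[Y_i^2] \le \ev[X_i]$ for every $j\ge 2$, the sum over index choices consistent with a fixed $\pi$ is at most $\mu^{|\pi|}$. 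Counting such partitions weighted by their number of orderings then yields a bound of the form $\ev[Y^{2q}] \le (C q \max(1,\mu))^q$ for an absolute constant $C$.

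Substituting back into Markov's inequality gives $\pr(|Y|\ge\delta\mu) \le (Cq/(\delta^2\mu))^q$. Choosing $q = \lfloor \min\{p/2,\; \delta^2\mu/(C'\lambda)\}\rfloor$ for a suitable constant $C'$ makes the ratio inside the parenthesis at most $1/e$, producing the claimed bound $e^{-q}$. The main obstacle I expect is the combinatorial bookkeeping needed to pin the constant in the exponent down to the stated $1/3$: one must track, for each partition $\pi$, not merely the number of blocks but the distinction between blocks of size exactly $2$, which contribute $\ev[Y_i^2]\le\ev[X_i]$ tightly, and larger blocks, where one must absorb $\ev[Y_i^j]$ for $j\ge 3$ more efficiently than the naive $\ev[X_i]^{j-1}$. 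The two terms in the $\min$ have a clean interpretation: $p/2$ is the ceiling imposed by only having $p$-wise independence (moments above order $p$ need not behave well), while $\delta^2\mu/(3\lambda)$ is precisely the Bernstein-type exponent that would arise under full independence.
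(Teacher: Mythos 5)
The paper never proves this statement: it is imported verbatim as Theorem~2.5 of the cited work \cite{SSS95} and used as a black box, so there is no internal proof to compare yours against. That said, your sketch follows the standard moment-method route, which is essentially the technique of Schmidt--Siegel--Srinivasan themselves: bound the tail by an even central moment of order $2q\le p$ via Markov, use $p$-wise independence to factor expectations of products over distinct indices, observe that centering kills every tuple containing a singleton index, and sum over partitions into blocks of size at least two. Structurally this is sound, and in the only regime where the second term of the $\min$ is active (so that $q\le \delta^2\mu/(3\lambda)\le\mu/\lambda$ after your rescaling) your moment estimate of the form $(Cq\mu)^q$ is of the right shape. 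What your argument does not deliver --- as you candidly note --- is the specific constant $3$ in the exponent $\delta^2\mu/(3\lambda)$: as written you obtain $e^{-\lfloor\min\{p/2,\,\delta^2\mu/(C'\lambda)\}\rfloor}$ for an unspecified absolute constant $C'$, and sharpening $C'$ to $3$ requires exactly the careful treatment of blocks of size two versus larger blocks that you defer. This matters here because the paper's application (\Cref{lem:randstage}) plugs the constant $3$ into explicit numerical calculations ($48\log n/12$ and $72\log n/16$), so a proof with a weaker constant would force those parameters to be re-tuned. For the paper's purposes, citing \cite{SSS95} is the intended resolution; if you want a self-contained proof of the exact statement, you would need to carry out the partition bookkeeping at the stated precision rather than up to an absolute constant.
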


\begin{lemma}[Stage $i$ of randomized sparsification] \label{lem:randstage}
    Fix some stage $1 \le i \le r$ and let $H_i \subseteq V$ be a set of active nodes, such that all nodes $v \in V$ have at most $\maxactivedeg/2^{i-1}$ neighbors in $H_i$. The $i$th stage of Algorithm \ref{algorithm:samplingrandomized} returns $M_i$ and $H_{i+1}$ such that for all $v \in V$:
    \begin{itemize}
        \item [(i)] $d(v,M_i) \le 72\log n$, with probability at least $1-1/n^3$. 
        \item [(ii)] \textbf{if} $d(v,H_i) \ge \maxactivedeg / 2^i$, \textbf{then} $v \in M_i \cup N(M_i)$, with probability at least $1-1/n^3$. 
        \item [(iii)] $d(v,H_{i+1}) < \maxactivedeg /2^{i}$, with probability at least $1-1 / n^3$.   
    \end{itemize}
    One stage requires 2 rounds. The claims hold if the random choices are $8\log n$-wise independent.  
    \end{lemma}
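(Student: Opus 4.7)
The plan is to apply the $p$-wise Chernoff bound (Theorem \ref{theorem:chernoff}) with $p = 8\log n$ to each of the three claims separately. Fix a stage $i$ and a node $v \in V$. Let $\rho := \frac{24 \cdot 2^i \log n}{\maxactivedeg}$ denote the sampling probability, and for each $u \in N(v) \cap H_i$ let $X_u := \mathbf{1}[u \in M_i]$, so that $d(v, M_i) = \sum_{u \in N(v) \cap H_i} X_u$; these indicators are $8\log n$-wise independent Bernoulli$(\rho)$ variables, hence we may set $\lambda = 1$ in Theorem \ref{theorem:chernoff}.

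For \textbf{(i)}, the hypothesis $d(v, H_i) \le \maxactivedeg/2^{i-1}$ yields $\mu := \ev[d(v, M_i)] \le 48 \log n$. Since Theorem \ref{theorem:chernoff} requires $\delta \le 1$, if $\mu < 48 \log n$ I would pad the sum with enough additional (jointly $8\log n$-wise independent) Bernoulli$(\rho)$ dummies to raise the mean to exactly $48 \log n$, obtaining an upper-dominating sum $Y$. Applying Theorem \ref{theorem:chernoff} to $Y$ with $\delta = 1/2$ gives $\pr(d(v,M_i) \ge 72\log n) \le \pr(Y \ge 72\log n) \le e^{-\min(4\log n,\,4\log n)} \le n^{-4}$.

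For \textbf{(ii)}, assuming $d(v, H_i) \ge \maxactivedeg / 2^i$, I would fix any subset $S \subseteq N(v) \cap H_i$ with $|S| = \lceil \maxactivedeg / 2^i \rceil$ and bound the sub-sum $X := \sum_{u \in S} X_u$, which has expectation $\mu_S \ge 24\log n$. Writing $\{X=0\} \subseteq \{|X-\mu_S| \ge \mu_S\}$ and invoking Theorem \ref{theorem:chernoff} with $\delta = 1$ yields $\pr(X=0) \le e^{-\min(4\log n,\,8\log n)} \le n^{-4}$; on the complementary event $v$ has a sampled neighbor, so $v \in M_i \cup N(M_i)$. Part \textbf{(iii)} then reduces to (ii): if $d(v,H_i) < \maxactivedeg/2^i$ then $d(v,H_{i+1}) \le d(v,H_i)$ already satisfies the bound, while otherwise (ii) provides, with probability $\ge 1-n^{-3}$, some $u \in M_i \cap (\{v\} \cup N(v))$; every neighbor $w$ of $v$ then lies within distance $2$ of $u$ (either $w=u$, or $v$ is a common neighbor), so $w \in M_i \cup N^2(M_i)$ and hence $w \notin H_{i+1}$, giving $d(v,H_{i+1}) = 0$.

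Finally, each stage executes the sampling locally---using a shared $8\log n$-wise independent hash seed from Lemma \ref{lem:hashfamily} distributed once at the start---and then propagates a single deactivation flag to $N^2(M_i)$ in $2$ communication rounds (multiple incoming flags at any relay are merged into one), matching the claimed $2$-round bound per stage. The main subtlety I anticipate is the $\delta \le 1$ restriction on Theorem \ref{theorem:chernoff} in part (i): it is resolved by the stochastic-dominance padding trick above, which is standard but needs to be phrased carefully so that the padded variables remain jointly $8\log n$-wise independent with the original $X_u$'s.
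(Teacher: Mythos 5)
Your proposal is correct and follows essentially the same route as the paper's proof: padding the neighborhood with dummy (the paper's ``fake'') variables to fix the mean at $48\log n$ for (i), bounding a fixed subset of size $\maxactivedeg/2^i$ for (ii) (the paper uses $\delta=3/4$ instead of $\delta=1$, which is immaterial), and deducing (iii) from (ii) together with the distance-$2$ deactivation, plus the same $2$-round accounting. The only nitpick is that the exponent in Theorem~\ref{theorem:chernoff} carries a floor, so the clean bound is $n^{-3}$ (as in the statement) rather than the $n^{-4}$ you write, and the shared seed distribution you mention belongs to the derandomized version, not to this randomized lemma.
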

\begin{proof}
    For an active node $w \in H_i$, let $X_w$ be an indicator variable for the event that $w$ is sampled. 

\noindent     \textbf{Proof of (i).} Let $v \in V$ be any vertex. By assumption, $v$ has at most $\maxactivedeg /2^{i-1}$ active neighbors.  Let $W$ be a set of fake vertices, added to the set of active neighbors of $v$ such that $v$ has exactly $\maxactivedeg /2^{i-1}$ active neighbors. Define the indicator variable $X_{w}$ for fake vertices $w \in W$ similarly as for real active nodes. Clearly $v$ has at most $72\log n$ real sampled neighbors whenever at most $72 \log n$ vertices in $N(v,H_i) \cup W$ are sampled. The probability of this event can be lower bounded using \Cref{theorem:chernoff}. Let $X = \sum_{w \in N(v,H_i) \cup W} X_w$ be a sum of the indicator variables, with expected value $\mu := \ev[X] = \frac{\Delta_A}{2^{i-1}} \frac{24\cdot 2^i \cdot \log n}{\maxactivedeg} = 48\log n$. Let $\delta = \frac{1}{2}$ in \Cref{theorem:chernoff}:
    \begin{align*}
        \pr(X \ge 72\log n) 
        &\le \pr(|X - 48\log n| \ge \frac{1}{2}48\log n) \\
        &\le e^{-\lfloor \min\{ 8\log n/2, 48\log n / 12 \} \rfloor } 
        = e^{-\lfloor 4\log n \rfloor } \le n^{-3} \ . 
    \end{align*}
    Hence, $v$ has at most $72\log n$ neighbors in $M_i$ with probability at least $1-1/n^3$.

\noindent \textbf{Proof of (ii).} Let $v \in V$ be a node with at least $\maxactivedeg / 2^i$ active neighbors. We will compute a lower bound for the probability that $v$ is adjacent to a sampled node. Fix any subset $S \subseteq N(v,H_i)$ of active neighbors such that $|S| = \maxactivedeg /2^i$. Let $X = \sum_{w \in S} X_w$ be a sum of indicator variables, with expected value $\mu := \ev[X] =  \frac{\maxactivedeg}{2^i} \frac{24 \cdot 2^i \cdot \log n}{\maxactivedeg}=24\log n$. At least one neighbor is sampled when $X > 0$. The probability of $X=0$ can be upper bounded using \Cref{theorem:chernoff}. Let $\delta = \frac{3}{4}$:
    \begin{align*}
        \pr(X=0)
        &\le \pr(|X - \mu| \ge (3/4) \cdot \mu) \\
        &\le e^{-\lfloor \min\{ 8\log n/2, 72\log n / 16 \} \rfloor }
        = e^{-\lfloor 4\log n \rfloor} \le n^{-3} \ .
    \end{align*}
    Hence, at least one active neighbor of $v$ is sampled with probability at least $1- 1/n^3$.

    \textbf{Proof of (iii).} First, consider $v \in V$ with $d(v,H_i) < \maxactivedeg / 2^i$. Since nodes are never reactivated, the active degree of $v$ will be less than $\maxactivedeg / 2^i$ in stage $i+1$ as well. Now, let $v \in V$ be any node with high active degree in stage $i$, that is, $d(v,H_i) \ge \maxactivedeg / 2^i$. The algorithm deactivates the distance-2 neighborhood of sampled nodes. Hence, all of $N(v,H_i)$ is deactivated, whenever some $w \in N(v,H_i)$ is sampled. By (ii), this happens with probability at least $1-\frac{1}{n^3}$. In this case, the active degree $d(v,H_j)$ of $v$ is zero for all remaining stages $i < j \le r$.
\end{proof}

We are now ready to prove a randomized version of \Cref{lem:Gsparsification}, running in $O(\log \Delta)$ rounds: The $r=O(\log \maxactivedeg)=O(\log \Delta)$ stages of Algorithm \ref{algorithm:samplingrandomized} produce $Q := \cup_{i=1}^{r+1} M_i$, consisting of the sampled sets $M_1, \dots, M_r$ and the remaining active nodes $M_{r+1}:=H_{r+1}$. \Cref{lem:randstage}~(i) states that any node has $O(\log n)$ neighbors in any $M_i$, with high probability, given that maximum active degree decreased to $\maxactivedeg/2^{i-1}$ in the previous stage. The maximum active degree decreases for any node, with high probability by \Cref{lem:randstage}~(iii). 
The sets $M_i$ and $M_j$ are at distance at least $2$ from each other for any $i \neq j$, so any node does not have neighbors in more than one $M_i$. Taking a union bound over all stages, we get that any node has at most $O(\log n)$ neighbors in the whole $Q=\cup_{i=1}^{r+1} M_i$. This also includes the remaining active nodes $M_{r+1}:=H_{r+1}$. 
Next, we construct a deterministic sparsification algorithm by derandomizing the random choices of the active nodes in each stage.

\subsection{Deterministic Sparsification via Derandomization}
\label{sec:derandomizing sampling}
We construct a deterministic sparsification algorithm to prove \Cref{lem:Gsparsification}. The deterministic algorithm, Algorithm \ref{algorithm:DetSparsification}, is referred to as \detsparsification. 
The structure of \detsparsification is the same as the randomized sparsification algorithm (Algorithm \ref{algorithm:samplingrandomized}). As before, the input is a set of active nodes $A \subseteq V$ and a maximum active degree parameter $\maxactivedeg \ge \max_{v \in V} d(v,A)$, where each $v \in V$ knows whether $v \in A$, and the value of $\maxactivedeg$. There are $r = \lfloor \log \maxactivedeg - \log \log n \rfloor - 5$ stages. For $1 \le i \le r$, $H_i$ is the set of active nodes in the $i$th stage, where $H_1 := A$ and $H_i \subseteq H_{i-1}$. Fix some stage $1 \le i \le r$. In the $i$th stage, \detsparsification selects a set $M_i \subseteq H_i$ by derandomizing the sampling procedure of the $i$th stage of Algorithm \ref{algorithm:samplingrandomized}, in $O(\diam(G)\cdot \log^2 n)$ rounds.
This is described in the next lemma. The rest of the algorithm works exactly like the randomized version. The remaining active nodes are $H_{i+1} = H_i \setminus (M_i \cup N^2(M_i))$. 
After $r$ stages, the algorithm returns $Q := \cup_{i=1}^{r+1} M_i$, consisting of the sampled sets $M_1, \dots, M_r$ and the remaining active nodes $M_{r+1}:=H_{r+1}$.

\begin{algorithm}[h]
    \DontPrintSemicolon
    \SetAlgoLined
    \SetKwIF{If}{ElseIf}{Else}{if}{:}{else if}{else}{}
    \SetKwFor{For}{for}{:}{}
    \SetKwFor{ForEach}{foreach}{:}{}
    \KwIn{$\maxactivedeg \ge \max_{v \in V} d(v,A)$,\\ \hspace{2mm} Each $v \in V$ knows if it's in a set of initially active nodes $A \subseteq V$.}
    $r := \lfloor \log \maxactivedeg - \log \log n\rfloor-5$\;
    $H_1 := A$\;
    \For{\upshape{stage} $i= 1, \dots, r$}{
        Find $M_i \subseteq H_i$ using \Cref{lem:conditionalEVs} s.t. neither $\Psi_v$ nor $\Phi_v$ (\ref{eq:Phi}, \ref{eq:Psi}) occur for any $v \in V$\;
        $H_{i+1} := H_i \setminus (M_i \cup N^2_G(M_i))$\;
        \ForEach{$v \in V$ in parallel}{
            Inform each $w\in N(v)$ whether $v \in H_{i+1}$\;
        }
    }
    $M_{r+1} := H_{r+1}$\;
    \Return{$Q:=\cup_{i=1}^{r+1} M_i$}
    \caption{DetSparsification}
    \label{algorithm:DetSparsification}
\end{algorithm}

\begin{restatable}[Derandomizing $i$th stage]{lemma}{lemDerandithStage} \label{lem:derandstage}
    The round complexity of a stage is $O(\diam(G) \cdot \log^2 n)$.
    Fix a stage $1 \le i \le r$ and let $H_i \subseteq V$ be a set of active nodes, such that all nodes $v \in V$ have at most $\maxactivedeg/2^{i-1}$ neighbors in $H_i$. The $i$th stage of \detsparsification returns $M_i$ and $H_{i+1}$ such that for all $v \in V$:
    \begin{itemize}
        \item [(i)]  $d(v,M_i) \le 72\log n$, ,
        \item [(ii)] \textbf{if} $d(v,H_i) \ge \maxactivedeg / 2^i$~, \textbf{then} $v \in M_i \cup N(M_i)$~,
        \item [(iii)] $d(v,H_{i+1}) < \maxactivedeg /2^{i}$~.
    \end{itemize}    
\end{restatable}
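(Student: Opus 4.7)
The plan is to derandomize the sampling in the $i$th stage of \Cref{algorithm:samplingrandomized} via the method of conditional expectations, using the $8\log n$-wise independent hash family of \Cref{lem:hashfamily}. Define, for each $v\in V$, the two ``bad'' events
\[
\Phi_v := \{\,d(v,M_i) > 72\log n\,\},\qquad
\Psi_v := \{\,d(v,H_i)\ge \maxactivedeg/2^i\;\text{and}\;N[v]\cap M_i=\emptyset\,\}.
\]
\Cref{lem:randstage} (with the $8\log n$-wise independent random choices instantiated by a hash function) gives $\pr(\Phi_v),\pr(\Psi_v)\le n^{-3}$ for every $v$, so by linearity the total expected number of violations is at most $2n\cdot n^{-3} = 2/n^{2} < 1$. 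If we can produce a deterministic outcome achieving this same bound, it must avoid every $\Phi_v$ and $\Psi_v$, which immediately yields properties (i) and (ii); property (iii) then follows exactly as in the randomized analysis, since whenever $d(v,H_i)\ge\maxactivedeg/2^i$ there is some $w\in N[v]\cap M_i$, and the distance-$2$ deactivation step then empties $N(v,H_{i+1})$.

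To derandomize, I would instantiate the sampling in stage $i$ via a single random seed $s\in\{0,1\}^{L}$ of length $L = O(\log^2 n)$, chosen from the hash family of \Cref{lem:hashfamily} with $k=8\log n$ and input/output widths $O(\log n)$; each node $v\in H_i$ interprets its hash value $h_s(\ID(v))$ as its coin flip at sampling probability $24\cdot 2^i\log n/\maxactivedeg$. The bits of $s$ are fixed one at a time by the standard conditional-expectation argument: maintain the invariant that after fixing a prefix $\sigma$, the quantity $\Phi(\sigma):=\sum_{v}\ev[\mathbf{1}_{\Phi_v}+\mathbf{1}_{\Psi_v}\mid s\supseteq \sigma]$ is at most $2/n^{2}$, and always extend $\sigma$ by whichever of $0,1$ keeps the sum at most its current value. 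After all $L$ bits are fixed, $\Phi(s)<1$ is a deterministic integer and hence $0$, so no event fires, as required.

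The key implementation point is that the per-node conditional contributions are computable from strictly local data. Each summand $\ev[\mathbf{1}_{\Phi_v}\mid\sigma]+\ev[\mathbf{1}_{\Psi_v}\mid\sigma]$ depends only on the (partially constrained) hash values of $v$ and of the \emph{active} neighbors of $v$ in stage $i$. Active membership in $H_i$ is already known to each node from the distance-$2$ deactivation messages sent at the end of stage $i-1$ (which is the only per-stage bookkeeping communication required), and the prefix $\sigma$ is common knowledge once broadcast. Thus in each bit-fixing step, every node locally computes its two trial contributions, the sums are aggregated at a leader using the convergecast of \Cref{lem:convergecast} along a fixed spanning BFS tree in $O(\diam(G))$ rounds (the contributions can be represented as rationals with a common polynomially bounded denominator, so they fit into $O(\log n)$-bit messages), and the leader broadcasts the winning bit back in another $O(\diam(G))$ rounds. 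Over $L=O(\log^2 n)$ bits this totals $O(\diam(G)\cdot\log^2 n)$ rounds, matching the stated complexity.

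The main obstacle I anticipate is not the derandomization scheme itself but controlling the \emph{representation} and \emph{aggregation} of the conditional expectations inside $O(\log n)$-bit messages: each $\ev[\mathbf{1}_{\Phi_v}\mid\sigma]$ is a sum of products of probabilities over subsets of $v$'s active neighbors, and naively it is a rational with a huge denominator. I would resolve this exactly as in the derandomization literature by working with a common denominator that is a fixed power of $2$ depending only on $L$ and $n$ (so that all nodes use the same scaling), and by using that \Cref{theorem:chernoff}-style bounds only need $8\log n$-wise moments, so each conditional probability can be expressed using a sum of $O(n^{O(1)})$ terms of bounded bit-complexity. With this setup the convergecast of \Cref{lem:convergecast} suffices, and the entire stage fits into the advertised $O(\diam(G)\cdot\log^2 n)$ rounds.
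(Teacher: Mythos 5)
Your proposal matches the paper's proof in essentially every respect: the same two bad events, the same $8\log n$-wise independent hash-function seed of length $O(\log^2 n)$, bit-by-bit fixing via conditional expectations aggregated by convergecast on a spanning BFS tree (with the same common-denominator trick for encoding the conditional expectations), and the same derivation of (i)--(iii) and of the $O(\diam(G)\cdot\log^2 n)$ round bound. The paper merely packages the bit-fixing step as a separate claim (\Cref{lem:conditionalEVs}), so no substantive difference remains.
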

\begin{proof} 
    We derandomize one stage $1 \le i \le r$ of the randomized sampling algorithm (see \Cref{sec:randomized sampling} and Algorithm \ref{algorithm:samplingrandomized}). Each active node $v \in H_i$ is sampled to $M_i$ with probability $\frac{24 \cdot 2^i \log n}{\maxactivedeg}$. After our derandomization, there are two types of events that we want to guarantee for all $v \in V$. 
    First, all $v \in V$ have a maximum of $72 \log n$ neighbors in $M_i$. Secondly, nodes with high active degree in stage $i$ are have at least one sampled neighbor, or are sampled themselves. The purpose of this is to guarantee that the maximum active degree in the graph decreases, because the neighborhood of any high active degree node is deactivated. 
    
    Define indicator variables for the complements of these events. Let $Z = \{v \in V: d(v,H_i) \ge \maxactivedeg / 2^i \}$ be the set of nodes with high active degree in stage~$i$. For each $v \in Z$, let  $\Phi_v$ be an indicator variable for the event that $v$ is not sampled and does not have a sampled neighbor:
    \begin{equation}\label{eq:Phi}
        \Phi_v := \begin{cases} 1 & v \not\in M_i \cup N(M_i) \\ 0 &\text{else} \end{cases}
    \end{equation}
    For convenience, we also define $\Phi_v$ for $v \in V \setminus Z$ as $\Phi_v=0$. By  \Cref{lem:randstage}~(ii), 
    for all $v \in Z$, the probability that $v$ does not have any of its neighbors sampled or is not sampled itself is at most $1/n^3$. Hence, for all $v \in V$, 
    $$\pr(\Phi_v = 1) \le \frac{1}{n^3}$$
    Next, for all $v \in V$ (including $Z$), let $\Psi_v$ be an indicator variable for the event that $v$ has more than $72 \log n$ sampled neighbors, 
    \begin{equation}\label{eq:Psi}
        \Psi_v := \begin{cases} 1 & d(v, M_i) > 72\log n \\ 0 &\text{else} \end{cases}
    \end{equation}
    By assumption, the maximum active degree at the start of stage $i$ is at most $\maxactivedeg / 2^{i-1}$. With this assumption,  \Cref{lem:randstage}~(i) 
    states that, for all $v \in V$,
    $$\pr(\Psi_v = 1) \le \frac{1}{n^3}$$
    
    For each $v \in H_i$, let $X_v$ be an indicator variable for the event that is sampled to $M_i$. We use \Cref{lem:conditionalEVs} to fix the decisions of the active nodes: 
    
    \begin{claim} \label{lem:conditionalEVs}
        Fix some stage $1 \le i \le r$. Let $H_i \subseteq V$ and let $\maxactivedeg = O(n)$ be an integer parameter. Let $T$ be a spanning BFS tree with root $r \in V$, given distributedly. Let $\{X_v\}_{v \in H_i}$ be a set of $8\log n$-wise independent binary variables with $\pr(X_v = 1) = \frac{24\cdot 2^i \cdot \log n}{\maxactivedeg}$. For each $v \in V$, let $\Phi_v$ and $\Psi_v$ be events, each with probability at most $\frac{1}{n^3}$. Let $\text{\upshape vbl}(v) \subseteq H_i$ be a (unique minimal) set of nodes such that the values of $X_w, w \in \text{\upshape vbl}
        (v)$  determine $\Phi_v$ and $\Psi_v$. Assume that each $v \in V$ knows the IDs in $\text{\upshape vbl}(v)$. There is a deterministic \congest algorithm that selects the values of $X_v$ for each $v \in H_i$ such that none of the events occur. All communication is done using convergecast operations in $T$. The round complexity is $O(\diam(G) \cdot \log^2 n)$. 
    \end{claim}
    \begin{proof}
        We simulate the random choices of the active nodes by choosing a hash function $h: [n] \ra [\maxactivedeg]$ uniformly at random from a family $\mathcal{H}$ of $8\log n$-wise independent hash functions. Setting $X_v=1$ with probability $\frac{24\cdot 2^i \cdot \log n}{\maxactivedeg}$ is equivalent to setting $X_v=1$ if $h(v) \le 24\cdot 2^i \cdot \log n$. By  \Cref{lem:hashfamily}, choosing a random function from $\mathcal{H}$ takes $\gamma := 8\log^2 n$ random bits. 
        
        We call a hash function $h \in \mathcal{H}$ \textit{good} for a variable ($\Phi_v$ or $\Psi_v$), if it makes its value to be zero, that is, the corresponding event does not occur. The probability of each event is at most $1 /n^3$ and the total number of events is $2n$. Hence,
        $$
        \ev\bigg(\sum_{v \in V} \Phi_v + \Psi_v \bigg) \le 2n \cdot \frac{1}{n^3} = \frac{2}{n^2}
        $$
        For large enough $n$, the expected number of unwanted events is less than 1. Hence, by law of total expectation, there exists a hash function such that none of the unwanted events occur. 
        
        Let $B=(B_1, \dots, B_\gamma)$ be the random bits that are used to select a hash function. We fix the values of the random bits one by one, using the method of conditional expectations. Now, we describe how to select the value of the $j$th random bit $B_j$, given that the values $B_1=b_1, \dots, B_{j-1}=b_{j-1}$ have already been fixed. Consider any node $v \in V$. $v$ computes two conditional expectations:
        \begin{align*}
            \alpha_{v, 0} &=\ev[\Phi_v + \Psi_v \mid B_1=b_1, \dots, B_{j-1}=b_{j-1}, B_j = 0]\\
            \alpha_{v, 1} &=\ev[\Phi_v + \Psi_v \mid B_1=b_1, \dots, B_{j-1}=b_{j-1}, B_j = 1]
        \end{align*}
        The expectation $\alpha_{v, b}, b \in \{0,1\}$ is the average outcome of $\Phi_v + \Psi_v$ over the hash functions with the prefix $b_1, \dots, b_{j-1}, b$. Given any hash function $h \in \mathcal{H}$, $v$ can check whether $h$ is good for its variables. To do this, $v$ determines the values of $X_w, w \in \text{vbl}(v)$: the value of $X_w$ is 1 iff $h(\ID(w)) \le 24\cdot 2^i \cdot \log n$. This takes zero rounds, since these IDs in $\text{vbl}(v)$ are known to $v$ by assumption. Hence, computing the conditional expectations takes zero rounds. Next, we compute the sums $\sum_{v \in V} \alpha_{v,b}, b \in \{0,1\}$ in the root $r$ of $T$. This is done by running two instances of \Cref{lem:convergecast} in parallel, one for each bit. The round complexity of computing the sums is $O(\diam(G))$ rounds\footnote{When fixing the $i$th bit, $\alpha_{v,b}$ is a fraction of the form $x_{v,b}/2^{\gamma - i + 1}$, where $x_{v,b} \in [2\cdot 2^\gamma]$ is the total number of times $\Psi_v$ and $\Phi_v$ occurs over the at most $2^\gamma = n^2$ hash functions with prefix $b_1, \dots, b_{i-1}, b$. The denominator is the same for all nodes, so we can instead send $x_{v,b}$, taking $O(\log n)$ bits of space.}.
        The root chooses $b_j := \argmin_{b \in \{0,1\}} \sum_{v \in V} \alpha_{v,b}$ and sends the chosen value to all the nodes. 
        
        The choice of hash function is deterministic, once all $\gamma$ bits have been fixed. The outcomes of $\Phi_v$ and $\Psi_v$ are now determined by the fixed seed for all nodes $v \in V$. The number of unwanted events that occur can now be written as $\sum_{v \in V} \ev[\Phi_v + \Psi_v \mid B_1=b_1, \dots, B_{\gamma}=b_{\gamma}]$. By the law of total expectation, we have
        \begin{align*}
            \sum_{v \in V} \ev[\Phi_v + \Psi_v \mid B_1=b_1, \dots, B_{\gamma}=b_{\gamma}] 
            &\le \sum_{v \in V} \ev[\Phi_v + \Psi_v] 
            = \ev \Big[\sum_{v \in V} \Phi_v + \Psi_v\Big] < 1
        \end{align*}
        Hence none of the unwanted events occur. 
        
        The total time complexity is $O(\diam(G) \cdot \log^2 n)$. This consists of fixing the $\gamma = O(\log^2 n)$ bits one by one. For each bit, nodes can compute the conditional expectations $\alpha_{v,0}, \alpha_{v,1}$ in zero rounds.
        Gathering the sums of the conditional expectations to some root node takes $O(\diam(G))$ rounds. The root node sends the fixed value of the bit to the rest of the graph in $O(\diam(G))$ rounds. Once all $\gamma$ bits have been fixed, active nodes extracts their decision from the hash function in zero rounds.
        \renewcommand{\qed}{\ensuremath{\hfill\blacksquare}}
    \end{proof}
    \renewcommand{\qed}{\hfill \ensuremath{\Box}}
    For all $v \in V$, the events $\Phi_v$ and $\Psi_v$ depend only on the decisions of active neighbors. Each $v \in V$ can learn the IDs in $N(v,H)$ in one round. The probabilities of the events are at most $\frac{1}{n^3}$. \Cref{lem:conditionalEVs} provides a deterministic algorithm to fix the values of $X_v, v \in H$ such that neither $\Phi_v$ nor $\Psi_v$ occur for any $v \in V$. Let $M_i = \{v \in H_i: X_v = 1\}$ be the set of sampled nodes. 
    
    \textbf{Proof of (i).} All nodes have $d(v,M_i) \le 72\log n$, since $\Psi_v=0$ for all $v \in V$. \\
    \textbf{Proof of (ii).} Any $v \in V$ with at least $\maxactivedeg/2^i$ active neighbors is in $M_i \cup N(M_i)$, since $\Phi_v=0$ for all $v \in Z$.\\
    \textbf{Proof of (iii).} The sampled nodes and their distance-2 neighborhood are removed from the set of active nodes, just like in the randomized algorithm. Let $H_{i+1} \subseteq H_i$ be the set of remaining active nodes.  The neighborhood of any $v \in V$ with $d(v, H_i) \ge \maxactivedeg / 2^{i}$ is fully deactivated because at least one of the active nodes in $N(v,H_i)$ was sampled. Hence, $d(v,H_{i+1}) < \maxactivedeg / 2^{i}$ for all $v \in V$.
    
    The total time complexity is $O(\diam(G) \cdot \log^2 n)$: the derandomization takes $O(\diam(G) \cdot \log^2 n)$ by \Cref{lem:conditionalEVs}, and deactivating the distance-2 neighborhood of sampled nodes takes 2 rounds. 
\end{proof}

\begin{proof}[Proof of  \Cref{lem:Gsparsification}] 
Run the \detsparsification algorithm. The number of stages is $r=\lfloor \log \maxactivedeg - \log \log n\rfloor - 5$. Note that we can assume that $\maxactivedeg \ge 2^5 \log n$,\footnote{If $\maxactivedeg < 2^5\log n$, then $\max_{v \in V} d(v,A) < 2^5\log n$ by definition of $\maxactivedeg$. Return the initial set of active nodes $A$, which now satisfies both conditions of  \Cref{lem:Gsparsification}.} which makes $r$ non-negative. \detsparsification returns $Q = \cup_{i=1}^{r+1} M_i \subseteq A$, where $M_1, \dots, M_r$ are the sets selected in stages $1 \le i \le r$ and $M_{r+1}=H_{r+1}$ is the set of remaining active nodes. 

\noindent \textbf{Domination property:} We start by showing the domination property of $Q$, that is, $\forall v \in V: \dist_G(v, Q) \le 2 + \dist_G(v, A)$. Let $v \in V$ be any node and let $w \in A$ be any initially active node that is closest to $v$, i.e., $\dist_G(v,w) = \dist_G(v, A)$. There are three possible outcomes for $w$ in the algorithm:
(1) $w \in Q$. Now $\dist_G(v,Q) = \dist_G(v,A)$. (2) $w$ is deactivated in some stage $i$. By definition, there exists some $w' \in N^2(w)$ such that $w' \in M_i$. Hence $\dist_G(v, Q) \le \dist_G(v,A) + 2$. (3) $w$ is never deactivated. Hence $w \in H_{r+1}$. The remaining active nodes $M_{r+1}:=H_{r+1}$ are included in $Q$. Hence $\dist_G(v, Q)=\dist_G(v, A)$. In all cases, $\dist_G(v,Q) \le 2 + \dist_G(v,A)$. 

\noindent \textbf{Sparsity:} We prove the claim about bounded $Q$-degree, $\forall v \in V: d(v, Q) \le 72\log n$. For any stage $1 \le i \le r$, \Cref{lem:derandstage} states that any $v \in V$ has at most $72 \log n$ neighbors in $M_i$, assuming that the maximum active degree at the start of stage $i$ is at most $\maxactivedeg /2^{i-1}$. The assumption  holds for $i=1$, since $\maxactivedeg \ge \max_{v \in V} d(v,A)$ by definition. For stages $i=2, \dots, r$, this is guaranteed by  \Cref{lem:derandstage}~(iii), which states that the maximum active degree at the end of stage $i-1$ is at most $\maxactivedeg / 2^{i-1}$. Finally, $M_{r+1} = H_{r+1}$ consists of the nodes who are still active  after $r$ stages of sampling. By \Cref{lem:derandstage} (iii), the maximum active degree after stage $r=\lfloor \log \maxactivedeg - \log \log n\rfloor-5$ is 
$$
    \frac{\maxactivedeg}{2^{\lfloor \log \maxactivedeg - \log \log n\rfloor-5}} 
    \le \frac{\maxactivedeg}{2^{ \log \maxactivedeg - \log \log n-6}} = 64\log n
$$ 
Hence, any $v \in V$ has at most $72 \log n$ neighbors in $M_i$, for any $1 \le i \le r+1$. Finally, nodes in $M_i$ and $M_j$, $i \neq j \in 1, \dots, r$ do not have any common neighbors, because the distance-2 neighborhood of sampled nodes is deactivated. Nodes in $M_{r+1}$ and $M_i, 1 \le i \le r$ do not have any common neighbors for the same reason. We conclude that each node has at most  $72\log n = O(\log n)$ neighbors in $Q$. 

The runtime of \detsparsification consists of $\lfloor \log \maxactivedeg - \log \log n \rfloor -5$ stages, each taking \linebreak$O(\diam(G) \cdot \log^2 n)$ rounds by \Cref{lem:derandstage}. The total runtime is $O(\diam(G) \cdot \log \Delta \cdot \log^2 n)$. 
\end{proof}

\subsection{Sparsification in Power Graphs}
\label{sec:sparsGk}

In this section, we present a deterministic sparsification algorithm for $G^k$, proving \Cref{lem:sparsification}. 

\paragraph{Construction outline} Let $Q_0 \subseteq V$ be any set of initially active nodes. Our algorithm consists of $k$ iterations of the deterministic sparsification algorithm for $G$ (\detsparsification, Algorithm~\ref{algorithm:DetSparsification}).
The first iteration runs \detsparsification in $G$ with the active nodes initialized as $Q_0$. 
In iteration $2 \le s \le k$, \detsparsification is simulated on a power graph $G^s$ with communication network $G$. 
The set of active nodes for the $s$th iteration is initialized as $Q_{s-1} \subseteq V$, where $Q_{s-1}$ is the result of the previous iteration. The result of the $s$th iteration is some $Q_s \subseteq Q_{s-1}$. After $k$ iterations, the algorithm outputs $Q_k$. 

A full algorithm description, as well as details on how to simulate \detsparsification on $G^s$ will be given later. The efficiency of the simulation in each iteration $s$ relies on properties of the set of active nodes $Q_{s-1}$, returned by \detsparsification in the previous iteration. Concretely, we maintain the following invariants:

\paragraph{Algorithm invariants} The sparsification algorithm for $G^k$ finds a sequence of sets of nodes $Q_0 \supseteq Q_1 \supseteq \dots \supseteq Q_{k}$. The following invariants hold deterministically after all iterations $1 \le s \le k$:
\begin{enumerate}
	\item [\textbf{I1.1}] (bounded distance-$s$ $Q_s$-degree) $\forall v \in V: d^{s}(v,Q_s) \le 72\log n$
	\item [\textbf{I1.2}] (bounded distance-$(s+1)$ $Q_s$-degree) $\forall v \in V: d^{s+1}(v,Q_s) \le 72 \Delta \log n$
	\item [\textbf{I2}] (domination) $\forall v \in V: \dist_G(v,Q_s) \le \sum_{j=1}^s 2j + \dist_G(v, Q_0) = s^2 + s + \dist_G(v, Q_0)$
	\item [\textbf{I3}] (knowledge of distance-($s+1$) $Q_s$-neighborhood) All $v \in V$ know the set of IDs in $N^{s+1}(v,Q_s)$. Moreover, for each $x \in Q_s$, there is a BFS tree of depth $s+1$ rooted in $x$. Each \linebreak $v \in V$ knows, for each $T_x$ it belongs to, the ID of the root $x$, ancestor$(T_x, v) \in N(v)$ and descendants$(T_x,v) \subseteq~N(v)$.
\end{enumerate}

\paragraph{Algorithm description} 
Let $k \ge 1$ be an integer and $Q_0 \subseteq V$ be any set. The algorithm consists of $k$ iterations of \detsparsification  (Algorithm~\ref{algorithm:DetSparsification}), where the $s$th iteration is simulated on the power graph $G^s$. 

In the first iteration, we run \detsparsification on $G$. The set of active nodes is initialized as $Q_0$, and the maximum active degree parameter $\maxactivedeg$ is set to $\maxactivedeg^{(1)} := \Delta$. Let $Q_1$ be the set of nodes returned by \detsparsification in the first iteration. To prepare for simulation on $G^2$ in the next iteration, we must maintain invariant (I3). To do this, each $v \in V$ sends the set of IDs in $N(v,Q_1)$ to all neighbors $w \in N(v)$ in $O(\log n)$ rounds, as described in \Cref{lem:sendingIDs}.
As a result, each $v \in V$ knows the set of IDs in $N^2(v, Q_1)$, and for each $x \in Q_1$, there is a BFS tree of depth 2 with root $x$.

Fix some iteration $2 \le s \le k$. Assume that the invariants hold for iteration $s-1$. We simulate \detsparsification on the power graph $G^s$ with communication network $G$. Active nodes are initialized as $Q_{s-1}$, where $Q_{s-1}$ is the result of the previous iteration. The maximum active degree parameter $\maxactivedeg$ is set as $\maxactivedeg^{(s)} := 72\Delta \log n$ (for all $2 \le s \le k$). 
The details of the simulation are given in \Cref{lem:simDetSparsification}. Let $Q_{s} \subseteq Q_{s-1}$ be the set of nodes returned by \detsparsification in this iteration. At the end of the iteration, we must again maintain (I3). Each $v \in V$ sends the set of IDs in $N^s(v,Q_s)$ to each of its neighbors $w \in N(v)$ in $O(\log n)$ rounds, as described in \Cref{lem:sendingIDs}. Now all nodes $v \in V$ know the set of IDs in $\cup_{w \in N(v)} N^s(w,Q_s) = N^{s+1}(v,Q_s)$, and the BFS trees of nodes in $Q_s$ are extended to depth $s+1$. The trees of nodes in $Q_{s-1} \setminus Q_s$ are not used anymore. This concludes one iteration of the algorithm.

After $k$ iterations, the algorithm outputs $Q_k$. 

\vspace{5mm}

\noindent\textit{Proof of invariants.} We prove the invariants using induction over the number of iterations.

Base case $s=1$. (I1.1) and (I2) follow from the analysis of \detsparsification in  \Cref{lem:Gsparsification}, which states that for any $v \in V$ it holds that $d_G(v,Q_1) \le 72\log n$ and $\dist_G(v, Q_1) \le 2 + \dist_G(v, Q_0)$. (I1.2) is a direct consequence of (I1.1), since for any $v \in V$, $d^2(v, Q_1) = |N^2(v, Q_1)| = |\cup_{w \in N(v)} N(w, Q_1)| \le \sum_{w \in N(v)} |N(w,Q_1)| \le \Delta \cdot 72\log n$. (I3) holds by \Cref{lem:sendingIDs}. 

Assume that the invariants hold for $s-1$. In the $s$th iteration, we set $Q_{s-1}$ as the initial set of active nodes. By (I1.2) for iteration $s-1$, the distance-$s$ $Q_{s-1}$-degree is at most $72\Delta \log n$ for all $v \in V$. The maximum active degree parameter $\maxactivedeg$ in \detsparsification was set to $\maxactivedeg^{(s)} = 72\Delta \log n$. This satisfies the requirement $\maxactivedeg \ge \max_{v \in V} d(v, A)$ in  \Cref{lem:Gsparsification} (which has the form $\maxactivedeg \ge \max_{v \in V} d^s(v,A)$ as we are simulating \detsparsification on $G^s$). Now, we can apply \Cref{lem:Gsparsification}. For $G^s$, \Cref{lem:Gsparsification} (bounded $Q$-degree) states that,  for all $v \in V$, $d^s(v,Q_s) \le 72\log n$, proving (I1.1). (I1.2) is again a consequence of (I1.1), as for any $v \in V$, $d^{s+1}(v, Q_{s}) = |N^{s+1}(v, Q_s)| = |\cup_{w \in N(v)} N^s(w, Q_s)| \le \sum_{w \in N(v)} d^s(w,Q_s) \le \Delta \cdot 72\log n$. For (I2), we use \Cref{lem:Gsparsification} (domination), which states that $\forall v \in V: \dist_{G^s}(v,Q_s) \le 2 + \dist_{G^s}(v,Q_{s-1})$. The increase of 2 in distance in $G^s$ corresponds to an increase of at most $2s$ in $G$. Combined with (I2) for iteration $s-1$, we get $\dist_G(v,Q_s) \le 2s + \dist_G(v,Q_{s-1}) \le 2s + (s-1)^2 + s-1 + \dist_G(v,Q_0) = s^2 + s + \dist_G(v,Q_0)$. Lastly, (I3) holds by construction (see \Cref{lem:sendingIDs}). \QED

\begin{proof}[Proof of \Cref{lem:sparsification}] 
	Initialize the active nodes as $Q_0 \subseteq V$. Run the algorithm of \Cref{sec:sparsGk} with $k$ iterations. Let $Q:=Q_k$ be the result of the last iteration. 
	
	The correctness of the algorithm follows from the invariants. For the $k$th iteration, (I1.1) states that $\forall v \in V: d^k(v, Q) \le 72\log n$ and by (I2), $\forall v \in V: \dist_G(v, Q) \le k^2 + k + \dist_G(v, Q_0)$. 
	
	Lastly, we analyze the runtime of the algorithm. There are $k$ iterations. In each iteration $s$, we run \detsparsification with $\lfloor \log \maxactivedeg - \log \log n\rfloor - 5$ stages. The parameter $\maxactivedeg$ is set to $\maxactivedeg^{(1)} = \Delta$ for $s=1$ and $\maxactivedeg^{(s)} = 72\Delta \log n$ for $2 \le s \le k$. Hence, the number of stages is $O(\log \Delta)$ in all iterations. Simulating \detsparsification on $G^s$ takes $O(\diam(G) \cdot \log^2 n \cdot \log \Delta + s \log \Delta)$ rounds by \Cref{lem:simDetSparsification}. At the end of the iteration, each $v \in V$ sends the set of IDs in $N^s(v, Q_s)$ to each of their neighbors $w \in N(v)$. Pipelining the IDs takes $O(\log n)$ rounds, since $d^s(v, Q_s)= O(\log n)$ by (I1). The time complexity of the $s$th iteration is $O(\diam(G) \cdot \log^2 n \cdot \log \Delta + s \log \Delta + \log n) = O(\diam(G) \cdot \log^2 n \cdot \log \Delta + k\log \Delta)$. Hence, the total time complexity is $O(k \cdot \diam(G) \cdot \log^2 n \cdot \log \Delta + k^2 \cdot \log \Delta)$. 
\end{proof} 

Lastly, we show how to simulate \detsparsification on $G^s$, for any $1 \le s \le k$, where the active nodes are initialized as the result of the previous iteration $s-1$. The efficiency of the simulation relies on the invariant (I1.1) for the previous iteration.

\begin{lemma}[Simulating \detsparsification on $G^s$]
	\label{lem:simDetSparsification}
	Consider iteration $2 \le s \le k$. Let $\maxactivedeg = 72\Delta \log n$. Initialize the set of active nodes as $Q_{s-1}$, where $Q_{s-1} \subseteq Q_0$ is the set of nodes returned in iteration $s-1$. Given that the invariants hold for iteration $s-1$, there is a deterministic \congest algorithm that simulates \detsparsification on $G^s$ with communication network $G$ in $O(\diam(G) \cdot \log^2 n \cdot \log \Delta + s \log \Delta)$ rounds.
\end{lemma}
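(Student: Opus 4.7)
The strategy is to execute \detsparsification stage by stage, where in each of its $O(\log \maxactivedeg) = O(\log \Delta)$ stages we perform (a)~the derandomization of \Cref{lem:conditionalEVs} viewed on $G^s$, and (b)~the deactivation of the distance-$2$ $M_i$-neighborhood in $G^s$ (i.e.\ the distance-$2s$ $M_i$-neighborhood in $G$), followed by (c)~propagating the updated active-status of the nodes in $Q_{s-1}$ to their distance-$s$ $G$-neighbors so that the invariants needed at the start of the next stage are reestablished. The crucial resource is invariant (I1.1) from iteration $s-1$, which guarantees $d^{s-1}(v, Q_{s-1}) \le 72\log n$ for every $v \in V$; together with the depth-$s$ BFS trees provided by (I3) for iteration $s-1$, this makes $Q_{s-1}$ eligible for the broadcast routine of \Cref{lem:comms} with $\maxdeg = O(\log n)$.

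For step~(a), recall that the events $\Phi_v$ and $\Psi_v$ in \detsparsification applied to $G^s$ depend only on the sampling decisions $\{X_w : w \in N^s(v, H_i)\}$. By (I3) for iteration $s-1$, every $v\in V$ initially knows the IDs in $N^s(v, Q_{s-1}) \supseteq N^s(v, H_i)$, and the status updates in step~(c) keep $v$ informed about which of these neighbors remain in $H_i$. Hence each $v$ can evaluate the conditional expectations $\alpha_{v,0}, \alpha_{v,1}$ locally, without any communication. We elect a leader and build a spanning BFS tree of $G$ once at the beginning (cost $O(\diam(G))$), reuse it in every stage, and apply the convergecast–broadcast protocol of \Cref{lem:conditionalEVs}: fixing the $\gamma = O(\log^2 n)$ bits of the hash seed costs $O(\diam(G)\cdot \log^2 n)$ rounds per stage.

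For steps~(b) and~(c), note that $M_i \subseteq H_i \subseteq Q_{s-1}$. Each node in $M_i$ emits a single-bit ``deactivation'' token that is OR-propagated for $2s$ rounds over edges of $G$; at the end every $v\in V$ knows whether $v \in M_i \cup N^{2s}_G(M_i)$, so $H_{i+1}$ is determined locally. To let every $v \in V$ learn the updated active-status of its distance-$s$ $Q_{s-1}$-neighbors, each surviving $w \in Q_{s-1}$ broadcasts one bit to $N^s(w)$ via the broadcast primitive of \Cref{lem:comms}. Since $\maxdeg = O(\log n)$ and $m=O(1)$, and $\bandwidth = \Theta(\log n)$, one such broadcast costs $O(s + \maxdeg/\bandwidth) = O(s)$ rounds. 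Steps (b) and (c) thus together cost $O(s)$ rounds per stage.

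Summing over the $O(\log \Delta)$ stages gives $O(\diam(G)\cdot \log^2 n \cdot \log \Delta + s \log \Delta)$ rounds, matching the claim. The main technical obstacle is bookkeeping: making sure that the algorithmic state required by (a), i.e.\ the up-to-date set $N^s(v, H_i)$ for every $v \in V$ (including those outside $Q_{s-1}$, which must serve as observers so that their $\Psi_v$ events are controlled), remains available at essentially zero per-stage communication cost, which is exactly what the broadcast in step~(c) provides thanks to the sparsity guarantee (I1.1) inherited from the previous iteration.
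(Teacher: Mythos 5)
Your proposal is correct and follows essentially the same route as the paper's proof: stage-by-stage simulation with the conditional-expectation derandomization carried out over a global spanning BFS tree (costing $O(\diam(G)\cdot\log^2 n)$ per stage), deactivation by a $2s$-hop flag propagation, and re-learning the active sets $N^s(v,H_{i+1})$ via the broadcast primitive of \Cref{lem:comms}, whose applicability rests exactly on (I1.1) and (I3) from iteration $s-1$. The only (immaterial) deviation is that you have the status updates sent as $1$-bit broadcasts identified by the BFS trees, whereas the paper has deactivated nodes broadcast ID-tagged messages in $O(s+\log n)$ rounds per stage; both yield the claimed total bound.
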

\begin{proof}
	We simulate \detsparsification, consisting of $r=\lfloor \log \maxactivedeg - \log \log n \rfloor -5 = O(\log \Delta)$ stages. The active nodes are initialized as $H_1 := Q_{s-1}$ and $\maxactivedeg = 72\Delta \log n$. A stage of \detsparsification was described in \Cref{lem:derandstage}. We show how to efficiently simulate it on $G^s$ with communication network $G$. Fix any stage $1 \le i \le r$. Assume that each $v \in V$ knows the IDs of its active neighbors, $N^s(v, H_{i})$. This is true for the first stage by (I3) for iteration $s-1$: each $v \in V$ knows the set of IDs in $N^s(v, H_1) = N^s(v, Q_{s-1})$. We later show how to learn the IDs in $N^s(v,H_{i+1})$ at the end of the stage.
	
	The $i$th stage starts with derandomizing the choices of the active nodes. For clarity, we first rewrite the definitions of events in \Cref{lem:derandstage} for $G^s$. For $v \in H_i$, let $X_v$ be an indicator variable for the event that $v$ is sampled. Let $Z = \{v \in V: d^s(v,H_i) \ge \maxactivedeg / 2^{i}\}$ be the set of nodes with high active degree in $G^s$ in the $i$th stage. For each $v \in Z$, let $\Phi_v := \One(v \not\in M_i \cup N^s(M_i))$ and for each $v' \in V \setminus Z$, let $\Phi_{v'} := 0$. For each $v \in V$, let $\Psi_v := \One(d^s(v, M_i) > 72\log n)$. 
	Note that these are exactly the events of \Cref{lem:derandstage}, rewritten for $G^s$. Given that we are simulating the $i$th stage of the sparsification algorithm in $G^s$, the randomized analysis in \Cref{lem:randstage} states that for all $v \in V$, $\pr(\Phi_v=1)\le 1/n^3$ and $\pr(\Psi_v=1)\le 1/n^3$.
	
	As in the \detsparsification, we apply \Cref{lem:conditionalEVs} to fix the decisions of the active nodes. 
	For all $v \in V$, the events $\Phi_v$ and $\Psi_v$ depend on the decisions of their active neighbors in $G^s$. By assumption, each $v \in V$ knows the IDs in $N^s(v,H_i)$. 
	Hence, by \Cref{lem:conditionalEVs}, the values of $X_v, v \in H$ can be fixed in $O(\diam(G) \cdot \log^2 n)$ rounds, such that neither $\Phi_v$ nor $\Psi_v$ occur for any $v \in V$. Note that the only communication in \Cref{lem:conditionalEVs} is done by aggregating values on a spanning tree subgraph of $G$. In particular, there is no need to simulate communication between neighbors in $G^s$. Let $M_i = \{v \in H_i: X_v = 1\}$ be the set of sampled nodes.
	
	Each $v \in M_i$ deactivates itself. 
	We also need to deactivate the distance-2 neighborhood in $G^s$ (distance-$2s$ neighborhood in $G$) of sampled nodes. 
	Each $v \in M_i$ sends a flag \textit{sampled} that propagates for $2s$ hops in $G$. Nodes forward an incoming flag, keeping track of the distance left. Multiple flags can be combined into one, since they do not contain any information specific to the sender, so there is no congestion. The round complexity is $2s=O(s)$. Any active node $v' \in H_i$ who receives a flag \textit{sampled} deactivates itself. Let $H_{i+1} \subseteq H_i$ be the remaining active nodes.
	
	To prepare for the next stage, each $v \in V$ needs to learn the IDs in $N^s(v,H_{i+1}) \subseteq N^s(v, H_i)$. To do this, each $v \in H_i \setminus H_{i+1}$ informs its neighborhood in $G^s$ about the fact that $v$ is no longer active. Using \Cref{lem:comms}, each deactivated node $v$ \textit{broadcasts} a message \textit{deactivated}, along with $\ID(v)$, to all of its neighbors in $G^s$. By (I1.1) for iteration $s-1$, any $v \in V$ has at most $O(\log n)$ distance-$(s-1)$ $Q_{s-1}$-neighbors. The BFS trees of depth $s$ around each $x \in Q_{s-1}$ are given by (I3) for iteration $s-1$. Hence, \Cref{lem:comms} sends the broadcasts from  all deactivated nodes (possibly all of $Q_{s-1} \supseteq H_i$) in parallel in $O(s + \log n)$ rounds. Nodes $w \in V$ use the incoming broadcasts to form knowledge of $N^s(v,H_{i+1})$ by removing deactivated nodes from the set $N^s(v,H_i)$.\footnote{Note that nodes $v \in V$ who have a sampled neighbor in $G^s$ could perform the update in zero rounds: $v$ checks if any of its active neighbors $w \in N^s(v,H_i)$ are sampled, using the hash function $h$ determined by the fixed random bits. If there exists a sampled node $w \in N^s(v,H_i)$, all of the remaining active nodes in $N^s(v,H_i)$ will have already been deactivated, because they are in the distance-$2$ neighborhood of $w$ in $G^s$. However, the sampled node may not be in $N^s(v,H_i)$, but only in the distance-$2s$ neighborhood of some active node $w \in N^s(v,H_i)$. In this case, $w$ is deactivated but $v$ does not know it yet.} This concludes the simulation of one stage of \detsparsification. 
	
	The time complexity of one stage is $O(\diam(G) \cdot \log^2 n + 2s + s + \log n)$. There are $r = O(\log \Delta)$ stages, so the time complexity of simulating \detsparsification in $G^s$ is $O(\diam(G) \cdot \log^2 n \cdot \log \Delta + s \log \Delta)$.
\end{proof}

\subsection{Sparsification with no Diameter Dependency}
\label{app:noDiameter}
We combine the result of \Cref{lem:sparsification} with a network decomposition algorithm to improve the sparsification runtime to $O(\poly \log n)$. See \Cref{def:networkDecomp} for the definition of a network decomposition. With a network decomposition, the classic approach is to compute a solution greedily, by iterating through the colors of the network decomposition. However, we need to be careful not to spoil the solution for nodes in clusters of later colors. When computing a solution for a cluster $C$, we must also consider nodes in the distance-$k$ neighborhood $N^k(C)$. This \textit{cluster border} acts as inactive observers in the algorithm to make sure that they have a bounded number of distance-$k$ neighbors in $Q$. Furthermore, nodes in the cluster  border are important, because connections in $G^k[C]$ can also be formed through neighbors of $C$ (up to distance $\lfloor k/2 \rfloor$). With this in mind, we need a separation of $2k+1$ between clusters to run \Cref{lem:sparsification} independently for each cluster of same color. 

\begin{restatable}{lemma}{corSparsification} \label{lem:ndSparsification}
	Let $k \ge 1$ (potentially a function of $n$). Let $T^{ND}$ be the complexity of computing a weak $(c,d)$-network decomposition deterministically with cluster distance $2k+1$ and congestion $\tau$.
	
	There is a deterministic distributed algorithm that, given a subset $Q_0 \subseteq V$, finds a set of vertices $Q \subseteq Q_0$ such that for all $v \in V$, 
	\begin{itemize}
		\item (bounded distance-$k$ $Q$-degree): $d^k(v,Q) \le 72 \log n = O(\log n)$
		\item (domination): $\dist_{G}(v, Q) \le k^2 + k + \dist_{G}(v, Q_0) = O(k^2) + \dist_{G}(v, Q_0)$
	\end{itemize}
	The algorithm runs in 
	$O\big(T^{ND} + c (d \tau k \cdot \log^2 n \cdot \log \Delta + k^2 \cdot \log \Delta) \big)$ rounds in the \congest model. 
	The runtime is $\widetilde{O}(k^2\cdot \log^4 n \cdot \log \Delta)$, using the network decomposition of \Cref{thm:networkDecomp}. 
\end{restatable}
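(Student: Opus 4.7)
\textbf{Proof plan for Lemma~\ref{lem:ndSparsification}.}
The plan is to replace the $\diam(G)$ factor in \Cref{lem:sparsification} by a cluster-diameter factor via a standard color-class schedule on a network decomposition of $G^{2k+1}$. First, I would compute a weak $(c,d)$-network decomposition with separation $2k+1$ and congestion $\tau$ in $T^{ND}$ rounds. Initialize $Q := \emptyset$ together with the marker ``active'' on every node of $Q_0$. Then process the color classes $i=1,\dots,c$ sequentially; within each color $i$, all clusters of color $i$ are handled in parallel.

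For a single cluster $C$ of color $i$, the invocation of \Cref{lem:sparsification} is run on the power graph restricted to $C$, with the initial active set taken to be the currently active nodes in $C$, while \emph{all} nodes in the $k$-hop border $\border_k(C):=N^k(C)\setminus C$ participate as passive observers (they never enter $Q$ in this round, but they contribute to the distance-$k$ $Q$-degree constraint so the bound $d^k(v,Q)\le 72\log n$ is enforced globally, including for nodes added to $Q$ by previously processed colors). The sparsification is simulated on the Steiner tree $T_C$ (extended by one $k$-hop layer to reach the border), which has diameter $O(d)$. Consequently, every application of \Cref{lem:convergecast} inside the simulation of \detsparsification replaces the $\diam(G)$ term by $O(d)$, at the cost of an extra factor $\tau$ from congestion when several clusters share edges of their Steiner trees. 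Plugging these two substitutions into the bound of \Cref{lem:sparsification} yields a per-color cost of $O(d \tau k \log^2 n \log \Delta + k^2 \log \Delta)$, and summing over the $c$ colors gives the claimed runtime.

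Correctness of the degree bound follows cluster-by-cluster: when color class $i$ is processed, the set of nodes already in $Q$ (from earlier colors) lies outside the same-color cluster separations, and for every node $v\in V$ the new contributions to $d^k(v,Q)$ come from at most one cluster of color~$i$ (by the $2k+1$ separation, since any two same-color clusters are at distance $>2k$, so $v$ cannot sit in the $k$-border of two of them). Hence the $72\log n$ bound proved inside a single cluster accumulates across the $c$ colors into the same global bound, because no node is counted twice in the same color and the lemma can be applied ``including existing $Q$'' by absorbing the prior $Q$-degree into the observers' constraint (concretely, run the sparsification with the event $\Psi_v$ of \Cref{lem:derandstage} rewritten as $d^k(v,Q\cup M_i)>72\log n$, which only strengthens the conditional expectations but keeps the same probabilistic bound). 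The domination claim is immediate from the monotone $k^2+k$ additive guarantee of \Cref{lem:sparsification}, since every $v\in V$ lies in exactly one cluster and its nearest active $Q_0$-node is either absorbed by its own cluster's invocation or by an earlier color.

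The main obstacle I anticipate is in the bookkeeping for the \emph{observer} role of border nodes: one has to ensure that border nodes of $C$ truly receive the samplings happening inside $C$ (so the event $\Psi_v$ is correctly computed for them) and also forward their own ``already in $Q$'' status into the conditional-expectation sum computed on $T_C$. This is handled by extending $T_C$ with a BFS layer of depth $k$ at the start of each color class, using \Cref{lem:sendingIDs} and the communication primitives of \Cref{lem:comms}; the extra cost is absorbed in the $O(d\tau k\log^2 n\log\Delta)$ term. Finally, plugging in the deterministic network decomposition of \Cref{thm:networkDecomp} (with $c,d,\tau = \polylog n$) gives the $\widetilde O(k^2\log^4 n\log\Delta)$ bound stated at the end of the lemma.
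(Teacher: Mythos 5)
Your overall architecture matches the paper's: compute the weak $(c,d)$-decomposition with separation $2k+1$ and congestion $\tau$, process color classes sequentially, run the sparsification of \Cref{lem:sparsification} per cluster with the $k$-hop border $N^k(C)$ acting as observers/relays over the ($k$-hop extended) Steiner tree, and replace the $\diam(G)$ term in the convergecasts by $O(d\tau)$. The runtime accounting and the use of \Cref{thm:networkDecomp} are also as in the paper.

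The genuine gap is in how you keep the \emph{global} bound $d^k(v,Q)\le 72\log n$ from degrading over the $c$ color classes. You propose to absorb the prior $Q$-degree into the bad event by rewriting $\Psi_v$ as $d^k(v,Q\cup M_i)>72\log n$ and claim this ``keeps the same probabilistic bound.'' It does not: the bound $\pr(\Psi_v=1)\le 1/n^3$ in \Cref{lem:randstage} comes from a Chernoff bound around a mean of $48\log n$ newly sampled neighbors (for a node at the stage's maximum active degree), so if $v$ already has, say, $30\log n$ neighbors in $Q$ from earlier colors, the rewritten event asks for at most $42\log n$ new samples, i.e.\ below the mean, and its probability is constant rather than $1/n^3$. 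Then $\ev[\sum_v \Phi_v+\Psi_v]$ is no longer below $1$ and the method of conditional expectations in \Cref{lem:conditionalEVs} can no longer guarantee that all bad events are avoided, so the derandomization collapses. The paper avoids this entirely by a different mechanism that your writeup never states: every node selected into $Q$ deactivates \emph{all globally active nodes} within distance $2k$ in $G$, across cluster boundaries (an extra $O(k)$-round step per color). Consequently, for any $v$, only the first color class in which some distance-$k$ neighbor of $v$ is selected can contribute to $d^k(v,Q)$ — afterwards $v$'s entire distance-$k$ neighborhood is inactive — so the unmodified per-cluster bound of $72\log n$ already is the global bound, and within one color class at most one cluster can contribute by the $2k+1$ separation. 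This same global deactivation is also what makes your domination case ``absorbed by an earlier color'' rigorous: a $Q_0$-node deactivated before its own cluster is processed has a selected node within distance $2k\le k^2+k$. Without stating and using this cross-cluster deactivation, your degree argument does not go through, and your proposed repair of $\Psi_v$ is unsound.
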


\begin{proof}
	We start by computing a weak $(c,d)$-network decomposition deterministically with cluster distance $2k+1$ and congestion $\tau$. 
	Define a set of \textit{globally active nodes} $H_G := Q_0$. We iterate through the $c$ color classes of the network decomposition. The sparsification algorithm of \Cref{lem:sparsification} is run on clusters of the same color in parallel. The set of globally active nodes is updated after each color. 
	
	Fix a color $1 \le j \le c$ and let $C \subseteq V$ be any cluster of color $j$. The distance-$k$ neighborhood of $C$ also takes part in the algorithm. We refer to $N^k(C)$ as the \textit{cluster border}. 
	Separation of $2k+1$ guarantees that cluster borders are disjoint for clusters of the same color. Let $T_C$ be the Steiner tree of $C$, given by the network decomposition. We can extend $T_C$ to include $N^k(C)$ in $O(k)$ rounds by running a BFS starting from each $v \in C$. Nodes outside of the cluster forward one of the incoming searches for $k$ hops in total. Let $T_C$ be the updated Steiner tree (not necessarily a tree anymore). By disjointness of the cluster borders, congestion is increased on any edge by at most one.

	The final output $Q$ is the union of $Q_C$ sets over all colors and clusters $C$. We prove correctness by showing the two properties of \Cref{lem:sparsification} for $Q$. (Bounded distance-$k$ $Q$-degree): Consider any $v \in V$. Let $C$ be the first cluster containing a distance-$k$ neighbor of $v$ that is selected to $Q$. $C$ is unique, since clusters of the same color are at least $2k+1$ hops apart. The number of distance-$k$ $Q$-neighbors of $v$ is at most $72\log n$ by (I1) for iteration $k$. Furthermore, any sampled node deactivates the globally active nodes in its distance-$2k$ neighborhood. Hence, no more distance-$k$ neighbors of $v$ join $Q$ in later colors. (Domination): Let $x \in Q_0$ be any initially active node closest to $v$. There are three possibilities: (i) $x \in Q$, (ii) $x$ starts as active in its own cluster, but $x \not\in Q$ (iii) $x$ is deactivated before its own cluster runs the algorithm. For (i), the domination clearly holds. In the second case, there is another node in $Q$ in the cluster of $x$, because $x$ is not in $Q$. The distance from $x$ to a node in $Q$ in the cluster is at most $k^2 + k$ by \Cref{lem:sparsification}. In case (iii), the reason that $x$ was deactivated is because there is a selected node in another cluster, within distance $2k$ from $x$. Hence, the domination property holds in all three cases.  
	
	Lastly, we analyze the runtime. Computing the network decomposition takes $T^{\text{ND}}$ rounds. We iterate through $c$ colors. The diameter of each cluster is $d$ and communication in the Steiner tree of each cluster is slowed down by a $O(\tau)$ factor due to congestion. Computing the cluster border and deactivating globally active nodes in other clusters takes $O(k)$ rounds each. Hence, running the sparsification algorithm for a single color class takes $O(d \tau k \cdot \log^2 n \cdot \log \Delta + k^2 \log \Delta))$ rounds. The total runtime is $O(T^{\text{ND}} + c(d\tau k \cdot \log^2 n \cdot \log \Delta + k^2 \log \Delta))$. Using the network decomposition algorithm of \Cref{thm:networkDecomp}, we compute a $(\widetilde{O}(\log n), O(k \cdot \log n))$-network decomposition with separation $2k+1$ and $\tau=1$ in $\widetilde{O}(k \cdot \log^3 n)$ rounds. The total runtime is $\widetilde{O}(k^2 \cdot \log^4 n \cdot \log \Delta)$. 
\end{proof}

\section{Deterministic \texorpdfstring{$(k+1,k^2)$}{(k+1,k\^2}-ruling set (Theorem~\ref{corollary:kksquaredRulingSet})}
\label{sec:DetRulingSets}
In this section, we prove our main result (\Cref{corollary:kksquaredRulingSet}), that is, we present our $\poly \log n$-time deterministic $(k+1, k^2)$-ruling set algorithm. We start by showing that the simplest known deterministic ruling set algorithms extend to power graphs. The downside of these algorithms is that they either provide a poor domination property or are exponentially slower than our main result.

\paragraph{Existing ruling set algorithms for power graphs}
There is a well-known deterministic $(k+1, k\cdot\log n)$-ruling set algorithm \cite{awerbuch1989network}, adapted to the \congest model in \cite{henzinger2016deterministic}. A general form of the algorithm was given by \cite{KMW18}:

\begin{theorem}[\cite{awerbuch1989network,SEW13,henzinger2016deterministic,KMW18}~] 
	\label{thm:detRulingOld}
	Let $k$ be a positive integer. Given a distance-$k$ coloring of $G$ with $\gamma$ colors, there is a deterministic distributed algorithm that, for any $B \ge 2$, computes a $(k+1, k \cdot \lceil \log_B \gamma\rceil)$-ruling set of $G$ in $O(k \cdot B \cdot \log_B \gamma)$ rounds of the \congest model. 
\end{theorem}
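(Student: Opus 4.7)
The plan is to run the classical deterministic ruling-set algorithm of \cite{awerbuch1989network,KMW18} directly on the power graph $G^k$, exploiting the observation that a distance-$k$ coloring of $G$ with $\gamma$ colors is, by definition, a proper $\gamma$-coloring of $G^k$: any two vertices at $G$-distance at most $k$ (i.e., adjacent in $G^k$) receive distinct colors. Thus we may treat $G^k$ as the ``input graph'' of the classical algorithm while using $G$ itself as the physical communication network.

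First, I would recall the classical algorithm: given a proper $\gamma$-coloring of a graph $H$ and a parameter $B\ge 2$, it produces a $(2, L)$-ruling set of $H$ in $O(B\cdot L)$ \CONGEST\ rounds on $H$, where $L=\lceil \log_B \gamma\rceil$. The algorithm proceeds in $L$ phases of $B$ sub-rounds each: in sub-round $b$ of phase $\ell$, every still-active candidate whose $\ell$-th base-$B$ color digit equals $b$ attempts to join the ruling set unless some $H$-neighbor has already joined, and newly joined nodes then deactivate all their $H$-neighbors. Independence follows from the proper coloring, which breaks ties among $H$-adjacent candidates competing in a single sub-round, while a straightforward induction on $\ell$ shows that after phase $\ell$ every surviving candidate lies at $H$-distance at most $\ell$ from some ruling-set member, giving $H$-domination distance at most $L$ at termination.

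Instantiating $H := G^k$, the output is a subset $S\subseteq V$ that is $2$-independent and $L$-dominating in $G^k$. Translating distances back to $G$ turns this into a $(k+1)$-independent and $(k\cdot L)$-dominating set, i.e., exactly the desired $(k+1,\,k\lceil \log_B \gamma\rceil)$-ruling set of $G$, so no further combinatorial argument is needed beyond the known guarantees of the classical algorithm.

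The remaining step, and the main obstacle, is to simulate one round of the algorithm on $G^k$ by $O(k)$ \CONGEST\ rounds on $G$. The only messages sent by the algorithm are short: a ``just joined the ruling set'' flag and, for within-sub-round tie-breaking, an $O(\log n)$-bit color identifier. Each such message must reach all $G^k$-neighbors of the sender, which amounts to a $k$-hop broadcast in $G$. The key point is that every receiver only needs an aggregated summary (whether \emph{some} $G^k$-neighbor has joined, or the \emph{minimum} color among competing candidates in the current sub-round), so intermediate nodes aggregate messages en route and forward a single $O(\log n)$-bit value per edge per round, avoiding congestion. Multiplying the $O(BL)$ \CONGEST\ rounds of the classical algorithm by the $O(k)$ simulation overhead gives the claimed bound of $O(k\cdot B\cdot \log_B \gamma)$.
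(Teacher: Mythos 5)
Your reduction to a proper coloring of $H=G^k$ and your $O(k)$-factor simulation (aggregating an OR of ``a $G^k$-neighbor joined'' and a min over competitors' colors along $k$-hop broadcasts) are fine and match the spirit of the paper's beeping-based simulation. The genuine gap is in the combinatorial core: the algorithm you describe is not the classical one, and its analysis fails. In your rule, a candidate attempts to join in the sub-round of its current digit, ties among adjacent same-sub-round competitors are broken by the full color, and only neighbors of \emph{joined} nodes are deactivated. Then the claimed invariant (``after phase $\ell$ every surviving candidate is within $H$-distance $\ell$ of the ruling set'') is false already after one phase: take a path $x_1-x_2-\dots-x_j$ of candidates that all share the current digit with strictly decreasing colors along the path. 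Only $x_j$ (the local minimum) joins, $x_{j-1}$ is deactivated, and $x_1,\dots,x_{j-2}$ remain active at $H$-distance up to $j-1$ from any ruling-set node, where $j$ can be as large as the number of colors sharing one digit ($\gamma/B$), not $1$. Losers of your color tie-break are neither dominated nor removed, so the ``$+1$ per phase'' chaining has nothing to chain to, and the final domination bound $k\lceil\log_B\gamma\rceil$ is not established for this variant (one can iterate such chains with small gadgets so that far-away candidates stay active through all phases).

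The algorithm behind Theorem~\ref{thm:detRulingOld} avoids exactly this: no node ``joins'' during the process. One keeps a candidate set $U$ (initially $V$) and processes the digits one by one; in step $s$ of digit $i$, candidates with $i$-th digit equal to $s$ beep to their distance-$k$ neighborhood, and any candidate with $i$-th digit greater than $s$ that hears a beep removes itself permanently. Two facts replace your induction: (a) after digit $i$, any two surviving candidates within distance $k$ agree on the first $i$ digits, so at the end they would share a color, which the distance-$k$ coloring forbids --- this yields $(k+1)$-independence of the final $U$; and (b) a node that drops out heard a beep from a candidate whose digit is smaller in the current position, and that beeper cannot be removed for the rest of this digit-iteration, so the domination radius grows by at most $k$ per digit, giving $k\lceil\log_B\gamma\rceil$ overall. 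The crucial differences from your proposal are that comparisons are by the \emph{current digit only} (not the full color), that losing only means leaving the candidate set (so it needs no adjacent joined node), and that the ruling set is the set of candidates surviving all digits. With that algorithm your simulation paragraph goes through essentially unchanged (indeed only an anonymous $1$-bit beep per step needs to be flooded for $k$ hops).
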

\noindent\textit{Proof sketch.} 
Consider the base-$B$ representation of the input coloring, which can be given in $m:=\lceil \log_B \gamma \rceil$ digits. Let $\chi_i(v) \in \{0, \dots, B-1\}$ be the $i$th digit of the color of $v \in V$ in base $B$. Let $U:=V$ be a set of nodes, which is eventually returned as the ruling set. Initially, $U$ is a trivial dominating set, with no independence guarantees. 
The algorithm iterates through the $m$ digits of the coloring. Each iteration $1 \le i \le m$ maintains the following invariant: for any $v, w \in U$ such that $\dist_G(v,w) \le k$, the colors of $v$ and $w$ agree in the first $i$ digits. This implies that after $m$ iterations, any pair of nodes in $U$ within distance $k$ agree in all digits, i.e., have the same color. Such a pair does not exist, since the input is a coloring of $G^k$, hence the result is $(k+1)$-independent. 

Fix some iteration $1 \le i \le m$. It consists of $B$ steps, going through the possible values of the $i$th digit. In the $s$th step, for any $0 \le s \le B-1$, the nodes $v \in U$ with $\chi_i(v) = s$ send a beep to their distance-$k$ neighborhood. Any $w \in U$ with $\chi_i(w) > s$, who receives a beep, removes itself from $U$ permanently (note that beeping nodes do not need to be able to listen to beeps of other nodes while beeping). This maintains the invariant for the $i$th iteration: after $B$ steps, there are no distance-$k$ neighbors $v,w \in U$ with $\chi_i(v) \neq \chi_i(w)$. The $s$th iteration  weakens the domination of $U$ by at most $k$: nodes who beep and cause some of their distance-$k$ neighbors to be removed, will remain in $U$ for the rest of the iteration. Hence, the domination of the final ruling set is $k \cdot m = k \cdot \lceil \log_B \gamma \rceil$. The runtime of one iteration is $k \cdot B$ rounds. The total runtime is $O(k \cdot B \cdot m)=O(k \cdot B \cdot \log_B \gamma)$ \QED

\vspace{3mm}

The algorithm of \Cref{thm:detRulingOld} scales well for larger distances, but the domination is non-constant and depends on the size of the input coloring. Furthermore, deterministic coloring of power graphs is difficult. For larger distances, one usually has to rely on a coloring given by the unique IDs, which makes the domination logarithmic in $n$. Constant domination can be achieved by choosing a non-constant base $B$, but the runtime grows significantly:

\begin{corollary}
	Let $k$ be a positive integer. There is a deterministic distributed algorithm that, for any $c \ge 1$, computes a $(k+1, ck)$-ruling set of $G$ in $O(k \cdot c \cdot n^{1/c})$ rounds of the \congest model.   
\end{corollary}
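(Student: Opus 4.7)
The plan is to derive this corollary as a direct instantiation of \Cref{thm:detRulingOld}. The theorem takes as input a distance-$k$ coloring of $G$ with $\gamma$ colors and produces a $(k+1, k\lceil \log_B \gamma\rceil)$-ruling set in $O(k\cdot B \cdot \log_B \gamma)$ rounds for any parameter $B\ge 2$. To turn domination and runtime into functions of $c$ and $n$ as in the statement, I will (i) supply a ``free'' distance-$k$ coloring, and then (ii) choose $B$ as a suitable function of $n$ and $c$.

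For (i), note that the unique $O(\log n)$-bit identifiers assigned to the nodes are, by definition, all distinct, hence they constitute a valid distance-$k$ coloring of $G$ (indeed a proper coloring of $G^k$) with $\gamma = n^{O(1)}$ colors. This coloring requires zero rounds to set up. For (ii), I set $B := \lceil n^{1/c}\rceil$. Then $\log_B \gamma = O(c)$, so \Cref{thm:detRulingOld} yields a $(k+1, O(ck))$-ruling set in $O(k \cdot n^{1/c} \cdot c) = O(kc\cdot n^{1/c})$ rounds. Absorbing the hidden constant from the ID range into $c$ (i.e., applying the theorem with parameter $c' = c/\alpha$ when the IDs lie in $[n^\alpha]$) rescales the domination to exactly $ck$ while keeping the runtime in the form $O(kc\cdot n^{1/c})$, since a constant in the exponent only changes the hidden constant in front.

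There is essentially no obstacle beyond bookkeeping: the whole argument is a one-line substitution into \Cref{thm:detRulingOld}. The only mildly subtle point is that the stated runtime $O(kc\cdot n^{1/c})$ hides the constant arising from the polynomial ID range, which I will handle by the rescaling of $c$ mentioned above. Nothing else needs to be verified, as correctness, independence, and domination all follow immediately from the corresponding guarantees of \Cref{thm:detRulingOld}.
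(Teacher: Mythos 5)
Your proposal is correct and matches the paper's own proof: both instantiate \Cref{thm:detRulingOld} with the unique node identifiers serving as the distance-$k$ coloring and $B=\lceil n^{1/c}\rceil$, giving domination $k\lceil \log_B \gamma\rceil \le ck$ and runtime $O(k\cdot c\cdot n^{1/c})$. Your extra remark about rescaling $c$ to absorb the polynomial ID range is a harmless bookkeeping refinement of the same argument.
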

\begin{proof}
	Use the $(k+1, k \lceil \log_B \chi\rceil)$-ruling set algorithm of \Cref{thm:detRulingOld} with the unique IDs of the nodes as colors and $B=\lceil n^{1/c}\rceil$. The runtime is $O(k \cdot c \cdot n^{1/c})$ rounds. The resulting ruling set has domination 
	$k \lceil \frac{\log n}{\log \lceil n^{1/c}\rceil} \rceil \le k c$. 
\end{proof}

\noindent To our knowledge, this is the only deterministic ruling set algorithm that, for $k \ge 2$ and $k=o(\log n)$,  computes a ruling set with domination constant in $n$. For $k = \Omega(\log n)$, it is possible to compute a $(k+1,O(k^2))$-ruling set with \Cref{thm:detRulingOld} in $O(k \cdot \log n)$ time. 

Network decompositions also yield ruling sets. Given a $(c,d)$-network decomposition (weak or strong diameter), with separation $2k$ between clusters of the same color, it is possible to compute a $(k+1,\max(d,k))$-ruling set in $O(c(d + k))$ rounds. Initially, all nodes are active and the ruling set is empty. Iterate through the $c$ color classes. In clusters of a given color class, pick any active node (if exists) to join the ruling set, e.g. by choosing the one with the minimum ID in O(d) time. Nodes joining the ruling set deactivate their distance-$k$ neighborhood in the whole graph. The distance-$k$ neighborhoods are disjoint for distinct clusters of the same color, hence the result is $(k+1)$-independent. The domination is at least $d$, because there is a ruling set node in any cluster, unless all of the nodes of the cluster are already dominated. The current state of the art network decomposition with larger separation (see \Cref{thm:networkDecomp}) does not improve from the result of \Cref{thm:detRulingOld}. Furthermore, to find ruling sets of power graphs with constant domination, we would need the cluster diameter to be constant. This is not possible with a polylogarithmic number of colors, as a constant diameter implies $\Omega(n^{1/\lambda})$ color classes for some constant $\lambda$ \cite{LS93}.

\paragraph{Deterministic $(k+1,k^2)$-ruling set} 
Our deterministic ruling set algorithm computes a $(k+1,k^2)$-ruling set for any $k\ge 1$ in $\poly \log n$-time. The algorithm combines sparsification with a maximal independent set procedure on an induced subgraph of $G^k$. The goal of sparsification is to find a set of vertices $Q \subseteq V$, such that all $v \in V$ are close to $Q$ in $G$, while having a bounded number of distance-$(k-1)$ neighbors in $Q$. The second property makes it possible to efficiently simulate algorithms on $G^k[Q]$ with communication network $G$. In particular, we can run any maximal independent set algorithm in a black-box way to compute a MIS of $G^k[Q]$. The result is returned as the ruling set. The following lemma describes the general approach formally:

\begin{restatable}[Ruling set via sparsification]{lemma}{thmDetRuling}
	\label{theorem:sparsificationToRulingset}
	Let $k \ge 1$ (potentially a function of $n$). Denote the complexity of solving the following problems deterministically in the \congest model by
	\begin{itemize}
		\item $T^{\mathsf{MIS}}(n, \Delta)$: complexity of solving MIS in graphs with $n$ nodes and maximum degree~$\Delta$ 
		\item $T^\mathsf{sparsification}(k, \beta, \maxdeg)$: complexity of finding a subset $Q \subseteq V$ such that all $v \in V$ have $d^k(v,Q) \le \maxdeg$ and $\dist_G(v,Q) \le \beta$ (in graphs with $n$ nodes and maximum degree~$\Delta$) 
	\end{itemize}
	There is a deterministic distributed algorithm that, given a graph $G=(V,E)$ with $n$ nodes and \linebreak maximum degree $\Delta$, computes a $(k+1,\beta + k)$-ruling set of $G$ in 
	$O(T^\mathsf{sparsification}(k-1,\beta,\maxdeg) + k  \maxdeg + \linebreak (k+ \maxdeg^2) \cdot T^{\text{MIS}}(n,\Delta \cdot \maxdeg))$
	rounds of the \congest model. 
	
\end{restatable}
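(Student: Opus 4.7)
The plan is to reduce computing a ruling set of $G$ to computing a maximal independent set on a suitably chosen induced subgraph of the power graph $G^k$. I would first sparsify $G$ to obtain a set $Q \subseteq V$ that still $\beta$-dominates $V$ but has bounded distance-$(k-1)$ $Q$-degree, and then return an MIS of $G^k[Q]$ as the ruling set. The key enabler is that distance-$(k-1)$ sparsity of $Q$ is exactly the precondition needed by \Cref{lem:simulation} to simulate an arbitrary \congest algorithm on $G^k[Q]$ with $O(k + \maxdeg^2)$ overhead per simulated round.

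Concretely, I would proceed in three steps. First, invoke the sparsification subroutine with parameter $k-1$ to obtain $Q$ satisfying $d^{k-1}(v, Q) \leq \maxdeg$ and $\dist_G(v, Q) \leq \beta$ for every $v \in V$, at cost $T^{\mathsf{sparsification}}(k-1, \beta, \maxdeg)$. Second, prepare the inputs required by \Cref{lem:simulation}: for each $x \in Q$, a depth-$k$ BFS tree rooted at $x$, together with knowledge at every $v \in Q$ of the IDs in $N^k(v, Q)$ and in $N^{k-1}(w, Q)$ for each $w \in N(v)$. Starting from the trivial knowledge of $N^1(v, Q)$ obtained in one round of talking to immediate neighbors, I would apply \Cref{lem:sendingIDs} $k-1$ times to extend both the ID knowledge and the BFS trees one hop at a time; the final application delivers, as a by-product, $N^{k-1}(w, Q)$ to each $v \in Q$ with $w \in N(v)$. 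Since $d^s(v, Q) \leq d^{k-1}(v, Q) \leq \maxdeg$ for all $s \leq k-1$, each application costs $O(\maxdeg)$ rounds in \congest, so this preparation takes $O(k \maxdeg)$ rounds in total. Third, run any off-the-shelf MIS algorithm on $G^k[Q]$, simulated in $G$ via \Cref{lem:simulation}. Because every $u \in N^k(v, Q)$ lies in $\{w\} \cup N^{k-1}(w, Q)$ for some $w \in N(v)$, the maximum degree of $G^k[Q]$ is $O(\Delta \cdot \maxdeg)$, so the MIS algorithm terminates in $T^{\mathsf{MIS}}(n, \Delta \cdot \maxdeg)$ simulated rounds, each costing $O(k + \maxdeg^2)$ real rounds.

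For correctness, any two nodes in the returned MIS $R$ are non-adjacent in $G^k[Q]$ and hence in $G^k$, so their $G$-distance is at least $k+1$; this yields $(k+1)$-independence. For $(\beta+k)$-domination, fix $v \in V$ and pick $w \in Q$ with $\dist_G(v, w) \leq \beta$, which exists by the sparsification guarantee; maximality of $R$ in $G^k[Q]$ then furnishes $u \in R$ with $\dist_{G^k}(w, u) \leq 1$, i.e., $\dist_G(w, u) \leq k$, and the triangle inequality gives $\dist_G(v, u) \leq \beta + k$. Summing the three runtime contributions yields the claimed bound.

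The main potential obstacle would be avoiding congestion while simulating the MIS algorithm on $G^k[Q]$, but this is already packaged inside \Cref{lem:simulation}, whose preconditions are met exactly because the sparsification bounds the distance-$(k-1)$ $Q$-degree by $\maxdeg$. In short, the proof is essentially a plug-in of the sparsification guarantee into the simulation tool, and no new technical ideas are needed beyond those already developed in Sections~\ref{sec:comTools}--\ref{sec:detSparsification}.
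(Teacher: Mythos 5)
Your proposal is correct and follows essentially the same route as the paper's proof: sparsify with parameter $k-1$, build the depth-$k$ BFS trees and ID knowledge via repeated applications of \Cref{lem:sendingIDs} in $O(k\maxdeg)$ rounds, then simulate a black-box MIS algorithm on $G^k[Q]$ via \Cref{lem:simulation} with the degree bound $\Delta\cdot\maxdeg$, and conclude independence and $(\beta+k)$-domination exactly as the paper does. No gaps.
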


\begin{proof}
	Let $\beta$ and $\maxdeg$ be any positive integers. Start by finding a subset of nodes $Q \subseteq V$ such that all $v \in V$ have at most $\maxdeg$ distance-($k-1$) $Q$-neighbors and $\dist_G(v,Q) \le \beta$. The time complexity of finding $Q$ is $T^\mathsf{sparsification}(k-1,\beta,\maxdeg)$. 
	
	We simulate computing a maximal independent set on $G^{k}[Q]$ with communication network $G$ using \Cref{lem:simulation}. To prepare for the simulation, we need to form BFS trees of depth $k$ rooted at nodes in $Q$, used for communication between nodes in $Q$. The trees are also provided by invariant (I3) of \Cref{sec:sparsGk}, so the following preparation step can be skipped when using the sparsification algorithm of \Cref{lem:sparsification}. For each $v \in Q$, let $T_v$ be a BFS tree, initially containing just $v$. Run $k-1$ iterations of \Cref{lem:sendingIDs}. In the $s$th iteration, for any $1 \le s \le k-1$, each $v \in V$ learns the set of IDs in $N^{s+1}(v,Q)$ and the tree of each $v \in Q$ is extended to cover $N^{s+1}(v)$. Any $v \in V$ knows, for each tree $T$ it belongs to, the ID of the root, descendants$(T,v) \subseteq N(v)$ and ancestor$(T,v) \in N(v)$.  The $s$th iteration takes $O(\maxdeg)$ rounds, using the $\Theta(\log n)$-bit communication bandwidth and $O(\log n)$-bit IDs. After $k-1$ iterations, required BFS trees are formed and known distributedly. Also, each $v \in Q$ knows its degree and the IDs of its neighbors in $G^k[Q]$. The preparation step takes $O(k \cdot \maxdeg)$ rounds. 
	
	The maximum degree of $G^{k}[Q]$ is at most $\Delta \cdot \maxdeg$, since all $v \in V$ have at most $\Delta$ immediate neighbors in $G$, each of which has at most $\maxdeg$ distance-$(k-1)$ $Q$-neighbors. The number of nodes in $G^{k}[Q]$ is at most $n$. The complexity of computing a MIS on an input graph $G^k[Q]$ is $T^{\mathsf{MIS}}(n, \Delta \cdot \maxdeg)$. We simulate computing MIS on $G^{k}[Q]$ with communication network $G$. By \Cref{lem:simulation}, we can simulate any \congest algorithm on $G^k[Q]$ with communication network $G$ with an $O(k +\maxdeg^2 )$ slowdown factor. This gives a runtime of $O((k +\maxdeg^2)\cdot T^{\mathsf{MIS}}(n, \Delta \cdot \maxdeg))$ for the MIS step. Let $\IS \subseteq Q$ be the resulting MIS of $G^k[Q]$. 
	
	For correctness, note that a maximal independent set of $G^{k}$ is a $(k+1,k)$-ruling set of $G$. A maximal independent set of $G^{k}[Q]$ is independent in $G^{k}$. It is not maximal in $G^k$, we only have $\dist_G(v,\IS) \le k$ for all $v \in Q$. Hence, for all $v \in V$, $\dist_G(v,\IS) \le \dist_G(v, Q) + k \le \beta + k$. The total runtime is $T^\mathsf{sparsification}(k-1) + k\maxdeg + (k+\maxdeg^2 ) \cdot T^\mathsf{MIS}(n, \Delta\cdot \maxdeg)$.
\end{proof}

We use the sparsification algorithm of \Cref{lem:ndSparsification} and the maximal independent set algorithm of \cite{FGG22} to compute a $(k+1,k^2)$-ruling set in $\poly \log n$-time: 

\corDetRulingFinal*

\begin{proof}
	Apply \Cref{theorem:sparsificationToRulingset}. We have $T^\mathsf{sparsification}(k) = \widetilde{O}(k^2 \cdot \log^4 n \cdot \log \Delta)$ by \Cref{lem:ndSparsification}. For the computation of maximal independent set, we use \cite[Theorem 3.1]{FGG22} which computes a MIS in $O(\log^2 \Delta \cdot \log\log \Delta \cdot \log n)$ rounds in any $n$-node graph with maximum degree $\Delta$. This gives $T^{\text{MIS}}(n, \Delta \log n) = \widetilde{O}(\log n \cdot \log^2 \Delta)$. The runtime of \Cref{theorem:sparsificationToRulingset} becomes
	$\widetilde{O}(k^2 \cdot \log^4 n \cdot \log \Delta + (k + \log^2 n)\cdot \log n \cdot \log^2 \Delta) = \widetilde{O}(k^2 \cdot \log^4 n \cdot \log \Delta)$.
\end{proof}

\section{Maximal Independent Set on \texorpdfstring{$G$}{G}: Shattering Revisited (Theorem~\ref{thm:MISrepaired})}
\label{sec:revisedShattering}

In this section, we  present   randomized algorithm for computing an MIS (of $G$) in the \LOCAL and the \CONGEST model.
Just as previous works, we rely on the shattering framework and our high level approach is identical to these works, mainly to \cite{BEPS16,BEPS16,ghaffari16_MIS,Gha19}. In fact, we use the randomized pre-shattering phase of \cite{ghaffari16_MIS,Gha19} in a black-box manner in order to effectively reduce the problem to small unsolved components. \pagebreak Then in \cite{BEPS16,BEPS16,ghaffari16_MIS,Gha19} and in our work these small components are solved in the \emph{post-shattering phase}. We present two different post-shattering approaches that differ algorithmically and/or analytically from previous works. Additionally, they are simpler than previous approaches.

\subsection{Basics of Shattering} 
As we rely on the same general shattering framework \cite{BEPS16} and the pre-shattering phase of \cite{ghaffari16_MIS} this section is very similar to statements in these works. 

\paragraph{Pre-shattering.} The first part of the algorithm is called the \textit{pre-shattering} phase. It consists of $O(\log \Delta)$ steps of a random process, simply referred to as IndependentSet. 
The pre-shattering phase computes an independent set $\IS \subseteq V$. Some nodes $B := V \setminus (\IS \cup N(\IS))$ remain undecided, as they didn't join $\IS$ and do not have a neighbor who joined it. It can be shown that, for any 5-independent set $U \subseteq V$, the probability that all nodes in $U$ remain undecided is low:
\begin{lemma}[\cite{ghaffari16_MIS}]\label{lem:5indepUndecided}
    Let $H=(V,E)$ with at most $n$ nodes and maximum degree at most $\Delta$. Let $c > 0$ be an arbitrary constant. Let $U \subseteq V$ be any 5-independent set in $H$. The probability that $U \subseteq B$ after running IndependentSet on $H$ for $\Theta(c \log \Delta)$ steps is at most $\Delta^{-c|U|}$.
\end{lemma}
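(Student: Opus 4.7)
The plan is to combine a pointwise tail bound for a single vertex with the $5$-independence of $U$. First I will establish that, for each fixed $v \in V$, after $T = \Theta(c\log\Delta)$ steps of \texttt{IndependentSet} on $H$, one has $\pr(v \in B) \leq \Delta^{-c}$. Then I will show that the events $\{v\in B\}_{v\in U}$ are mutually independent, so $\pr(U \subseteq B) = \prod_{v\in U}\pr(v \in B) \leq \Delta^{-c|U|}$.

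For the pointwise bound I rely on the standard single-vertex analysis of \texttt{IndependentSet}: one shows that, in each step past a brief warm-up phase, $v$ becomes decided (either $v\in\IS$ or some $u\in N(v)$ joins $\IS$) with at least some absolute constant probability, so remaining undecided for $\Theta(\log\Delta)$ further steps has probability at most $1/\Delta$; scaling the number of steps by the factor $c$ gives $\Delta^{-c}$. This is exactly the per-node guarantee that \texttt{IndependentSet} is designed to satisfy and I would invoke it as a black box from \cite{ghaffari16_MIS} rather than re-deriving it.

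For the independence step I observe that whether $v \in B$ is determined solely by the random bits used inside the closed $2$-hop ball around $v$, namely $\{v\}\cup N(v)\cup N^2(v)$. Indeed, the membership of any $u\in\{v\}\cup N(v)$ in $\IS$ at any step depends only on the random choices of vertices in $\{u\}\cup N(u)$, which is contained in $\{v\}\cup N(v)\cup N^2(v)$, and $v\in B$ iff none of these vertices ever joins $\IS$. Since $U$ is $5$-independent in $H$, for distinct $u,v\in U$ we have $\dist_H(u,v)\geq 5$, so the two closed $2$-hop balls $\big(\{v\}\cup N(v)\cup N^2(v)\big)$ and $\big(\{u\}\cup N(u)\cup N^2(u)\big)$ are disjoint, because a common vertex $w$ would force $\dist_H(u,v)\leq \dist_H(u,w)+\dist_H(w,v)\leq 4$. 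Hence the events $\{v\in B\}_{v\in U}$ are functions of disjoint sets of independent random bits, and are therefore mutually independent. Multiplying the per-node bounds yields the claim.

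The main obstacle, which I would import rather than reprove, is the pointwise bound; its proof requires the internal analysis of the effective-degree variables maintained by \texttt{IndependentSet}. The localization of $\{v\in B\}$ to the $2$-hop ball is immediate once one tracks which random choices can possibly influence $v$'s status, and the product step then needs only the definition of $5$-independence.
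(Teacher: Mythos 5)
There is a genuine gap, and it sits exactly at the step you treat as ``immediate.'' Your claim that the event $\{v \in B\}$ is determined solely by the random bits used inside the closed $2$-hop ball around $v$ is false for a multi-step algorithm such as IndependentSet. The algorithm is adaptive: whether a neighbor $u$ of $v$ can join $\IS$ in step $t$ depends on $u$'s marking probability (desire level) and on which nodes around $u$ are still undecided at step $t$, and these quantities are themselves functions of the random choices made in earlier steps by nodes at distance $2$ from $u$, whose states in turn depend on nodes at distance $3$, and so on. Over $T=\Theta(c\log\Delta)$ steps, the status of $v$ can depend on random bits of nodes up to distance $\Theta(T)$ away, not distance $2$. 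Consequently, for a $5$-independent set $U$ the events $\{v\in B\}_{v\in U}$ are \emph{not} functions of disjoint collections of random bits, and your product formula $\pr(U\subseteq B)=\prod_{v\in U}\pr(v\in B)$ does not follow. Note that the paper itself does not reprove this lemma (it imports it from the cited MIS paper), but where it sketches the analogous statement for BeepingMIS (\Cref{lem:shatteringBeeping}) it is careful to say that decisions of nodes more than $2$ hops apart are independent \emph{when restricted to a single step} --- precisely because unconditional independence over the whole execution, which your argument needs, does not hold.

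The correct route, which is also what the cited proof does, is a step-by-step conditioning argument: condition on the entire history up to the current step; within one step, the events that the individual nodes of $U$ fail to make progress depend only on fresh random bits in their (disjoint, by $5$-independence) constant-radius neighborhoods and are therefore conditionally independent; combine this with the golden-round analysis showing each undecided node has, in any execution, many steps in which it gets decided with constant conditional probability. Multiplying these conditional bounds across nodes and steps yields $\Delta^{-c|U|}$. Your pointwise bound $\pr(v\in B)\le\Delta^{-c}$ is fine to import as a black box, but on its own it only gives $\pr(U\subseteq B)\le\Delta^{-c}$ via a union-free single-node bound; the whole content of the lemma is the amplification to the exponent $|U|$, and that requires the conditional, per-step independence argument rather than the disjoint-2-ball argument you propose.
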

\noindent The same statement as in \Cref{lem:5indepUndecided} holds for the algorithm of \cite{BEPS16} after $O(\log^2 \Delta)$ steps. A consequence of \Cref{lem:5indepUndecided} is that, for any fixed 5-independent $U \subseteq V$ with size at least $t := \log_\Delta n$, the probability that all nodes in $U$ remain undecided is at most $1/n^c$. This only holds for a fixed $U$, and the number of choices for such $U$ is prohibitively high to get a with high probability guarantee by taking a union bound. For this reason, an additional connectedness requirement is added. Recall that a set $U \subseteq V$ is $s$-connected in $G$ if for all $U' \subset U: \dist_G(U', U \setminus U) \le s$. We also use the fact that a $(5,4)$-ruling set of a $s$-connected set $U$ is $5$-independent and $(s+8)$-connected.

\vspace{2mm}
\begin{restatable}[Ruling set of $s$-connected component]{lemma}{lemrulingConnected}
	\label{lem:54rulingConnected}
	Let $H=(V,E)$. Let $\alpha, s \ge 1$ and let $\beta \ge \alpha -1$. Let $U \subseteq V$ be an $s$-connected set in $H$. Any $R \subseteq U$ that is an $(\alpha,\beta)$-ruling set of $U$ with respect to distances in $H$, is $\alpha$-independent and $(s+2\beta)$-connected in $H$. 
\end{restatable}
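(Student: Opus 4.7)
\textbf{Proof plan for \Cref{lem:54rulingConnected}.}

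The $\alpha$-independence of $R$ in $H$ is immediate from the definition of an $(\alpha,\beta)$-ruling set, so the whole task is to establish $(s+2\beta)$-connectedness of $R$ in $H$. Recall that this means: for every nonempty proper subset $R' \subset R$, there exist $r \in R'$ and $r' \in R \setminus R'$ with $\dist_H(r,r') \le s + 2\beta$.

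Fix such an $R'$. The plan is to transfer the required connectivity from $U$ (which is $s$-connected in $H$ by assumption) to $R$, paying a cost of $\beta$ at each end to hop from a node in $U$ to its dominator in $R$. Concretely, assign every $u \in U$ to some closest dominator $\phi(u) \in R$, so that $\dist_H(u,\phi(u)) \le \beta$; ties are broken arbitrarily. Partition $U$ into $U' := \phi^{-1}(R')$ and $U \setminus U' = \phi^{-1}(R \setminus R')$. Both pieces are nonempty because $R \subseteq U$ and for any $r \in R$ we have $\phi(r) = r$ (distance $0$), so $R' \subseteq U'$ and $R \setminus R' \subseteq U \setminus U'$.

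Now invoke the $s$-connectedness of $U$ in $H$ on the proper subset $U' \subset U$: there exist $u \in U'$ and $u' \in U \setminus U'$ with $\dist_H(u, u') \le s$. Let $r := \phi(u) \in R'$ and $r' := \phi(u') \in R \setminus R'$. By the triangle inequality,
\[
\dist_H(r, r') \;\le\; \dist_H(r, u) + \dist_H(u, u') + \dist_H(u', r') \;\le\; \beta + s + \beta \;=\; s + 2\beta,
\]
which is exactly what we needed.

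I do not expect a real obstacle: once one chooses to define $U'$ via the ``closest dominator in $R'$'' map, the argument is a single triangle inequality. The only subtlety is ensuring that $U'$ is a nonempty proper subset of $U$, which is handled by the observation that each $r \in R$ maps to itself under $\phi$. The hypothesis $\beta \ge \alpha - 1$ plays no role in the argument; it is presumably included only to ensure that the ruling set notion is non-degenerate in the intended regime.
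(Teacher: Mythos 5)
Your proof is correct and takes essentially the same route as the paper's: partition $U$ according to $R'$, apply the $s$-connectedness of $U$ to that partition, and bridge the gap with two domination hops of length at most $\beta$ each via the triangle inequality, giving $\beta + s + \beta$. The only cosmetic difference is that you define the partition through a closest-dominator map $\phi$, whereas the paper uses the distance-$\beta$ $U$-neighborhood $X$ of $R'$; your variant has the minor advantage that $U\setminus U'$ is automatically nonempty (since $R\setminus R' \subseteq U\setminus U'$), an edge case the paper's write-up glosses over.
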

\begin{proof}
    See \Cref{fig:abRulingConnected} for an illustration of the proof. First, $R$ is $\alpha$-independent in $H$ by definition. To prove that $R$ is $(s+2\beta)$-connected, we need to show that for all $R' \subset R: \dist_H(R', R \setminus R') \le s+2\beta$. Note that by the domination of $R$, for any $v \in U$, there exists some $w \in R$ s.t. $\dist_H(v,w) \le \beta$. Now, take any $R' \subset R$. Let $X = N^\beta_H(R',U) = \cup_{u \in R'} N^\beta_H(u,U)$ be the distance-$\beta$ $U$-neighborhood in $H$ of nodes in $R'$. By the $s$-connectedness of $U$, we have $\dist_H(X, U\setminus X) \le s$. Let $x \in X$ and $y \in U \setminus X$ be some minimizers of $\dist_H(X, U\setminus X)$. We have $\dist_H(R', x) \le \beta$ by definition of $X$. For $y$, there exists some $w \in R$ s.t. $\dist_H(y, w) \le \beta$. Furthermore, since $y \not\in X$, we know that $y$ cannot be dominated by a node in $R'$, hence $w \in R \setminus R'$. In total, this gives
	\begin{align}
		\dist_H(R', R \setminus R') \le \dist_H(R', x) + \dist_H(x,y) + \dist_H(y, R \setminus R') \le \beta + s + \beta & 
		\qedhere
	\end{align} 
\end{proof}

\begin{figure}[h]
	\caption{A $\beta$-dominating set $R$ (red and green nodes) of a $s$-connected set $U$ (in grey) is $(s+2\beta)$-connected. $X \subseteq U$ is the distance-$\beta$ $U$-neighborhood of $R' \subset R$. The distance from $X$ to $U \setminus X$ is at most $s$. There, $y \in U\setminus X$ is dominated by another node $w \in R \setminus R'$.}
	\label{fig:abRulingConnected}
	\centering
	\includegraphics[width=0.7\textwidth]{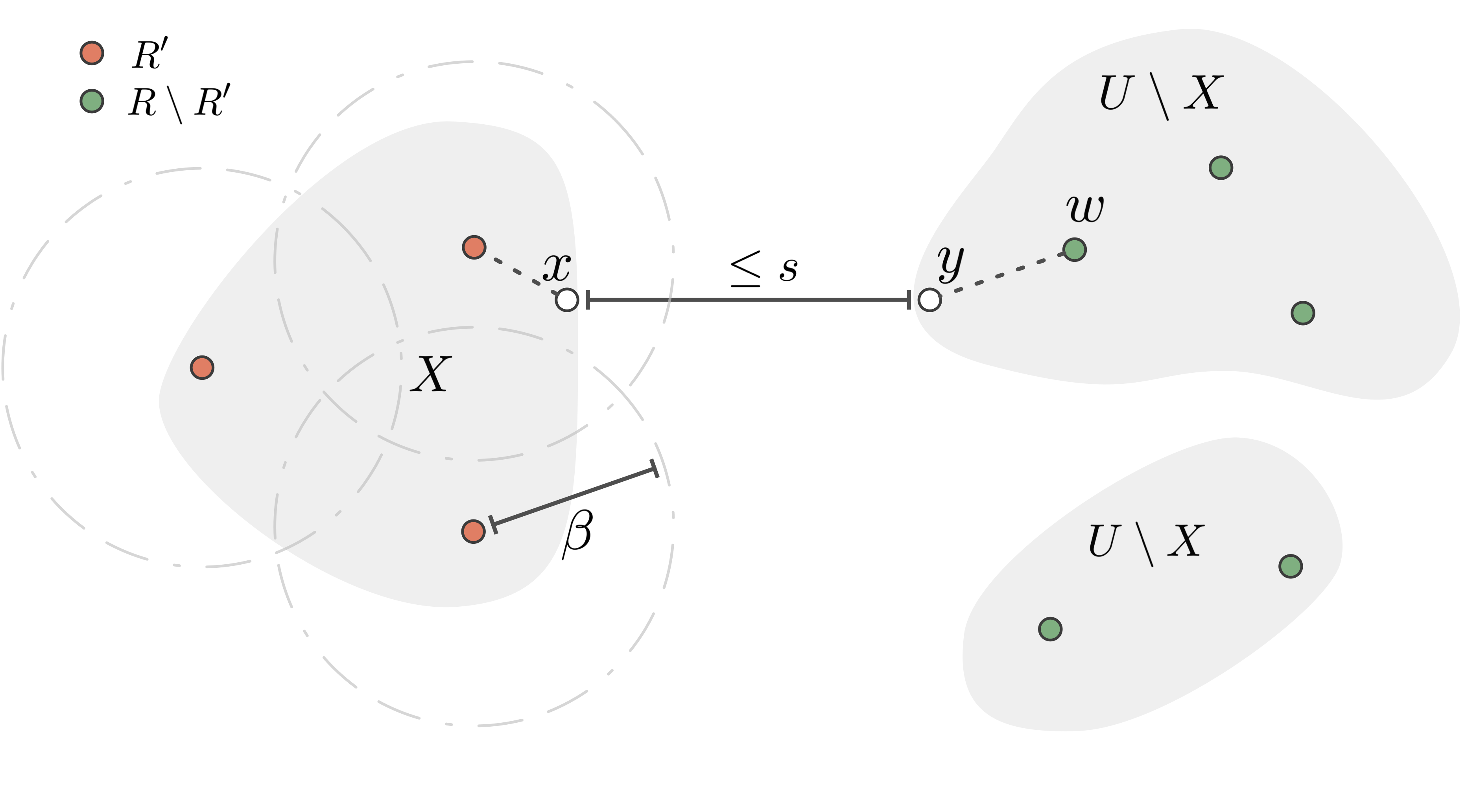}
\end{figure}

This makes it possible to prove the following \textit{shattering guarantee} (\cite[Theorem 3.3]{BEPS16}, \cite[Lemma 4.2]{ghaffari16_MIS}). We use a slightly more general version of the same statement:

\newpage
\begin{lemma}[Shattering]\label{lem:shatteringV2}
    Let $s \ge 1$ and $H=(V,E)$ be any graph with at most $n$ nodes and maximum degree at most~$\Delta$. Let  $t=\log_\Delta n$. Run $\Theta(s \log \Delta)$ steps of IndependentSet on $H$. Let $\IS \subseteq V$ be the computed independent set and $B = V \setminus (\IS \cup N(\IS))$ be the undecided nodes. With high probability in $n$, 
    \begin{itemize}
        \item [(P1)] There is no 5-independent $(8+s)$-connected $U \subseteq B$ s.t. $|U| \ge t$.
        \item [(P2)] All s-connected sets $C \subseteq B$ have at most $O(t \cdot \Delta^4)$ nodes.
    \end{itemize}
\end{lemma}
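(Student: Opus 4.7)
The plan is to prove (P1) by a union bound supported by \Cref{lem:5indepUndecided}, and then to derive (P2) from (P1) via a ruling-set compression argument based on \Cref{lem:54rulingConnected}.

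For (P1), I would fix any 5-independent $(8+s)$-connected set $U\subseteq V$ of size exactly $t=\log_\Delta n$. By \Cref{lem:5indepUndecided}, if IndependentSet is run for $\Theta(c\log\Delta)$ steps, then $\Pr(U\subseteq B)\le \Delta^{-c|U|}=n^{-c}$. To handle all such $U$ simultaneously I need a count. Since $U$ is $(8+s)$-connected in $H$, it induces a connected node-subset of the power graph $H^{s+8}$, whose maximum degree is at most $\Delta^{s+8}$. A standard tree-counting estimate (the number of connected node-subsets of size $t$ in any graph of maximum degree $d$ is at most $n\cdot(4d)^t$) then bounds the number of candidates by $n\cdot(4\Delta^{s+8})^t\le \Delta^{O(st)}$. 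Choosing the hidden constant inside $\Theta(s\log\Delta)$ large enough that $c$ exceeds the $O(s)$ appearing in this exponent by any desired margin, a union bound yields failure probability at most $\Delta^{O(st)}\cdot\Delta^{-ct}=n^{-\Omega(1)}$. Finally, any larger 5-independent $(s+8)$-connected subset of $B$ would contain a size-$t$ subset of the same type (5-independence is hereditary and a connected graph contains connected subgraphs of every smaller size), so bounding size-$t$ sets suffices.

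For (P2), I would argue by contradiction. Suppose some $s$-connected set $C\subseteq B$ has $|C|\ge c'\cdot t\cdot \Delta^4$ for a sufficiently large constant $c'$. Greedily compute any $(5,4)$-ruling set $R\subseteq C$ of $C$ with respect to distances in $H$. Since every node of $R$ dominates at most $1+\Delta+\Delta^2+\Delta^3+\Delta^4\le 2\Delta^4$ nodes of $C$ at $H$-distance at most $4$, we obtain $|R|\ge |C|/(2\Delta^4)\ge t$. By \Cref{lem:54rulingConnected} applied with $\alpha=5$ and $\beta=4$, $R$ is 5-independent and $(s+8)$-connected in $H$; and since $R\subseteq C\subseteq B$, the set $R$ is precisely the structure forbidden by (P1), contradicting the already-proved claim.

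The main technical obstacle is the counting step in (P1): the tree-counting estimate is classical for ordinary connected subsets, but here connectivity is measured at distance $s+8$ in $H$, so the count is over connected subsets of $H^{s+8}$ and a factor of $\Delta^{s+8}$ enters the base of the exponential. This is exactly what forces the $\Theta(s\log\Delta)$ running time demanded in the statement; for the constant-$s$ regime of \cite{BEPS16,ghaffari16_MIS} it collapses back to $O(\log\Delta)$. The remaining pieces—in particular the reduction in (P2)—are routine once \Cref{lem:54rulingConnected} is available, provided one is careful to measure ruling-set distances in $H$ rather than in $H[C]$, so that the hypotheses of \Cref{lem:54rulingConnected} apply directly.
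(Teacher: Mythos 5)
Your proposal is correct and follows essentially the same route as the paper: (P1) is proved by combining \Cref{lem:5indepUndecided} with a union bound over connected size-$t$ sets in a power graph of $H$ (the paper counts unlabeled trees embedded in $H^{[5,s+8]}$, you invoke the standard connected-subset count in $H^{s+8}$ — the same estimate), with the constant in $\Theta(s\log\Delta)$ absorbing the $\Delta^{O(st)}$ count; (P2) is derived from (P1) exactly as in the paper, via a greedy $(5,4)$-ruling set of $C$ measured in $H$ and \Cref{lem:54rulingConnected}. No gaps.
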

\begin{proof}
    (P1) Let $s' = s+8$. Any $U \subseteq B$ with these properties forms a $t$-node tree in $H^{[5,s']}$, where $H^{[5,s']}$ is the power graph where $v,w \in V$ are adjacent if $5 \le \dist_H(v,w) \le s'$. There may be many possible trees for $U$. In general, the number of rooted unlabeled $t$-node trees is less than $4^t$, because the Euler tour of such trees can be encoded in $2t$ bits. The number of ways to embed such a tree in $H^{[5,s']}$ is less than $n\cdot \Delta^{s'(t-1)}$. By \Cref{lem:5indepUndecided}, for any constant $c > 0$, the probability that $U \subseteq B$ after $\Theta((s+c) \log \Delta)$ steps is at most $\Delta^{-(s+c) \cdot |U|}$. By union bound, the probability that any such $U$ is contained in the undecided nodes is at most
    $$
        4^t \cdot n \cdot \Delta^{s'(t-1)} \cdot \Delta^{-(s+c)\cdot t}
        = n^{\log_\Delta 4 + 1} \cdot \Delta^{s'(t-1) -(s+c) \cdot t} 
        \le n^{\log_\Delta 4 + 1} \cdot \Delta^{t(8 - c)} 
        \le n^{10-c}
    $$
    (P2) Suppose there is a $s$-connected set $C\subseteq B$ with $t \cdot \Delta^4$ nodes. We show how to construct $U \subseteq B$ that violates (P1). Form a $(5,4)$-ruling set $U\subseteq C$ of $C$, with respect to distances in $G$, greedily. Initially, pick any $v \in C$ and let $U:= \{v\}$. Iteratively select $v \in C \setminus U$ for which $\dist_H(v,U) \ge 5$ and set $U = U \cup \{v\}$. This removes all nodes in $C$ within distance 4 of $v$ from consideration, which is at most $\Delta^4$ nodes. Hence $U$ has size at least $t \cdot \Delta^4 / \Delta^4 = t$. Since $C$ is $s$-connected and $U$ is a $(5,4)$-ruling set of $C$ w.r.t. distances in $H$, \Cref{lem:54rulingConnected} states that $U$ is 5-independent and $(s+8)$-connected in $H$. This is a contradiction of (P1).
\end{proof}

\subsection{Post-shattering.} \Cref{lem:shatteringV2} states that after one pre-shattering phase lasting for $O(\log \Delta)$ rounds, the  graph is shattered into \textit{small connected components} of undecided nodes: 
Formally, \Cref{lem:shatteringV2} (P2), with $s=1$, states that after one phase of pre-shattering the number of nodes in any connected component $C$ of $G[B]$ is at most $O(\log_{\Delta}n \cdot \Delta^4)$, with high probability. 
As $\Delta$ can be much larger than polylogarithmic, we rather rely on property $(P1)$ in order to algorithmically exploit the smallness of the component. 
As a consequence of (P1), in any connected component $C$, there does not exist any $U \subseteq V(C)$ such that $U$ is 5-independent in $G$ and $|U| \ge t$, with high probability. To see this, note that such a $U$ would be a subset of some $(5,4)$-ruling set $U^*$ of $V(C)$. As a $(5,4)$-ruling set of a connected component, $U^*$ is 5-independent and 9-connected by  \Cref{lem:54rulingConnected}. Hence, by (P1) $U^*$ does not remain fully undecided by \Cref{lem:shatteringV2}, with high probability.

\paragraph{\boldmath How to use $(P1)$ algorithmically?}  To get an intuition on how the bound of $P1$  could be used in the rest of the MIS algorithm, suppose we are given a ruling set $R_C \subseteq V(C)$, such that $R_C$ is 5-independent in $G$ and $h$-dominating in $C$. We can contract each undecided node to the closest ruling set node, forming balls of radius $h$. This defines a virtual \textit{ball graph}, with $R_C$ as nodes, and an edge between $v,w \in R_C$ if the corresponding balls share an edge. A network decomposition (see \Cref{sec:preliminaries} for a definition) of the ball graph is computed. This induces a network decomposition of the connected component:

\begin{claim} \label{claim:ndBallGraph}
    A network decomposition of a ball graph can be transformed to a network decomposition of $C$. The cluster diameter increases by a factor proportional to the ball diameter. 
\end{claim}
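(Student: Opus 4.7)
The plan is to lift the network decomposition of the ball graph to one of $C$ in the most natural way: every cluster of the ball graph is replaced by the union of the balls it contains, and the coloring is inherited. Let $H$ denote the ball graph on vertex set $R_C$, and suppose we are given a $(c,d)$-network decomposition of $H$ with clusters $B_1,B_2,\ldots$ and a proper $c$-coloring $\chi$ of these clusters. For each ruling set node $r\in R_C$, let $\ball(r)\subseteq V(C)$ be the set of undecided nodes assigned to $r$ (with ties broken by ID), so that the balls partition $V(C)$ and every node of $\ball(r)$ is within distance $h$ of $r$ in $G$. Define $C_i:=\bigcup_{r\in B_i}\ball(r)$ and color $C_i$ by $\chi(B_i)$.

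First I would verify that the $C_i$ are a valid partition of $V(C)$: this is immediate because the balls partition $V(C)$ and $\{B_i\}$ partitions $R_C$. Next I would verify the coloring property. If $C_i$ and $C_j$ contain an edge between them in $G$, then there exist $r\in B_i$ and $r'\in B_j$ whose balls share that edge, so $\{r,r'\}\in E(H)$ by definition of the ball graph; hence $\chi(B_i)\neq\chi(B_j)$ and therefore the clusters $C_i,C_j$ receive distinct colors. This holds uniformly whether the input decomposition is strong- or weak-diameter.

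The main step is bounding the diameter of each $C_i$. Take any two nodes $u,v\in C_i$, with $u\in\ball(r)$ and $v\in\ball(r')$ for some $r,r'\in B_i$. Since $B_i$ has diameter $d$ in $H$, there is a sequence of ruling set nodes $r=r_0,r_1,\ldots,r_\ell=r'$ with $\ell\le d$ such that $\{r_{j-1},r_j\}$ is an edge of $H$ for each $j$. Every such edge of $H$ is witnessed by an edge of $G$ joining some $x_{j-1}\in\ball(r_{j-1})$ to some $y_j\in\ball(r_j)$; concatenating $r_{j-1}\rightsquigarrow x_{j-1}\to y_j\rightsquigarrow r_j$ gives a path of length at most $2h+1$ in $G$ between $r_{j-1}$ and $r_j$. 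Chaining these together and adding the final legs $u\rightsquigarrow r$ and $r'\rightsquigarrow v$ yields a path in $G$ of length at most $h + d(2h+1) + h = O(d\cdot h)$. For strong diameter one uses that all intermediate vertices lie inside $C_i$ (each leg stays inside a ball that belongs to $B_i$), so this bound is achieved inside $G[C_i]$; for weak diameter the same path certifies the bound in $G$ directly.

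The only mild obstacle is realizing the communication requirement: we also want the clusters $C_i$ to come with low-diameter Steiner trees so that one round of communication in $C_i$ can be simulated cheaply. This can be addressed by taking a BFS tree of radius $h$ inside each ball (rooted at its ruler) and then splicing these together along the Steiner tree provided by the ball-graph decomposition, replacing each tree edge $\{r_{j-1},r_j\}$ by one of its witnessing $G$-paths of length $\le 2h+1$; the resulting subgraph of $G$ contains every vertex of $C_i$, is connected, and has radius $O(d\cdot h)$, as required. Combining the partition, coloring, and diameter arguments yields an $(c,\,O(d\cdot h))$-network decomposition of $C$, proving the claim.
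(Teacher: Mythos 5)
Your proposal is correct and follows essentially the same route as the paper: expand each ball-graph cluster to the union of its balls, inherit the coloring (an inter-cluster edge in $C$ witnesses a ball-graph edge, forcing different colors), and observe that the diameter blows up by a factor proportional to the ball radius. Your extra details on chaining witness paths and splicing Steiner trees are a more explicit version of what the paper leaves implicit, not a different argument.
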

\begin{proof}
    Each cluster (a set of balls) is expanded to a set of nodes in $C$ in the natural way by taking a union of the balls. Any adjacent nodes in $C$ that are in the same ball are also in the same cluster. If they are in different balls, the balls are adjacent in the ball graph. A network decomposition of the ball graph clusters these balls in the same cluster, or in two clusters of different colors.  
\end{proof}

On a $t$-node graph, a network decomposition can be computed in $\poly\log t=\poly\log\log n$ time. Additionally, there is an $h$-factor slowdown due to the fact that each node is an $h$-radius ball in the communication network. The final MIS is computed by iterating through the colors of the network decomposition, and solving each cluster in time proportional to the cluster diameter (in \congest this requires more work).

\paragraph{Why do standard approaches not \emph{realize} the bound given by (P1)?} 
One approach is to compute a $(5,h)$-ruling set $R$ of $B$ w.r.t. distances in $G$, for all components at once. 
Now, nodes in $C$ may be dominated by a ruling set node in another component $C'\neq C$, i.e., balls formed around the ruling set nodes are not confined to one connected component, but may contain nodes from other components at distance $h$. Without better connectedness guarantees in $G$, \Cref{lem:shatteringV2} (P1) cannot bound the number of balls in a connected component of the ball graph. Another approach is to run the ruling set algorithm in each induced subgraph $C$, but this is wrong because \Cref{lem:shatteringV2} requires 5-independence in $G$, not $C$.

The rest of the section is devoted to presenting two approaches for actually making use of the small size of the components.

\subsubsection{Approach 1: Two pre-shattering phases}
\label{sec:twoPhaseShattering}
This version of the ruling set argument is most similar to the original proof in \cite[journal version]{BEPS16}. It is conceptually simple and, unlike the original proof, does not rely on the internals of the ruling set algorithm used. 

We start by running $O(\log \Delta)$ steps of IndependentSet on $G$. Let $\IS \subseteq V$ be the computed independent set, and let $B := V \setminus (\IS \cup N(\IS))$ be the undecided nodes. In light of previous considerations, we run the $O(\log \Delta)$ steps of IndependentSet again, this time on each connected component $C$ of $G[B]$ in parallel. 
Nodes can determine their incident edges in $G[B]$, which makes it possible to run the algorithm on each $G[V(C)]$ in parallel. 
Let $\IS_C \subseteq V(C)$ be the independent set computed in $C$, and let $B_C := V(C) \setminus (\IS_C \cup N_C(\IS_C))$ be the remaining undecided nodes. Running the MIS algorithm on $C$ guarantees that executions of nodes at least 5 hops apart \emph{in $C$} are independent. This allows us to apply \Cref{lem:5indepUndecided} on the subgraph $C$. We compute a $(5,O(\log \log n))$-ruling set $R_C$ of $B_C$, with respect to distances in $C$ with the algorithm of \cite[Lemma 2.2]{Gha19}. The algorithm succeeds with high probability in $n$, in fact for all the components. 

So far, this approach is the same as in \cite{BEPS16}, up to a different ruling set algorithm.\footnote{As   
\cite{BEPS16} aimed for a larger overall runtime, they could use a simpler ruling set algorithm, i.e., one that computes a $(5,O(\log \Delta)$-ruling set of $B_C$ with respect to distances in $C$. However, the value of the domination is not relevant for  this proof. } The next step is to bound the size of $|R_C|$ for each connected component, where our solution differs from \cite{BEPS16} and is significantly shorter. We use that after the first pre-shattering phase each component $C$ of undecided nodes has at most $t \cdot \Delta^4$ nodes by \Cref{lem:shatteringV2} (P2) for $G$, with high probability. We can use this to bound the number of possible subsets $U \subseteq V(C)$ to bound the size of $R_C$ by a  union bound:

\begin{lemma}\label{lem:secondphaseshattering}
    Let $G=(V,E)$ be any graph with $n$ nodes and maximum degree $\Delta$. Let $C$ be any subgraph of $G$ with at most $t \cdot \Delta^4$ nodes and maximum degree at most $\Delta$. Run $\Theta(\log \Delta)$ steps of IndependentSet on $C$. Let $\IS_C \subseteq V(C)$ be the computed independent set and $B_C = V(C) \setminus (\IS_C \cup N(\IS_C))$ be the undecided nodes. There is no $U \subseteq B_C$ such that $U$ is 5-independent in $C$ and $|U| \ge t$, with high probability in $n$.
\end{lemma}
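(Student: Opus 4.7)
My plan is a direct union bound that exploits the fact that $|V(C)|$ is already small. It suffices to rule out any $U \subseteq B_C$ that is $5$-independent in $C$ with $|U| = t$, since any larger violating set contains such a $t$-subset. For a single fixed such $U$, I would apply \Cref{lem:5indepUndecided} to the graph $C$ itself (which has at most $n$ nodes and maximum degree at most $\Delta$, and where the $\Theta(c \log \Delta)$ executions of IndependentSet at nodes of $U$ are mutually independent precisely by $5$-independence in $C$). This gives $\Pr[U \subseteq B_C] \le \Delta^{-c|U|} = \Delta^{-ct}$, with the constant $c$ controlled by the hidden constant inside $\Theta(\log \Delta)$.

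Next I would count the candidates crudely. Using the hypothesis $|V(C)| \le t \cdot \Delta^4$, the number of $t$-subsets of $V(C)$ is at most $\binom{t \Delta^4}{t} \le (e \Delta^4)^t$, so a union bound gives
\[ \Pr[\exists \text{ bad } U] \le (e\Delta^4)^t \cdot \Delta^{-ct} = e^t \cdot \Delta^{(4-c)t} = n^{O(1/\log\Delta)} \cdot n^{4-c}, \]
where I used $\Delta^t = n$ in the last step. Choosing the hidden constant inside $\Theta(\log \Delta)$ (equivalently, $c$) sufficiently large makes this at most $n^{-\Omega(1)}$, yielding the desired high-probability guarantee.

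The only conceptual subtlety worth flagging is why this union bound is affordable here but was not in the original single-phase \Cref{lem:shatteringV2} applied to $G$. There, one has to enumerate $5$-independent $t$-sets in all of $V(G)$ and can only afford to do so by restricting to $(s{+}8)$-connected ones and counting tree embeddings in a power graph with an Euler-tour argument, which produces the $\Delta^{O(t)}$ factor. Here, the prior reduction to a small component (property (P2) of the first pre-shattering phase) directly bounds $|V(C)|$ by $t\Delta^4$, so the naive binomial count already gives $\Delta^{O(t)} = n^{O(1)}$ candidates and no connectedness hypothesis on $U$ is needed. This is exactly the leverage provided by splitting shattering into two phases, and it bypasses the delicate step that caused the error in the arXiv version of \cite{BEPS16}.
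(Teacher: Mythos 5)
Your proposal is correct and follows essentially the same route as the paper's proof: apply \Cref{lem:5indepUndecided} to the graph $C$ for a fixed $5$-independent $t$-subset, then union bound over the at most $\binom{t\Delta^4}{t} \le (e\Delta^4)^t \le \Delta^{O(t)} = n^{O(1)}$ candidate subsets, which is affordable precisely because the first phase already bounded $|V(C)|$. The paper's write-up is the same calculation (it bounds the count by $\Delta^{5t}$ and obtains failure probability $n^{5-c}$), so no substantive difference.
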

\begin{proof} 
Recall that $t=\log_{\Delta} n$.
    Let $U \subseteq C$ be any 5-independent set in $C$. 
    By \Cref{lem:5indepUndecided} for the graph $C$, the probability of $U$ remaining after $\Theta(c\log \Delta)$ rounds is at most $\Delta^{-c|U|}$, where $c$ is a large-enough constant. 
    We use the fact that the first pre-shattering phase significantly reduces the total number of nodes. This reduces the number of choices for subsets of size $t$ to $\binom{t \cdot \Delta^4}{t} \leq\left(\frac{e \cdot t \cdot \Delta^4}{t}\right)^t\leq \Delta^{5t}$~. 
    By union bound over the different sets, the probability that any $5$-independent $U$, with $|U| = t$ is contained in the undecided nodes is at most $\Delta^{5t}\cdot \Delta^{-c|U|}=\Delta^{5t-ct}=n^{5-c}$. 
\end{proof}
By \Cref{lem:secondphaseshattering}, the size of $R_C$ is less than $t$, with high probability in $n$. 
Form a ball graph $\mathcal{B}$, where each $v \in V(C)$ joins the ball of the closest ruling set node, with respect to distances in $C$. 
The ball graph has an edge between each pair of balls that are adjacent in $G$. 
$\mathcal{B}$ has at most $t$ nodes, and it is contained in the small component. 
Next, a network decomposition of the ball graph is computed. 
This provides a network decomposition of $\mathcal{C}$ (see \Cref{claim:ndBallGraph}). 
The rest of the MIS algorithm works as in \cite{ghaffari16_MIS} or \cite{Gha19}, depending on the model used.

\subsubsection{Approach 2: One pre-shattering phase}
\label{sec:onePhaseShattering}

As an alternative to the previous section, we present another approach that works without rerunning the IndependentSet algorithm on each connected component of undecided nodes. After the pre-shattering phase on $G$, we compute a ruling set of the undecided nodes with respect to distances in $G$. We show that specific balls formed around ruling set nodes are well-connected, which can be used to bound the number of balls in a connected component of the ball graph.

Run $\Theta(s\log \Delta)$ steps of IndependentSet on $G$, with $s=8$. Let $\IS \subseteq V$ be the computed independent set and let $B = V \setminus (\IS \cup N(\IS))$ be the set of undecided nodes. So far, this is essentially what was done in the previous approach. Next, we use \Cref{lem:GhaRuling} to compute a $(5, O(\log \log n))$-ruling set $R$ of $B$ with respect to distances in $G$. This uses the ruling set algorithm of \cite[Lemma 2.2]{Gha19}. Additionally, it partitions $B$ into a set of disjoint balls $\{\ball(v) \subseteq B: v \in R\} $ around the ruling set, such that each ball is an $8$-connected set in $G$: 

\begin{claim} \label{lem:GhaRuling}
    A $(k+1, O(k^2 \log\log n))$-ruling set $R$ of $B$ can be computed in $O(k^2 \log \log n)$ rounds, with high probability, together with a partition of $B$ into a set of disjoint clusters $\{\ball(v) \subseteq B \mid v \in R\}$. For each $v \in R$, $\ball(v)$ is $2k$-connected and has weak diameter $O(k^2 \log \log n)$. For each $v \in R$, there is a Steiner tree $T_v$ with $\ball(v)$ as the terminal nodes. Each edge in $E$ is in at most $O(k\log \log n)$ trees.
\end{claim}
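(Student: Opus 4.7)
The plan is to invoke the ruling set algorithm of \cite[Lemma 2.2]{Gha19}, which already computes a $(k+1, O(k^2\log\log n))$-ruling set $R \subseteq B$ in $O(k^2\log\log n)$ rounds w.h.p., and to augment it with bookkeeping that produces the ball decomposition and Steiner trees as byproducts. The algorithm operates in $L = O(\log\log n)$ phases, each consisting of a depth-$k$ BFS broadcast costing $O(k)$ rounds. In each phase every surviving candidate broadcasts its ID to distance $k$; a candidate drops out upon hearing a higher-priority broadcast, and (as our augmentation) a non-candidate $u \in B \setminus R$ permanently attaches itself to the first ruler whose broadcast reaches it, breaking ties by ruler ID. Each claimed $u$ also records the incoming edge along which the claiming broadcast arrived, which reconstructs a path $P_{v \to u}$ in $G$ from the claiming ruler $v$ back to $u$. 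Define $\ball(v) := \{v\} \cup \{u : u \text{ is claimed by } v\}$ and $T_v := \bigcup_{u \in \ball(v)} P_{v \to u}$, with non-terminals being the intermediate nodes not in $\ball(v)$.

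The $(k+1)$-independence, $O(k^2\log\log n)$-domination, and round complexity are inherited verbatim from \cite{Gha19}. The weak-diameter bound for each $\ball(v)$ then follows by induction on the phase index: a terminal $u$ claimed in phase $i$ lies within $k$ edges of $T_v$ from either $v$ itself or from a terminal $u'$ claimed in an earlier phase (namely the node on $P_{v \to u}$ at which the phase-$i$ broadcast departed a previously-claimed subtree of $T_v$), so $\dist_G(v, u) \le k i \le k L = O(k^2 \log\log n)$.

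The $2k$-connectivity is what I expect to be the main technical point, though the argument I have in mind is quite short. Fix any nontrivial partition $\ball(v) = S' \sqcup S''$ with $v \in S'$, and let $u \in S''$ minimize the phase index at which a $\ball(v)$-member of $S''$ was claimed. The $k$-edge segment of $T_v$ that was used to claim $u$ terminates at some $u' \in \ball(v)$ claimed strictly earlier; by the minimality of $u$, $u' \in S'$. Hence $\dist_G(S', S'') \le \dist_G(u', u) \le k \le 2k$, which in fact gives the stronger $k$-connectivity.

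Finally, the congestion bound follows by charging each occurrence of an edge $e$ in some $T_v$ to the phase in which $v$'s broadcast traversed $e$: in each of the $L$ phases at most $O(k)$ distinct BFS broadcasts can reach $e$ (any such $v$ must survive into that phase and lie within $k$ hops of $e$, and by the $(k+1)$-independence only a bounded number of surviving candidates co-exist within $k$ hops of $e$ when pipelining is taken into account), yielding $O(k\log\log n)$ trees per edge. The main obstacle I anticipate is ensuring that this bookkeeping does not interfere with the correctness or round complexity of \cite{Gha19}; this reduces to checking that attaching $O(\log n)$-bit owner and parent-pointer identifiers to each broadcast message keeps the per-round bandwidth at $O(\log n)$, which is routine.
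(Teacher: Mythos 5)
There is a genuine gap: your argument is built on an invented internal structure for the algorithm of \cite[Lemma 2.2]{Gha19} that does not match the actual algorithm, and both the ball construction and the congestion bound depend on it. The real algorithm is not an $O(\log\log n)$-phase priority-broadcast elimination; it has a coarse-grained randomized sampling part ($O(\log\log n)$ phases) followed by a fine-grained part of $k$ epochs, each computing a distance-$i$ $O(\log^5 n)$-coloring and running the ruling-set algorithm of \cite{awerbuch1989network} for $O(\log\log n)$ phases, and the $O(k^2\log\log n)$ domination comes from that two-part structure. Because of this, your rule ``a non-candidate $u$ permanently attaches to the first \emph{ruler} whose broadcast reaches it'' is not implementable: during the execution a node cannot know whether the broadcaster will survive to the final ruling set, and if $u$ attaches to a candidate that is later eliminated, the claimed partition $\{\ball(v): v\in R\}$ falls apart (or, if $u$ re-attaches, your diameter and connectivity inductions no longer apply). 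The paper's proof avoids exactly this issue by initializing $R:=B$ (so every node of $B$ is a candidate carrying its own ball) and, whenever a candidate $v$ is knocked out, charging the removal to a \emph{unique remaining} candidate $u$ at distance at most $2k$ (coarse part) or $k$ (fine part) and merging $\ball(v)$, $T_v$, and a shortest $v$--$u$ path into $u$'s ball and tree; $2k$-connectivity then follows by induction over phases, and the weak diameter $O(k^2\log\log n)$ from the $O(k\log\log n)$ total phases, each adding paths of length $O(k)$.

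Your congestion argument also does not go through as stated: you bound the number of trees containing an edge $e$ by appealing to $(k+1)$-independence of ``surviving candidates,'' but independence is only guaranteed at the very end, and in intermediate phases arbitrarily many surviving candidates may lie within $k$ hops of $e$. The paper instead gets $O(k\log\log n)$ congestion by observing that in any single phase, for any edge $e$, at most one knocking-out node adds $e$ to its tree (a per-phase uniqueness property of the knock-out assignment, which also requires that nodes on the shortest path are knocked out by the same node), summed over the $O(k\log\log n)$ phases. To repair your proposal you would need to open up the actual coarse- and fine-grained procedures of \cite{Gha19} and establish these per-phase knock-out and uniqueness properties, rather than treat the ruling-set computation as a black-box broadcast elimination.
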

\noindent\textit{Proof sketch.}
    The ruling set is computed with the algorithm of \cite[Lemma 2.2]{Gha19}. 
    Please see the original proof for correctness of the ruling set computation. 
    We show how the algorithm can be used to partition $B$ into $2k$-connected balls around the ruling set nodes. 
    The ruling set algorithm starts with $R:=B$ as the candidate ruling set. 
    For each $v \in R$, we maintain a set of nodes $\ball(v) \subseteq B$, initialized as $\ball(v) = \{v\}$ for each $v \in R$. 
    The Steiner tree of $v \in R$ is initialized as only containing $v$. 
    The ruling set algorithm consists of two parts, where nodes are gradually removed from $R$. 
    The \textit{coarse-grained} part is a randomized sampling process to sparsify $R$, consisting of $O(\log\log n)$ phases. 
    In the second, \textit{fine-grained} part, the remaining $R$ is made $(k+1)$-independent in $k$ \textit{epochs}. 
    In the $i$th epoch, some nodes are removed from $R$ to make it $(i+1)$-independent. 
    This is done by first computing a distance-$i$ $O(\log^5 n)$-coloring of $R$. 
    The coloring is used to compute a $(i+1,O(i\log\log n))$-ruling set of $R$ with the algorithm of \cite{awerbuch1989network} (see \Cref{thm:detRulingOld}). 
    Each ruling set computation consists of $O(\log \log n)$ phases.
 
    We show how to update the ball and Steiner tree for each remaining $u \in R$ after each phase, for both the coarse-grained and fine-grained parts. 
    In both parts, whenever a node $v$ is removed from $R$, we can point to a unique remaining node $u \in R$ that \emph{knocked} $v$ \emph{out}, i.e., that  caused the removal of $v$. 
    When this happens, the ball of $v$  is added to the ball of $u$. 
    In the coarse and fine-grained parts, the distance from $v$ to the node that knocked it out is at most $2k$ and $k$, respectively. 
    Furthermore, for any $v'$ on the shortest path between $v$ and $u$ in $G$, it can be guaranteed that in case $v'$ is also knocked out, it is knocked out by the same node $u$. 
    To update the Steiner tree of $u$, the tree $T_v$, as well as a shortest path between $v$ and $u$ is added to $T_u$.
    
    A simple induction over the phases of the two parts shows that the ball of each remaining node is $2k$-connected. 
    Whenever a node is removed, the shortest path to a remaining node is added to the Steiner tree of the remaining node. 
    The length of this path is at most $2k$. 
    Over a total of $O(\log \log n + k \log \log n)$ phases, this leads to a Steiner tree diameter of $O(k^2 \log \log n)$. 
    In any phase, for any edge $e \in E$, there is at most one $u \in R$ that knocks out some other nodes, such that $e$ is on a shortest path between $u$ and the removed nodes. Hence, $e$ is added to at most $O(k \log\log n)$ trees over all phases. 
    \QED
\vspace{3mm}

In general, \Cref{lem:GhaRuling} shows that specific balls formed around a $(k+1,O(k^2 \log \log n))$-ruling set computed with \cite[Lemma 2.2]{Gha19} are $2k$-connected in $G$. Similarly, it can be shown that specific balls formed around a  $(k+1, k\log n)$-ruling set computed with the algorithm of \cite{awerbuch1989network} are $k$-connected \cite[journal version, Theorem 2.4]{BEPS16}. 

\Cref{lem:GhaRuling} gives a partition of $B$ into balls $\ball(v) \subseteq B, v \in R$. Each ball is $8$-connected in $G$. For each $v \in R$, a Steiner tree $T_v$ with $\ball(v)$ as the terminal nodes is given. We form a ball graph $\mathcal{B}$, with $R$ as the nodes. In the ball graph, there is an edge between $v, w \in R$ if the corresponding balls are adjacent in $G$, i.e. $\exists v' \in \ball(v), \exists w' \in \ball(w): (v',w') \in E$. The rest of the MIS algorithm is executed on each connected component of the ball graph in parallel. Let $\mathcal{C}$ be a connected component of $\mathcal{B}$. Let $R_\mathcal{C}$ be the corresponding ruling set nodes and let $B_\mathcal{C} := \cup_{v \in R_\mathcal{C}} \ball(v)$ be the set of all undecided nodes contained in the balls of $R_\mathcal{C}$.

\begin{claim}\label{claim:Bc8Conn}
    $B_\mathcal{C}$ is 8-connected in $G$.
\end{claim}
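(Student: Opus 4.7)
The plan is to verify the defining condition of $8$-connectedness directly: for any non-empty proper subset $S' \subset B_\mathcal{C}$, I will show that $\dist_G(S', B_\mathcal{C} \setminus S') \le 8$. The argument splits into two natural cases depending on whether $S'$ ``cuts'' through some individual ball or respects the ball partition entirely.

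First I would consider the case where there exists some $v \in R_\mathcal{C}$ such that $\ball(v) \cap S'$ is a non-empty proper subset of $\ball(v)$. Here the work is already done for us: by \Cref{lem:GhaRuling} applied with $k=4$ (recall we computed a $(5, O(\log\log n))$-ruling set), each ball $\ball(v)$ is $2k = 8$-connected in $G$. Applying the definition of $8$-connectedness to the partition of $\ball(v)$ into $\ball(v) \cap S'$ and $\ball(v) \setminus S'$ yields $\dist_G(\ball(v) \cap S', \ball(v) \setminus S') \le 8$, and since $\ball(v) \cap S' \subseteq S'$ while $\ball(v) \setminus S' \subseteq B_\mathcal{C} \setminus S'$, we obtain $\dist_G(S', B_\mathcal{C} \setminus S') \le 8$.

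The remaining case is when every ball is fully contained in $S'$ or fully disjoint from $S'$. Define $R'_\mathcal{C} := \{v \in R_\mathcal{C} : \ball(v) \subseteq S'\}$. Since $S'$ is a non-empty proper subset of $B_\mathcal{C} = \bigcup_{v \in R_\mathcal{C}} \ball(v)$ and the balls are disjoint, $R'_\mathcal{C}$ is a non-empty proper subset of $R_\mathcal{C}$. Because $\mathcal{C}$ is by assumption a connected component of the ball graph $\mathcal{B}$, there must be an edge of $\mathcal{B}$ between $R'_\mathcal{C}$ and $R_\mathcal{C} \setminus R'_\mathcal{C}$. By the definition of the ball graph, such an edge corresponds to vertices $v' \in \ball(v) \subseteq S'$ and $w' \in \ball(w) \subseteq B_\mathcal{C} \setminus S'$ with $\{v', w'\} \in E$, giving $\dist_G(S', B_\mathcal{C} \setminus S') \le 1 \le 8$.

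I don't expect any real obstacle: the lemma is essentially a bookkeeping statement combining the $8$-connectedness of each individual ball (a property already encoded in \Cref{lem:GhaRuling}) with the connectedness of $\mathcal{C}$ in the ball graph. The only subtlety worth flagging is choosing the correct $k$ when instantiating \Cref{lem:GhaRuling}; since we invoked it with the parameters producing a $(5,\cdot)$-ruling set, we indeed have $k=4$ and hence $2k=8$, which is exactly the connectedness bound needed to match property $(P1)$ of \Cref{lem:shatteringV2} with $s=8$.
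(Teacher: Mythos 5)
Your proof is correct and follows essentially the same route as the paper: the paper's (terser) argument likewise combines the $8$-connectedness of each $\ball(v)$ from \Cref{lem:GhaRuling} with the fact that adjacent balls in the connected component $\mathcal{C}$ are at distance $1$ in $G$, concluding $\dist_G(S', B_\mathcal{C}\setminus S') \le \max(1,8) = 8$. Your explicit two-case split (whether $S'$ cuts a ball or respects the ball partition) is just a more carefully spelled-out version of that same argument, and your instantiation of \Cref{lem:GhaRuling} with $k=4$ matches the paper's.
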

\begin{proof}
    $B_\mathcal{C} := \cup_{v \in R_\mathcal{C}} \ball(v)$, where each $\ball(v)$ is 8-connected. As $\mathcal{C}$ is a connected component of the ball graph, for any $v \in R_\mathcal{C}$, there exists $w \in R_\mathcal{C}, v \neq w$ such that $\dist_G(\ball(v), \ball(w)) =1$ (unless $\mathcal{C}$ only contains one node). It follows that, for any $S \subset B_\mathcal{C}, \dist_G(S, B_\mathcal{C} \setminus S) \le \max(1,8) =~8$. 
\end{proof}

We prove that the number of balls in $\mathcal{C}$ is at most $t = \log_\Delta n$. 
We bound $|V(\mathcal{C})| = |R_\mathcal{C}|$ with \Cref{lem:shatteringV2} (P1), by showing that $R_\mathcal{C}$ is a subset of some 5-independent, $16$-connected set $R'_\mathcal{C} \supseteq R_\mathcal{C}$. 
The set $R_\mathcal{C}$ is 5-independent, but the nodes may still be far apart from each other. 
For analysis, construct another set $R'_\mathcal{C} \supseteq R_\mathcal{C}$, such that $R'_\mathcal{C}$ is a $(5,4)$-ruling set of $B_\mathcal{C}$, with respect to distances in $G$. 
By \Cref{lem:54rulingConnected}, $R'_\mathcal{C}$ is 16-connected in $G$, where we use the fact that $B_\mathcal{C}$ is an 8-connected set in $G$. 
Recall that we run $\Theta(s \log \Delta)$ steps of IndependentSet on $G$, with $s=8$. 
Hence, by \Cref{lem:shatteringV2} (P1), with high probability, there are no 5-independent, 16-connected sets $U \subseteq B$ with $|U| \ge t$.  
If no such set exists, the size of $R_\mathcal{C} \subseteq R'_\mathcal{C}$ is at most $t$, proving the claim. 
We can also bound the total number of undecided nodes in the connected component $\mathcal{C}$, which is useful in \congest. 
The set $B_\mathcal{C}$ is 8-connected. \Cref{lem:shatteringV2} (P2) states that the size of any 8-connected subset of $B$ is at most $O(t \cdot \Delta^4)$.  

Continuing in each connected component $\mathcal{C}$ of the ball graph, we compute a network decomposition of $\mathcal{C}$. 
A network decomposition of $\mathcal{C}$ gives a network decomposition of the underlying set of nodes $B_\mathcal{C}$ (see \Cref{claim:ndBallGraph}). 
The distance from one connected components of the ball graph to another is at least 2 in $G$, by definition of the ball graph. 
This makes it possible to solve the rest of the problem independently for each connected component of the ball graph. 
The final MIS is computed by iterating through the colors of the network decomposition, and solving each cluster in time proportional to the cluster diameter (in \congest this requires more work).

\subsection{The approach in  \cite[arXiv]{BEPS16} (and presumably also  in \cite{ghaffari16_MIS,Gha19})}
In this section, we discuss how our approaches and proofs related to algorithms that use the shattering framework for symmetry breaking problems.
We have discussed earlier that the approach in~\cite[journal]{BEPS16} gives correct algorithms with very involved, but correct proofs for properties required by the shattering framework.
However, many follow up works build on top of the flawed earlier approach in~\cite[arXiv]{BEPS16}.
While our first approach provides a different proof for the troublesome parts (see \Cref{sec:twoPhaseShattering}), we believe that it is beneficial for the community to underline these technical subtleties, simply as the shattering technique has grown into an essential technique for developing randomized algorithms. 

Similar to our first approach they use two-phases of pre-shattering\footnote{Of course our work is inspired by theirs and not the other way around.}  where the first phase produces small components and the second phase, executed on all small components independently and in parallel splits each of them into (many) tiny components.
In order to explain the issue in their work let us fix some notation. Consider one small component $C$ and let $B_C$ be the nodes of $C$ that are still undecided after the second pre-shattering phase, i.e., the nodes in the tiny components. In order to profit from the smallness of the components, they compute\footnote{In fact all three works use different ruling set algorithms. One contribution of \cite{Gha19} is actually to come up with a $(5,O(\log\log n)$-ruling set algorithm that works in the \CONGEST model and can be used in this situation.} a $(5,h)$-ruling set $R_C$ for some non-constant $h$. 
In order to bound the size of $R_C$, they want to use \Cref{lem:shatteringV2} (P1), with $s=1$, for the graph $C$ with the randomness of the second pre-shattering phase.  While (P1) indeed guarantees that there are no 5-independent, 9-connected sets $U \subseteq B_C$ with $|U| \ge t$, with high probability, the constructed $(5,h)$-ruling set $R_C$ is not necessarily $9$-connected. Therefore, the authors use a greedy procedure that adds undecided nodes to the ruling set until it becomes $9$-connected (and while always ensuring that it remains $5$-independent). In fact, the greedy procedure should return some $(5,4)$-ruling set $R'_C \supseteq R_C$ of $B_C$, w.r.t. distances in $C$. Note that this set is only defined for the purposes of analysis. However, such a construction cannot always make the set  $9$-connected. 
If the tiny components are far from each other (in $C$) it is impossible to add further undecided nodes (nodes in $B_C$) to the set to obtain a sufficiently connected ruling set. Thus \Cref{lem:shatteringV2} cannot be applied. In fact, the issue in \cite[arxiv v3]{BEPS16} work appears on page 19 in Step~3 and 4 where this greedy construction is performed. Here also nodes of $C$ that are already decided (not contained in a tiny component) are added to the ruling set to make it $9$-connected.  
But then the probabilistic analysis of the second pre-shattering phase does not bound the existence of $R'_C$ as it only bounds the probability that \emph{all} nodes of $5$-independent sets of nodes remain undecided in the second pre-shattering phase.  Also, one cannot  rely on a probabilistic analysis of the first pre-shattering phase, as that would require the distance between nodes in $R'$ to be measured in $G$ (instead of in the small component). 

\smallskip

Enlarging this ruling set greedily is actually an integral part in all known proofs for the first pre-shattering phase, and in particular for proving $(P2)$. The crucial difference in these proofs is that \emph{the prover} can create the ruling set greedily such that it becomes $9$-connected. This becomes straightforward, as in this setting one can focus on a single connected component of undecided nodes. 

\paragraph{The solution of \cite[journal]{BEPS16}.} Recall, when using the smallness of the components algorithmically, it is more difficult to focus on a single connected component. When computing a ruling set (in a black-box manner) with some algorithm and forming balls around each ruler by assigning nodes to the closest ruling set node, it may happen that a node is assigned to a ruler that actually lives in a different small component than its own. Hence, the journal version of \cite{BEPS16} has a different approach. It also has the two-phase pre-shattering approach, but uses the internals of the ruling set algorithm to assign nodes to ruling set nodes. 
This ensures that each ball remains $4$-connected. 
Still it is not ruled out that  a node can be assigned to a ruling set node contained in a different component. Thus, their solution restricts the analysis to a connected component of the constructed ball graph, instead of focusing on a connected component of undecided nodes.  If the balls are created as described, this setting meets the requirement to apply (P1). Our second approach shows that the second pre-shattering phase is not necessary in this approach. Hence, our second approach is an alternative way (besides our first approach) to simplify the framework. 

\section{Randomized Symmetry Breaking on Power Graphs (Thm.~\ref{thm:MISPower}, Cor.~\ref{cor:kkBetaRulingRand})}
\label{sec:randomizedPowerGraphs}
In this section we provide our randomized algorithms for computing maximal independent sets and ruling set of $G^k$, i.e., we prove Theorem~\ref{thm:MISPower} (in \Cref{ssec:MISPower} ) and Corollary~\ref{cor:kkBetaRulingRand} (in \Cref{ssec:randRuling}). In \Cref{sec:lubyGk}, we begin with a short sketch on why Luby's algorithm can be extended to $G^k$ (this does not hold for all versions of the algorithm). 

\subsection{Luby's Algorithm on Power Graphs}
\label{sec:lubyGk}

Luby's algorithm \cite{Luby:1986ub,alon86} extends to compute a MIS of $G^k$ in $O(k \cdot \log n)$ rounds, with high probability. All start as \textit{undecided} and become \textit{decided} once they or one of their neighbors (in $G^k$) joins the independent set. In $G$, one step of the algorithm consists of two rounds. Each undecided node $v$ picks a random number $x_v$ from $[n^c]$~\cite{MJNZ11}, where $c$ is a sufficiently large constant. 
The value is compared with the values of undecided neighbors. If $v$ has the minimum value, $v$ joins the independent set and informs its neighbors. The algorithm computes a MIS of $G$ in $O(\log n)$ rounds, with high probability (see \cite{MJNZ11} for a simple analysis). The algorithm can be simulated on $G^k$ with a $k$-factor slowdown, required for communicating the minimum of the random numbers, as well as alerting distance-$k$ neighbors when joining the MIS. Importantly, the algorithm does not require nodes to know their degree. 

As nodes do not know their degree in $G^k$, it is unclear how to modify the  version of the algorithm to $G^k$ where nodes mark themselves with probabilities depending on their degree.

\subsection{MIS of \texorpdfstring{$G^k$}{G\^k} (Theorem~\ref{thm:MISPower})}
\label{ssec:MISPower}

Similar to the algorithm for \Cref{thm:MISrepaired} our MIS algorithm for $G^k$ is based on the shattering framework. Both, our pre-shattering and post-shattering phase are modification of the  \textit{BeepingMIS} algorithm that was originally introduced in \cite[Section 2.2]{Ghaffari2017}. It can be seen as a communication saving variant of the pre-shattering phase of \cite{ghaffari16_MIS}.

\paragraph{Beeping an MIS (of $G$) \cite[Section 2.2]{Ghaffari2017}:}  All nodes start as \textit{undecided}, and become \textit{decided} once they or one of their neighbors joins the independent set. The algorithm runs in steps (originally called rounds), each consisting of two communication rounds. In the first step, undecided nodes $v \in V$ mark themselves with some probability $x_v$. Marked nodes notify their neighbors with a single bit \textit{beep}. Nodes update their marking probability $x_v$, based on if there was \emph{at least one} marked neighbor. In the second step, marked nodes without marked neighbors join the independent set $\IS$ and notify their neighbors. The nodes joining $\IS$ and their neighbors are deactivated and stop participating in the algorithm. 

BeepingMIS can be used to compute a MIS of $G$ in $O(\log n)$ rounds, with high probability. After $O(\log \Delta)$ rounds, we get  the following \textit{shattering guarantee} from \cite[Theorem 3.3]{BEPS16} and  \cite[Lemma~4.2]{ghaffari16_MIS}.

\begin{lemma}[Shattering]\label{lem:shatteringBeeping}
    Let $s \ge 1$ and $G=(V,E)$ be any graph with at most $n$ nodes and maximum degree at most~$\Delta$. Let  $t=\log_\Delta n$. Run $\Theta(s \log \Delta)$ steps of BeepingMIS on $G$. Let $\IS \subseteq V$ be the computed independent set and $B = V \setminus (\IS \cup N(\IS))$ be the undecided nodes. With high probability in $n$, 
    \begin{itemize}
        \item [(P1)] There is no 5-independent $(8+s)$-connected $U \subseteq B$ s.t. $|U| \ge t$.
        \item [(P2)] All s-connected sets $C \subseteq B$ have at most $O(t \cdot \Delta^4)$ nodes.
    \end{itemize}
\end{lemma}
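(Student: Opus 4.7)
The statement is identical in form to \Cref{lem:shatteringV2}, only with \textit{IndependentSet} replaced by \textit{BeepingMIS}. My plan is therefore to mirror the proof of \Cref{lem:shatteringV2} verbatim, and reduce everything to establishing the BeepingMIS analogue of \Cref{lem:5indepUndecided}: for any $5$-independent set $U \subseteq V$, the probability that $U \subseteq B$ after $\Theta(c \log \Delta)$ steps of BeepingMIS is at most $\Delta^{-c|U|}$. Once this single-set bound is in hand, the counting/union-bound arguments in the proof of \Cref{lem:shatteringV2} carry over unchanged.

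\paragraph{Establishing the single-set bound.} The key structural observation is that, in one step of BeepingMIS, the update of a node $v$ (its marking probability, whether it joins $\IS$, whether it gets deactivated) is a function of beeps from its $1$-hop neighbors, whose own beep/state in that step depends only on nodes within $2$ hops of $v$. Consequently, if $u, w \in V$ satisfy $\dist_G(u, w) \ge 5$, their random decisions across all steps are independent, since their respective radius-$2$ information cones never overlap. Combined with the standard per-node bound from the analysis of BeepingMIS in \cite{Ghaffari2017} (namely that any fixed node remains undecided after $\Theta(c \log \Delta)$ steps with probability at most $\Delta^{-c}$), we obtain, for any $5$-independent $U$,
\begin{equation*}
  \pr[U \subseteq B] \;=\; \prod_{v \in U} \pr[v \in B] \;\le\; \Delta^{-c |U|}.
\end{equation*}
This is exactly the analogue of \Cref{lem:5indepUndecided} for BeepingMIS. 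The main technical obstacle I expect is making the independence argument fully rigorous, in particular checking that the \emph{deactivation} status of a node $v$ in step $t$ does not leak information beyond the radius-$2$ cone—but this is immediate from the algorithm description, since deactivation is triggered only when a $1$-hop neighbor of $v$ joins $\IS$, which is itself decided from its own $1$-hop view.

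\paragraph{Deriving (P1) and (P2).} With the single-set bound in hand, I would repeat the argument in \Cref{lem:shatteringV2} verbatim. For (P1), fix $s' = s+8$ and consider any $5$-independent, $s'$-connected $U \subseteq B$ with $|U| = t$. Such $U$ induces a spanning tree in the auxiliary power graph $G^{[5,s']}$ whose edges join pairs at $G$-distance in $[5, s']$. The number of rooted unlabeled $t$-node trees is at most $4^t$ (Euler-tour encoding), and the number of embeddings into $G^{[5,s']}$ is at most $n \cdot \Delta^{s'(t-1)}$. Union-bounding the single-set bound over all these choices with the constant $c$ chosen large enough,
\begin{equation*}
  4^t \cdot n \cdot \Delta^{s'(t-1)} \cdot \Delta^{-(s+c) t} \;\le\; n^{\log_\Delta 4 + 1} \cdot \Delta^{t(8 - c)} \;\le\; n^{10-c},
\end{equation*}
which is $1/\poly(n)$ for $c$ large. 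For (P2), suppose some $s$-connected $C \subseteq B$ has more than $c' \cdot t \cdot \Delta^4$ nodes for a suitable constant $c'$. Greedily extract a $(5,4)$-ruling set $U \subseteq C$ of $C$ with respect to distances in $G$; each selection excludes at most $\Delta^4$ nodes, so $|U| \ge t$. By \Cref{lem:54rulingConnected} applied with $\alpha = 5, \beta = 4$, $U$ is $5$-independent and $(s + 8)$-connected in $G$, contradicting (P1). Hence (P2) holds with high probability as well, completing the proof.
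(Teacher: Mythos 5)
Your overall plan is exactly the paper's: reduce everything to a BeepingMIS analogue of \Cref{lem:5indepUndecided} and then reuse the counting and union-bound argument of \Cref{lem:shatteringV2} verbatim; the (P1)/(P2) derivation you give matches the paper's and is fine. The gap is in how you justify the single-set bound. You claim that for $u,w$ with $\dist_G(u,w)\ge 5$ the random decisions \emph{across all steps} are independent because "their radius-$2$ information cones never overlap," and you even write $\pr[U\subseteq B]=\prod_{v\in U}\pr[v\in B]$. This is false: the information cone of a node grows with the number of steps. In step $t$, node $v$'s marking probability depends on whether it heard beeps in earlier steps, i.e.\ on its neighbors' marking probabilities at those steps, which in turn depend on \emph{their} neighbors, and so on; after $\Theta(\log\Delta)$ steps the state of $v$ is a function of the randomness of nodes at distance $\Theta(\log\Delta)$, not distance $2$. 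So the events $\{v\in B\}$ for $v$ in a $5$-independent set are not independent, and the equality (and the independence argument you sketch for the deactivation status) does not hold.

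What does hold, and what the paper's proof (following the analysis of \cite{ghaffari16_MIS,Ghaffari2017}) actually uses, is a \emph{per-step conditional} statement: within a single step, the progress guarantee for $v$ (e.g.\ that in a "golden" step $v$ is removed with constant probability) holds even conditioned on the entire history and on arbitrary outcomes of the randomness outside $N^2(v)$ in that step. For a $5$-independent set $U$ the sets $N^2(v)$, $v\in U$, are disjoint, so one can order the nodes of $U$ and condition sequentially, multiplying these conditional failure bounds to get $\pr[U\subseteq B]\le \Delta^{-c|U|}$ as an \emph{upper bound} — this is what the paper means by "decisions of nodes further than $2$ hops apart are independent when restricted to a single step" and by "carefully trace the probabilities." With the single-set bound established this way, the rest of your argument goes through unchanged.
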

\begin{proof}
For each node $v$, the probability that $v$ becomes decided after $\Theta(\log d(v) + \log 1/\epsilon)$ is at least $1-\epsilon$ \cite[Theorem 2.1]{Ghaffari2017}. Moreover, one can carefully trace the probabilities---decisions of nodes further than $2$ hops apart are independent when restricted to a single step---and obtain that for any $5$-independent set $U \subseteq V$, the probability that all nodes in $U$ remain undecided after $\Theta(c \log \Delta)$ steps is at most $\Delta^{-c|U|}$.   This provides a version of \Cref{lem:5indepUndecided}. With that result, the remaining proof is identical to the proof of  \Cref{lem:shatteringV2}.  
\end{proof}

\paragraph{High level overview for computing an MIS of $G^k$.}
We run $O(\log \Delta^k)$ steps of BeepingMIS on $G^k$ (where $\Delta^k$ is an upper bound on $\Delta(G^k)$). After that, $G^k$ is shattered into \textit{small components} of undecided nodes. 

The high level idea of the post-shattering phase is similar to the algorithm for $G$. We first construct a ball graph by computing a suitable ruling set of the power graph of the components A network decomposition of the ball graph is computed, which then yields a network decomposition with few color classes and small cluster radius of the small components. 
The main difficulty lies in processing a single cluster (independently from all other clusters). To complete the solution on one cluster we run $O(\log n)$ instances of the Beeping MIS algorithm (for $G^k$) in parallel. We show that w.h.p.\ (in $n$) one of the instances succeeds and all nodes of the cluster can agree on such a winning instance in time that is essentially proportional to the cluster diameter. Nodes of the cluster take the output from that instance and we proceed.

We continue with the details of this process and begin with the necessary tools to simulate BeepingMIS on $G^k$. Afterwards we present several tools for working with ball graphs in the $G^k$ setting. The proof of \Cref{thm:MISPower} that uses all these primitives follows at the end of the section.

\paragraph{Simulating BeepingMIS  on $G^k$.} To do this, the beeps must be accompanied with IDs, for beeping nodes not to confuse their own beep with the beep of another node. Note that this difficulty does not appear when when running the algorithm for $G$. As shown next (\Cref{lem:simBeeping}), BeepingMIS can be simulated on $G^k$ with a slowdown factor of $k$. As we need to run instances of BeepingMIS in parallel in the post-shattering phase, we cannot afford to use $O(\log n)$ bits for the accompanying IDs. We can make use of a distance-$k$ coloring to reduce the bandwidth used by one instance:

\begin{lemma}[Simulating beeping on $G^k$]
    \label{lem:simBeeping}
    Let $k \ge 1$ and $S \subseteq V$ be any set of nodes. Each $v \in V$ can learn if there exists some $w \in N^k(v,S)$ (where $w \neq v$) in $\widetilde{O}(k \cdot \lceil a / \bandwidth\rceil)$ rounds, using $\bandwidth$-bit messages, given that each node has an $a$-bit identifier unique up to distance $k$.
\end{lemma}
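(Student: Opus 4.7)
The plan is to run a $k$-hop pipelined broadcast of beep identifiers, where each node maintains only a tiny $2$-slot buffer so that congestion never exceeds $O(a)$ bits per edge per phase. Concretely, each $v\in V$ stores a set $H_v$ of at most two $a$-bit IDs. Initially $H_v=\{\ID(v)\}$ if $v\in S$ and $H_v=\emptyset$ otherwise. The algorithm proceeds in $k$ phases. In each phase, $v$ transmits $H_v$ to all neighbors, which takes $O(\lceil a/\bandwidth\rceil)$ rounds since $|H_v|\le 2$. After receiving the sets $\{H_w\}_{w\in N(v)}$, $v$ updates $H_v$ by forming the union $\{\ID(v)\}\cdot\One(v\in S) \cup \bigcup_{w\in N(v)}H_w$ and keeping (i) $\ID(v)$ whenever $v\in S$ and (ii) whenever possible, at least one ID distinct from $\ID(v)$, for a total of at most two IDs. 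After $k$ phases, $v$ outputs ``yes'' iff $H_v$ contains some ID different from $\ID(v)$ (equivalently, $H_v\neq\emptyset$ when $v\notin S$).

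\textbf{Soundness.} I would prove by induction on the phase number $r$ that any ID appearing in $H_u$ after $r\le k$ phases is $\ID(x)$ for some $x\in S$ with $\dist_G(u,x)\le r$: only beepers inject their own ID, and IDs travel at most one hop per phase. Since all such $x$ lie within distance $k$ of $u$, uniqueness of the $a$-bit IDs up to distance $k$ implies that an ID in $H_v$ distinct from $\ID(v)$ must correspond to a distinct beeper in $N^k(v,S)$.

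\textbf{Completeness.} Suppose some $w\in S\setminus\{v\}$ satisfies $\dist_G(v,w)=d\le k$, and choose $w$ minimizing $d$. Along a shortest path $v=u_0,u_1,\dots,u_d=w$, the intermediate nodes $u_1,\dots,u_{d-1}$ are not in $S$ (otherwise they would be closer beepers). I would then prove by induction on $r\in\{1,\dots,d\}$ that after phase $r$ the buffer $H_{u_{d-r}}$ contains at least one ID distinct from $\ID(u_{d-r})$. The base case is immediate: $u_{d-1}$ receives $\{\ID(w)\}$ from $w$ and, not being a beeper itself, stores it. For the inductive step with $0<d-r<d$, the node $u_{d-r}$ is not in $S$, so it keeps any foreign ID received from $u_{d-r+1}$; by uniqueness up to distance $k$, this ID differs from $\ID(u_{d-r})$. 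At $r=d$, even if $v\in S$, the update rule explicitly prefers a foreign ID whenever one is received, so $H_v$ ends up containing some ID $\neq \ID(v)$.

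\textbf{Complexity and main obstacle.} Each of the $k$ phases incurs $O(\lceil a/\bandwidth\rceil)$ rounds to transmit the $2$-element buffer, giving $O(k\lceil a/\bandwidth\rceil)$ overall, which matches the claimed $\widetilde{O}(k\lceil a/\bandwidth\rceil)$ bound. The main conceptual obstacle is justifying why buffer size two suffices: a single slot is insufficient because a beeping $v$ would be forced to choose between propagating its own beep (needed so that neighbors can detect \emph{it}) and recording a foreign beep (needed so that \emph{it} can detect others); holding exactly two IDs---one's own plus one foreign witness---resolves both roles simultaneously, which is precisely what the completeness argument exploits at the last step $r=d$.
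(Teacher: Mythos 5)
There is a genuine gap in your completeness argument, and it traces back to the buffer rule itself. Your rule only guarantees that a node keeps \emph{at least one ID distinct from its own}; it does not force a node to keep two distinct IDs when two distinct IDs are available. Consider the path $v - u_1 - w$ with $S=\{v,w\}$, $u_1\notin S$, and $k=2$. In phase $1$, $u_1$ receives both $\ID(v)$ and $\ID(w)$; both are foreign to $u_1$, so your rule is satisfied if $u_1$ keeps only $\{\ID(v)\}$ and discards $\ID(w)$. In phase $2$, $v$ then receives nothing but its own echoed identifier, keeps $H_v=\{\ID(v)\}$, and outputs ``no'' even though $w\in N^2(v,S)$ with $w\neq v$. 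This also exposes where the induction breaks: your invariant ``$H_{u_{d-r}}$ contains an ID distinct from $\ID(u_{d-r})$'' is too weak, because the ID that is foreign to the intermediate node may be $\ID(v)$ itself (the receiver's own beep bounced back), so at the final step $r=d$ you cannot conclude that $v$ ever receives an ID different from $\ID(v)$ --- contrary to your claim that the preference rule at $v$ settles this.

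The fix is exactly the rule the paper uses: every node forwards (keeps) \emph{two beeps with distinct identifiers whenever at least two distinct identifiers have reached it}, and otherwise forwards the single one it has; the ``arbitrary'' choice only concerns which two distinct IDs are picked when more are available. With that rule the correct invariant along the shortest path to the nearest foreign beeper $w$ is: after phase $r$, the node $u_{d-r}$ holds either $\ID(w)$ or two distinct IDs of beepers within distance $r$ of it. At the last step, $v$ therefore receives either $\ID(w)$ or two distinct IDs, at least one of which differs from $\ID(v)$, and uniqueness of IDs within distance $k$ (together with your phase-counting soundness argument, which is fine and replaces the paper's explicit distance-left counters) yields the claim. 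Your cost analysis is unaffected, but as written the algorithm and the inductive step do not establish the lemma.
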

\begin{proof}
    Each $x \in S$ \textit{beeps} by sending a tuple $(\ID(x), $k$)$, where the second element is a distance left counter, initialized as $k$ and decreased every time the tuple is forwarded. For $k$ steps, each $v \in V$ forwards to each neighbor $w \in N(v)$ an arbitrary subset of at most two incoming tuples with distinct identifiers, with the maximum of the distances left. After $k$ steps, any $v \in V$ with $N^k(v,S) \neq \emptyset$ receives at least one beep from some $w \in N^k(v,S)$, regardless of whether $v \in S$. Each step can be implemented in $\lceil(a+\log k)/\bandwidth\rceil$ rounds. The total time complexity is $\widetilde{O}(k \cdot  \lceil a / \bandwidth\rceil)$.
\end{proof}

\paragraph{Distance-k ball graph.}
Let $B \subseteq V$ be a set of (undecided) nodes and let $R \subseteq B$ be some ruling set of $B$.
Assume we are given a partition $\{\ball(v) \subseteq B: v \in R\}$ of the undecided nodes.  
Recall that a \textit{ball graph for $\{\ball(v) \subseteq B: v \in R\}$} is a virtual graph with nodes $R$. 
There is an edge between $v,w \in R$ if the corresponding balls $\ball(v)$ and $\ball(w)$ are adjacent in $G$. 

In the post-shattering phase, a network decomposition is computed for the remaining undecided nodes. 
In $G$, a network decomposition of the ball graph for $\{\ball(v) \subseteq B: v \in R\}$ induces a network decomposition of $G$ (see \Cref{claim:ndBallGraph}). 
This does not immediately work for graph powers. 
Instead, we compute a network decomposition of a \textit{distance-$k$ ball graph for $\{\ball(v) \subseteq B: v \in R\}$}.  
A graph $\mathcal{B}$ with nodes $R$ is a distance-$k$ ball graph if, for any $v,w \in R$,  $\dist_G(\ball(v),\ball(w)) \le k$ implies that $\dist_\mathcal{B}(v,w) \le k$.
Next, we show how a distance-$k$ ball graph can be formed in the communication network $G$:

\begin{lemma}[Forming distance-$k$ ball graph] \label{lem:ballgraph}
    Let $R \subseteq B \subseteq V$ and suppose we are given a partition $\{\ball(v) \subseteq B: v \in R\}$ of $B$. Assume that for each $v \in R$, there is a Steiner tree $T_v$ with weak diameter $O(D)$, with $\ball(v)$ as the terminal nodes, such that any edge in $E$ is in at most $\tau$ trees. There is an $O(k)$-round deterministic \congest algorithm that forms $\{\ball^+(v) \subseteq V : v \in R\}$, such that
    \begin{itemize}
        \item $\ball^+(v) \subseteq V$ and $\ball^+(v) \supseteq \ball(v)$ for all $v \in R$,
        \item $\ball^+(v) \cap \ball^+(w) = \emptyset$ for all $v \neq w \in R$,
        \item The ball graph $\mathcal{B}$ for $\{\ball^+(v) : v \in R\}$ is a distance-$k$ ball graph for $\{\ball(v) : v \in R\}$. 
        \item For each $v \in R$, there is a Steiner tree $T_v^+$ with weak diameter $O(D+k)$ and $\ball^+(v)$ as the terminal nodes. Any edge in $E$ is in at most $\tau+1$ trees. 
    \end{itemize}
\end{lemma}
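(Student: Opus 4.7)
The plan is to expand each ball via a simultaneous multi-source BFS of depth $r := \lfloor k/2 \rfloor$ starting from all the original balls in parallel. During this BFS, every node $u \in V$ tracks and updates a single pair $(\dist_G(u, \ball(v)), \ID(v))$, namely the lexicographically minimum such pair over balls at distance at most $r$; this is a standard distributed computation that terminates in $O(k)$ rounds of \CONGEST{} using $O(\log n)$-bit messages. Each node within distance $r$ of some ball is then \emph{assigned} to the unique minimizing $v$, and I set $\ball^+(v) := \ball(v) \cup \{u \in V : u \text{ is assigned to } v\}$. Containment $\ball^+(v) \supseteq \ball(v)$ and pairwise disjointness of the $\ball^+(v)$ are immediate from this definition.

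To establish the distance-$k$ ball-graph property of the resulting $\mathcal{B}$, I will fix $v, w \in R$ with $d := \dist_G(\ball(v), \ball(w)) \le k$ and trace any shortest witness path $u_0, u_1, \ldots, u_d$ between $\ball(v)$ and $\ball(w)$. For every index $i$ one has $\min(i, d-i) \le \lfloor d/2 \rfloor \le r$, so each $u_i$ lies within distance $r$ of at least one ball and therefore belongs to some $\ball^+(v_i)$, with $v_0 = v$ and $v_d = w$. Since consecutive path nodes are adjacent in $G$, their containing balls either coincide or are adjacent in $\mathcal{B}$, and the sequence $v_0, v_1, \ldots, v_d$ gives a walk of length at most $d \le k$ from $v$ to $w$ in $\mathcal{B}$.

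For the extended Steiner trees, for every newly assigned node $u$ I will select a predecessor $\mathrm{pred}(u) \in N(u)$ whose distance to $u$'s assigned ball is one less than $u$'s own (breaking remaining ties by the same lexicographic rule used for assignments). A short triangle-inequality argument then forces $\mathrm{pred}(u)$ to share $u$'s ball assignment, or already lie in the original ball: if $\mathrm{pred}(u)$ were strictly closer (or lex-strictly preferred) to some $\ball(w)$ with $w \neq v$, then $u$ itself would inherit a distance to $\ball(w)$ that beats its distance to $\ball(v)$, so $u$ should have been assigned to $w$ instead. Adding the edges $\{u, \mathrm{pred}(u)\}$ to $T_v$ in order of increasing distance then yields a tree $T_v^+$ with terminal set $\ball^+(v)$ and weak diameter $O(D + k)$.

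The delicate step, and the one I expect to be the main obstacle, is the congestion bound $\tau + 1$. Consider any edge $e = \{u_1, u_2\}$ appearing in some extension $T_v^+ \setminus T_v$; without loss of generality $u_1$ is assigned to $v$ and $u_2 = \mathrm{pred}(u_1)$, so by the consistency property $u_2 \in \ball^+(v)$. If $e$ also belonged to a second extension $T_w^+ \setminus T_w$ for some $w \neq v$, then one endpoint of $e$ would have to be contained in $\ball^+(w)$, contradicting the disjointness of the $\ball^+$ sets. Hence each edge enters at most one extension, so the overall congestion grows by at most one. Getting this disjointness-based accounting to go through hinges on the coherent tie-breaking between ball assignment and predecessor selection; once that is in place, the correctness of the construction, the $O(k)$-round complexity, and the weak-diameter and congestion guarantees all follow by direct verification.
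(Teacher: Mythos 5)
Your proof is correct and follows essentially the same route as the paper's: both expand the balls by a parallel BFS with a deterministic tie-breaking rule, establish the distance-$k$ property by tracing a shortest path between two balls and observing that the expanded balls met along it give a walk of length at most $k$ in the ball graph, and obtain congestion $\tau+1$ because every node joins exactly one expansion. The differences are minor implementation choices: you use a nearest-ball (Voronoi) assignment of depth $\lfloor k/2\rfloor$ with lexicographic $(\dist,\ID)$ tie-breaking, whereas the paper grows borders for up to $k$ hops over $V\setminus B$ with a first-arrival/smallest-ID rule---both yield the stated guarantees.
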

\begin{proof}
    For each $v \in R$, let $\border(v) \subseteq N^k(v)$ be a set of nodes around $\ball(v)$. 
    Borders only consist of nodes in $V \setminus B$, and $\border(v) \cap \border(w) = \emptyset$ for all $v \neq w \in R$. 
    See \Cref{fig:GkBallgraph}. 
    Eventually, we set $\ball^+(v) := \ball(v) \cup \border(v)$. 

    Initially, $\border(v) := \emptyset$ for all $v \in R$. 
    For each $v \in R$ in parallel, run a breadth first search from all nodes of $\ball(v)$ for at most $k$ hops in $G$. 
    $\ID(v)$ is used as an identifier for the searches originating from $\ball(v)$. 
    Let $z \in V \setminus B$. When $z$ receives one or more searches for the first time, it accepts the one with the smallest identifier (suppose this is $v$) and joins $\border(v)$. 
    Afterwards, $z$ forwards the accepted search to its other neighbors, for at most $k$ hops in total. 
    Nodes in $B$ do not join borders or forward searches. 

    Let $\ball^+(v) := \ball(v) \cup \border(v)$ for all $v \in R$. 
    A Steiner tree $T_v^+$ is formed by adding the paths of the BFS token to $T_v$. 
    Each edge is added to at most one tree. 
    Let $\mathcal{B}$ be the ball graph for $\{\ball^+(v) : v \in R\}$. 
    For correctness, consider any $v,w \in R$ with $\dist_G(\ball(v), \ball(w)) \le k$ (see \Cref{fig:GkBallgraph}). 
    We prove that $\dist_{\mathcal{B}^+}(v,w) \le k$, i.e., $v$ and $w$ are adjacent in $\mathcal{B}^k$.
    Let $(v', x_1, \dots, x_s, w')$ be a shortest path between $\ball(v)$ and $\ball(w)$ in $G$, where $s \le k-1$, $v' \in \ball(v)$ and $w' \in \ball(w)$. 
    The BFS from $\ball(v)$ and $\ball(w)$ travels for $k$ hops in each direction, until it reaches another ball or border. 
    Hence, each $x_i$, $1 \le i \le s$ belongs to $\ball^+(y_i)$ for some $y_i \in R$ (possibly $y_i=v$ or $y_i=w$).
    By the number of nodes on the path, the total number of distinct balls $\ball^+(y_i)$ is at most $k-1$. 
    These nodes form a path from $v$ to $w$ in the ball graph $\mathcal{B}^+$, with length at most $k$. 
\end{proof}

\begin{figure}[h]
    \centering
    \includegraphics[width=0.75\textwidth]{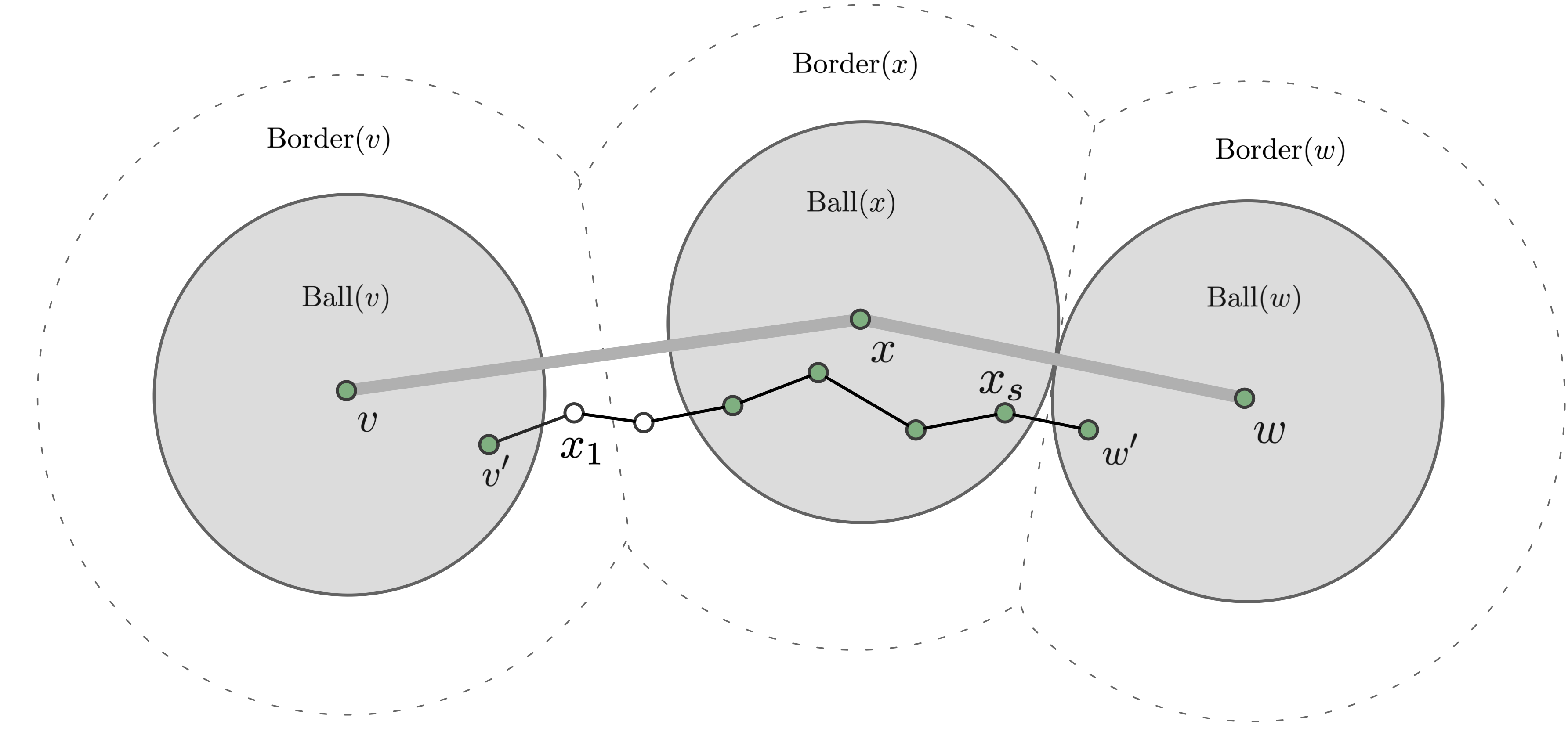}
    \vspace{-3mm}
    \caption{Distance-$k$ ball graph $\mathcal{B}$ with three vertices $v,x,w$ and parts of the underlying graph $G$. The edges of $\mathcal{B}$ are shown with thicker lines. Nodes $v$ and $x$ are adjacent in $\mathcal{B}$ because, while $\ball(v)$ and $\ball(x)$ do not share an edge in $G$, their borders do. }
    \label{fig:GkBallgraph}
\end{figure}

Let $\mathcal{B}$ be a distance-$k$ ball graph for $\{\ball(v) \subseteq R : v \in R\}$, formed with \Cref{lem:ballgraph}. We compute a network decomposition of $\mathcal{B}$ with separation $k+1$, which gives a network decomposition of $G^k$ for the undecided nodes: 
\begin{claim}\label{claim:ndGkBallGraph}
    A network decomposition of $\mathcal{B}^k$ can be transformed to a network decomposition of $G^k$ for nodes in $B$. The cluster diameter increases by a factor proportional to $k$ and the diameter of the balls.
\end{claim}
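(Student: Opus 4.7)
The plan is to form each cluster of the new decomposition by expanding a cluster of the given $\mathcal{B}^k$-decomposition via the balls, and to inherit the coloring. Let $(\mathcal{C}_1,\ldots,\mathcal{C}_m)$ denote the clusters of the $\mathcal{B}^k$-decomposition and $\chi$ its coloring. For each $i$, I would define the expanded cluster $\widetilde{\mathcal{C}}_i := \bigcup_{v\in\mathcal{C}_i}\ball(v)\subseteq B$ and assign it color $\chi(\mathcal{C}_i)$. Since $\{\ball(v)\}_{v\in R}$ partitions $B$ and $\{\mathcal{C}_i\}$ partitions $R$, the sets $\{\widetilde{\mathcal{C}}_i\}$ partition $B$ as required.

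The coloring is a valid $G^k$-decomposition coloring directly because $\mathcal{B}$ is a distance-$k$ ball graph. If $\widetilde{\mathcal{C}}_i$ and $\widetilde{\mathcal{C}}_j$ receive the same color for $i\neq j$, then $\mathcal{C}_i$ and $\mathcal{C}_j$ are non-adjacent in $\mathcal{B}^k$, i.e.\ $\dist_{\mathcal{B}}(\mathcal{C}_i,\mathcal{C}_j)>k$. The contrapositive of the distance-$k$ property of $\mathcal{B}$ then yields $\dist_G(\ball(v),\ball(w))>k$ for all $v\in\mathcal{C}_i$ and $w\in\mathcal{C}_j$, so $\dist_G(\widetilde{\mathcal{C}}_i,\widetilde{\mathcal{C}}_j)>k$, which meets the separation requirement from \Cref{def:networkDecomp} for $G^k$.

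For the diameter bound, I would use a triangle-inequality argument against the Steiner trees $T_v^+$ supplied by \Cref{lem:ballgraph}. Fix $\mathcal{C}_i$ of $\mathcal{B}^k$-diameter $d$ and $u_1,u_2\in\widetilde{\mathcal{C}}_i$ with $u_1\in\ball(v_1)$, $u_2\in\ball(v_2)$. There is a $v_1$-$v_2$ path $x_0,x_1,\ldots,x_\ell$ in $\mathcal{B}$ with $\ell\le dk$, and by the construction of \Cref{lem:ballgraph}, consecutive $\ball^+(x_j)$ and $\ball^+(x_{j+1})$ share an edge of $G$. Stitching a $u_1$-$u_2$ walk together by traversing each $\ball^+(x_j)$ inside its Steiner tree $T_{x_j}^+$ (of weak diameter $O(D+k)$, where $D$ is the diameter of the balls) and crossing each inter-ball edge at unit cost yields a walk in $G$ of length $O(dk(D+k))$, hence $\dist_{G^k}(u_1,u_2)=O(d(D+k))$, matching the advertised blow-up by a factor proportional to $k$ and the ball diameter.

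The main conceptual ingredient is already packaged into \Cref{lem:ballgraph}: its distance-$k$ property is exactly what translates a $G$-separation of the expanded clusters from a $\mathcal{B}^k$-separation of the original ones, which is the only nontrivial step. The only subtlety left to flag is that the diameter bound has to be read as a weak diameter in $G^k$, since the stitched walk routes through nodes of $\bigcup_v\border(v)$ that may lie outside the cluster; the Steiner trees $T_v^+$ are provided by \Cref{lem:ballgraph} precisely to justify this kind of routing, so I do not anticipate any real obstacle.
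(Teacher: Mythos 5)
Your proposal is correct and follows essentially the same route as the paper: each undecided node joins the cluster of its ball, the separation in $G^k$ follows from the distance-$k$ property of $\mathcal{B}$ (you argue via the contrapositive, the paper via contradiction — same content), and the diameter blow-up of $O(k+D)$ per ball is what the paper asserts, with your Steiner-tree stitching just spelling it out. The weak-diameter caveat you flag is also how the paper uses the claim.
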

\begin{proof}
    A network decomposition of $G^k$ is formed in the natural way: for each $v \in R$, nodes in $\ball(v)$ join the cluster of $v$. 
    
    The cluster diameter in $G$ increases by an $O(k+D)$-factor, where $D$ is the weak diameter of balls in $\{\ball(v) \subseteq B: v \in R\}$, because nodes in $\mathcal{B}$ (as formed in \Cref{lem:ballgraph}) are actually balls of diameter $O(k+D)$. 
    We prove that the clusters are properly separated in $G^k$. 
    Suppose for a contradiction that there are two distinct clusters of the same color that are adjacent in $G^k$. 
    Let $v, w$ be some nodes from the two clusters such that $\dist_G(v,w) \le k$. 
    The nodes are in two distinct balls $\ball(v')$ and $\ball(w')$ (otherwise $v$ and $w$ join the same cluster).
    By definition of $\mathcal{B}$, we have $\dist_\mathcal{B}(v', w') \le k$. 
    Since we computed a network decomposition of $\mathcal{B}^k$, $v'$ and $w'$ must belong to the same cluster in $\mathcal{B}$. 
    This is a contradiction of $v$ and $w$ belonging to different clusters in $G$.  
\end{proof}

We are ready to prove the main result of this section. 
To compute a maximal independent set of $G^k$, we use the shattering framework, with BeepingMIS as the base algorithm.
The reader is advised to take a look at \Cref{sec:revisedShattering}, containing a more detailed description of shattering in $G$.  
We use the one-phase pre-shattering approach (\Cref{sec:onePhaseShattering}). 
In the post-shattering phase, multiple instances of BeepingMIS are run in parallel. 
To do this efficiently in $G^k$, we assign the remaining undecided shorter IDs. 
Combined with the simulation tools of \Cref{lem:simBeeping}, this achieves the same runtime as for MIS of $G$ (up to $k$ factors in the runtime). 

\thmMISPower*

\noindent The runtime can also be given as $\widetilde{O}(k \log \Delta(G^k) \cdot \log \log n + k^4 \log^5 \log n)$.

\begin{proof}[Proof of \Cref{thm:MISPower}:]
    
    \textbf{Pre-shattering.} 
    We simulate the \textit{BeepingMIS} algorithm of \cite{Ghaffari2017} on $G^k$ with communication network $G$. 
    Run the algorithm for $O(s\log \Delta^k)$ steps, where $s=8$ and $\Delta^k$ is an upper bound on $\Delta(G^k)$. 
    Each step can be simulated on $G^k$ with \Cref{lem:simBeeping},  with a factor-$k$ slowdown, using the original $O(\log n)$-bit IDs and the full $\Theta(\log n)$-bit communication bandwidth. The total runtime of the pre-shattering phase is $O(k^2 \log \Delta)$. The BeepingMIS algorithm computes an independent set $\IS \subseteq V$ of $G^k$, which is part of the final result. 
    
    \textbf{Post-shattering: ruling set.} 
    Let $B = V \setminus (\IS \cup N^k(\IS))$ be the nodes that remain undecided after the $\Theta(\log \Delta^k)$ steps of BeepingMIS on $G^k$. 
    We compute a ruling set $R$ of $B$, such that nodes in $R$ have independent executions in the pre-shattering phase. 
    For this, nodes in $R$ must be at least 5 hops apart in $G^k$, or equivalently at distance at least $4k+1$ in $G$ (for convenience, we use a larger independence of $5k+1$). 
    Compute a $(5k+1, O(k^2 \log \log n)$-ruling set $R$ of $B$, with respect to distances in $G$,  using the ruling set algorithm of \cite[Lemma 2.2]{Gha19}.
    This takes $O(k^2 \log \log n)$ rounds and succeeds with high probability. 
    Running in parallel with the ruling set algorithm, we use \Cref{lem:GhaRuling} to partition $B$ into a set $\{\ball(v) \subseteq B : v \in R\}$ of disjoint balls around the ruling set nodes. 
    For each $v \in R$, $\ball(v)$ is an $10k$-connected set in $G$. 
    This implies that each $\ball(v)$ is 10-connected in $G^k$. 
    For each $v \in R$, there is a Steiner tree $T_v$ in $G$, with $\ball(v)$ as the terminal nodes and diameter $O(k^2 \log \log n)$, and any edge is in at most $O(k \log \log n)$ Steiner trees. 
    
    \textbf{Ball graph.} 
    We form a distance-$k$ \textit{ball graph} $\mathcal{B}$ for $\{\ball(v) \subseteq B : v \in R\}$, using \Cref{lem:ballgraph} in $O(k)$ rounds. 
    The set of nodes in $\mathcal{B}$ is $R$. 
    Each node $v \in R$ corresponds to a set $\ball^+(v) \subseteq V$, where $\ball^+(v)$ is a superset of $\ball(v)$, and $\ball^+(v), \ball^+(w)$ are disjoint for any $v \neq w \in R$.
    For each $v \in R$, let $T_v$ be the Steiner tree for $\ball^+(v)$, combining the original Steiner tree for $\ball(v)$ and the parts added in \Cref{lem:ballgraph}.
    Any edge in $G$ is in at most $O(k \log \log n)$ trees (\Cref{lem:ballgraph} does not increase congestion). 
    $\mathcal{B}$ is defined as the ball graph for $\{\ball^+(v) : v \in R\}$, i.e., $v,w \in R$ are adjacent in $\mathcal{B}$ if $\ball^+(v)$ and $\ball^+(w)$ are adjacent in $G$. 
    By \Cref{lem:ballgraph}, for any $v \neq w \in R$ it holds that if $\dist_G(\ball(v),\ball(w)) \le k$, then $\dist_\mathcal{B}(v,w) \le k$. 
    Hence, any two distinct connected components of the ball graph are at distance at least $k+1$ from each other in $G$. 
    The rest of the MIS algorithm is executed independently on each connected component of the ball graph. 
    
    Fix a connected component $\mathcal{C}$ of $\mathcal{B}$. 
    We bound the number of balls in $\mathcal{C}$, as well as the total number of undecided nodes contained in the balls of $\mathcal{C}$.
    Let $R_\mathcal{C} = V(\mathcal{C}) \subseteq R$ be the nodes in $\mathcal{C}$, and let $B_\mathcal{C} := \cup_{v \in R_\mathcal{C}} \ball(v)$ be the set of undecided nodes contained in the balls of nodes in $R_\mathcal{C}$. 
    The set $B_\mathcal{C}$ is $10k$-connected in $G$, as each ball is $10k$-connected and adjacent balls are at most $2k$ hops apart in $G$ (c.f. \Cref{claim:Bc8Conn}). 
    Equivalently, $B_\mathcal{C}$ is 10-connected in $G^k$. 
    Start by bounding the number of nodes $|R_\mathcal{C}|$.
    \Cref{lem:shatteringBeeping} for $G^k$, with $s=10$, states that, with high probability, in $G^k$ there does not exist a 5-independent, 18-connected set $U \subseteq R$ with $|U| \ge t=\log_\Delta n$ (where $\Delta$ is a lower bound on $\Delta(G^k)$). 
    This is not immediately usable, as the set $R_\mathcal{C}$ is 5-independent in $G^k$, but it is not necessarily 18-connected. 
    For the purposes of analysis, we construct another set $R'_\mathcal{C} \supseteq R_\mathcal{C}$, where $R'_\mathcal{C} \subseteq B_\mathcal{C}$.
    $R'_\mathcal{C}$ is any greedily chosen $(5,4)$-ruling set of $B_\mathcal{C}$, with respect to distances in $G^k$. 
    By \Cref{lem:54rulingConnected}, $R'_\mathcal{C}$ is 5-independent and 18-connected in $G^k$, where we use the fact that $B_\mathcal{C}$ is 10-connected in $G^k$. 
    Given that, with high probability, the size of any 5-independent and 18-connected set in $G^k$ is less than $t$, we get $|R_\mathcal{C}| < t$. 
    For the second bound, recall that $B_\mathcal{C}$ is a 10-connected set in $G^k$.
    Given that \Cref{lem:shatteringBeeping} (P1) holds, (P2) (again for $G^k$ and $s=10$) states that $|B_\mathcal{C}|$ is at most $t \cdot \Delta^{4k}$, where $\Delta^{k}$ is an upper bound on $\Delta(G^k)$.  

    \textbf{Network decomposition.} 
    We compute a network decomposition of $\mathcal{C}$. 
    We use \Cref{thm:networkDecomp} to deterministically compute a network decomposition of $\mathcal{C}$ with $\widetilde{O}(\log \log n)$ colors, weak diameter $O(k \cdot \log \log n)$ and separation $2k+1$ between clusters of the same color. 
    The fact that the network decomposition algorithm can be simulated on a ball graph is verified in \Cref{claim:ndAlgoBallGraph}.
    The simulation runs with an $O(k^3 \log^2 \log n)$-factor slowdown, due to the fact that each node is simulated by a Steiner tree with weak diameter $O(k^2 \log \log n)$ and congestion $O(k \log \log n)$. 
    Hence, the runtime is $\widetilde{O}(k^4 \log^5 \log n)$. 
    By \Cref{claim:ndGkBallGraph}, we can extract a network decomposition of $G$ for $B_\mathcal{C}$: each node in $B_\mathcal{C}$ joins the cluster of the ball it belongs in. 
    From now on, the clusters of the network decomposition are considered as sets of nodes in $B_\mathcal{C}$.
    In $G$, the actual diameter of the clusters is $O(k^3 \log^2 \log n)$. 
    For each cluster, there is a Steiner tree, originally in the ball graph $\mathcal{C}$. 
    By replacing each node $v \in R_\mathcal{C}$in the ball graph with the Steiner tree of the corresponding ball $T_v$, we get a Steiner tree for the cluster in $G$. 
    There is no congestion in the Steiner trees produced by \Cref{thm:networkDecomp}, while each edge in $E$ is in at most $O(k \log \log n)$ Steiner trees of balls. 
    Hence, the congestion in the Steiner trees formed for the clusters is at most $O(k \log \log n)$. 

    We process each color of the network decomposition separately. 
    Fix some color $j$ and let $S$ be a cluster of color $j$.  
    Recall that $N:= t \cdot \Delta^{4k}$ is an upper bound on $|B_\mathcal{C}|$. 
    We assign each node $v \in S$ an identifier from $[N]$, such that nodes in a cluster have unique identifiers; these identifiers use $O(\log N)=O(\log \log n+k\log \Delta)$ bits. 
    The IDs are assigned using the Steiner tree of the cluster, in $O(k^3 \log^2 \log n)$ time, by distributing a range of IDs to each subtree, based on the number of nodes in $S$ contained in the subtree. 
    
    \textbf{Final MIS.} Now we are ready to make the independent set maximal for nodes in $S$.
    We run BeepingMIS for $O(\log N) = O(k\log \Delta + \log \log n)$ steps, with $O(\log_N n)$ executions of the algorithm running in parallel. Each execution is allocated $\Theta(\log N)$ bits of communication bandwidth, corresponding to the size of the new IDs.  Hence, the total bandwidth used by the parallel executions is $O(\log N \cdot \log_N n) = O(\log n)$, which fits in a single message. 
    As one step of BeepingMIS in $G^k$ requires $O(k)$ communication rounds in $G$, running the parallel executions for $O(\log N)$ steps takes $O(k \log N) = O(k^2 \log \Delta + k \log \log n)$ rounds in total. 
    Executions of clusters of the same color can be run independently, because messages travel for at most $k$ hops from an undecided node, and the clusters are at least $2k+1$ hops apart. 
    Each execution outputs an MIS of $G^k$ in the cluster with probability at least $1- 1/N^c$, for some constant $c$.  
    The probability that at least one of the executions succeeds is at least $1-1/N^{c \cdot \log_N n} = 1-1/n^c$. 
    To find a successful run, we aggregate in the Steiner tree of the cluster whether each execution was successful or not (an unsuccessful execution is not maximal, i.e. there exists an undecided node), using a single bit indicator for each execution, in $O(k^4 \log^3 \log n)$ rounds. 
    The cluster leader picks one successful run and informs the rest of the cluster. 
    The chosen independent set is added to the output. 
    All nodes in $B_\mathcal{C}$ within $k$ hops of the independent set also become decided, regardless of whether they are in a cluster of color $j$. 
    The runtime for one color class is $O(k^2 \log \Delta + k^4 \log^3 \log n)$. 
    Over all $\widetilde{O}(\log \log n)$ colors this becomes $\widetilde{O}(k^2 \log \Delta \cdot \log \log n + k^4 \log^4 \log n)$. 
    The total runtime, including pre-shattering, computing the ruling set and network decomposition is $\widetilde{O}(k^2 \log \Delta \cdot \log \log n + k^4 \log^5 \log n)$.
\end{proof}

\begin{corollary}\label{cor:MISPowerInduced}
    The MIS algorithm of \Cref{thm:MISPower} computes a MIS of $G^k[Q]$, with high probability, when used with a subset $Q \subseteq V$. The runtime is $\widetilde{O}(k \log \Delta_Q \cdot \log \log n + k^4 \log^5 \log n)$, where $\Delta_Q = \Delta(G^k[Q])$. 
\end{corollary}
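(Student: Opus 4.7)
The plan is to re-run the algorithm of \Cref{thm:MISPower} with only a handful of bookkeeping changes: treat $Q$ as the universe of potential MIS nodes and $V\setminus Q$ as pure forwarders, and regard the virtual graph being solved as $G^k[Q]$ rather than $G^k$. Concretely, only nodes in $Q$ ever mark themselves in BeepingMIS, only nodes in $Q$ are ever inserted into the output $\IS$, and every occurrence of $\Delta(G^k)$ in the analysis is replaced by $\Delta_Q=\Delta(G^k[Q])$. The communication network is still $G$ and every simulation tool (\Cref{lem:simBeeping}, \Cref{lem:ballgraph}, and the simulation of the network decomposition algorithm on a ball graph) applies verbatim, because these tools are parameterised by an arbitrary node set and never reference $V$ explicitly.

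For the pre-shattering, I would run $\Theta(s\log \Delta_Q)$ steps of BeepingMIS on $G^k[Q]$ with $s=8$, simulated round-by-round via \Cref{lem:simBeeping} at a cost of $O(k)$ rounds each, for a total of $O(k\log\Delta_Q)$ rounds. Applying \Cref{lem:shatteringBeeping} to the $n$-node graph $G^k[Q]$ (whose maximum degree is $\Delta_Q$) yields, with high probability in $n$, the same two shattering properties (P1) and (P2) but with parameter $t_Q:=\log_{\Delta_Q}n$ and a bound of $t_Q\cdot \Delta_Q^4$ on each $10$-connected component of the undecided set $B\subseteq Q$. Post-shattering then proceeds exactly as in the proof of \Cref{thm:MISPower}: compute a $(5k+1,O(k^2\log\log n))$-ruling set of $B$ via \Cref{lem:GhaRuling}, form the distance-$k$ ball graph via \Cref{lem:ballgraph}, run the deterministic network decomposition of \Cref{thm:networkDecomp}, and, for each color class, run $O(\log_N n)$ parallel instances of BeepingMIS on $G^k[Q]$ with shortened identifiers of size $O(\log N)$ where now $N:=t_Q\cdot \Delta_Q^4$. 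The per-color cost becomes $O(k\log N + k^4\log^3\log n)=O(k\log\Delta_Q + k\log\log n + k^4\log^3\log n)$, which over $\widetilde{O}(\log\log n)$ colors contributes $\widetilde{O}(k\log\Delta_Q\cdot \log\log n + k^4\log^4\log n)$; adding the pre-shattering, ruling-set, and network-decomposition overheads gives the claimed $\widetilde{O}(k\log\Delta_Q\cdot \log\log n + k^4\log^5\log n)$ bound.

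The one technical point I expect to need careful verification is that every ingredient used in the proof of \Cref{thm:MISPower} is genuinely oblivious to the ambient set and can be invoked on $Q$ alone. In particular: \Cref{lem:simBeeping} clearly works when the beeping set $S$ is any subset of $V$ (so letting $S$ be the currently-marked nodes in $Q$ suffices); \Cref{lem:GhaRuling} is invoked on the arbitrary set $B\subseteq V$ and its internal ball construction keeps all balls inside $B\subseteq Q$, so the resulting balls are $10k$-connected in $G$ and therefore $10$-connected in $G^k[Q]$ as required by the connectivity-based shattering argument; \Cref{lem:ballgraph} only needs disjoint Steiner trees as input and is indifferent to whether the ruler set comes from $V$ or $Q$; and the independence condition $\dist_G(\cdot,\cdot)\ge 5k+1$ of \Cref{lem:GhaRuling} already implies $5$-independence in $G^k\supseteq G^k[Q]$, which is the hypothesis used by (P1) in \Cref{lem:shatteringBeeping}. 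Once these sanity checks are discharged, the analysis of \Cref{thm:MISPower} carries over line-by-line with $\Delta_Q$ in place of $\Delta(G^k)$.
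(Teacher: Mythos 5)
Your overall route is the same as the paper's: the paper's entire proof of \Cref{cor:MISPowerInduced} consists of observing that nodes of $V\setminus Q$ are put into the decided state at the start, keep forwarding messages (edges of $G^k[Q]$ may route through them), and that both phases then run unchanged; your proposal is exactly this plan, spelled out with $\Delta_Q$ substituted for $\Delta(G^k)$ throughout, and your runtime accounting matches the claimed bound.

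The one step where your added justification does not go through as written is the inference that the balls of \Cref{lem:GhaRuling} being $10k$-connected in $G$ makes them $10$-connected in $G^k[Q]$. Connectivity in $G^k[Q]$ requires the witnessing paths to hop through nodes of $Q$: two nodes of $B\subseteq Q$ at $G$-distance, say, $2k$ can be knocked into the same ball (knock-out hops reach up to $10k$ in $G$), yet they need not even lie in the same connected component of $G^k[Q]$, so the pair is $10k$-connected in $G$ but not $c$-connected in $G^k[Q]$ for any $c$. This matters precisely because you apply \Cref{lem:shatteringBeeping} to the graph $G^k[Q]$ --- which is what buys the $\Delta_Q$- and $t_Q$-dependence: properties (P1)/(P2) only concern sets connected with respect to $G^k[Q]$, and the tree-embedding count in their proof uses the degree $\Delta_Q$ of that graph, so you genuinely need $B_\mathcal{C}$ (and hence the greedy $(5,4)$-ruling set $R'_\mathcal{C}$) to be $O(1)$-connected in $G^k[Q]$, not merely in $G^k$. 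Closing this requires either arguing that the ruling-set/ball construction can be carried out so that knock-out hops stay at bounded distance in $G^k[Q]$, or redoing the union bound with a connectivity notion whose per-hop branching is still $\poly(\Delta_Q)$; $G$-proximity of $Q$-nodes alone gives neither. In fairness, the paper's own two-sentence proof is silent on this point as well, but since your write-up makes the inference explicit, it needs an actual argument rather than the stated "therefore".
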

\begin{proof}
    Only nodes in $Q$ are allowed to join the independent set. Other nodes also take part in the algorithm, but they are set to the \textit{decided} state at the start of the algorithm. Like other decided nodes, they still take part in forwarding messages between the remaining undecided nodes, as they may be part of edges of $G^k[Q]$. Both the pre-shattering as well as the post-shattering phase can be executed in this setting.
\end{proof}

\subsection{Ruling sets in \texorpdfstring{$G^k$}{G\^k} (Corollary~\ref{cor:kkBetaRulingRand})}
\label{ssec:randRuling}
In this section, we prove the following corollary of \Cref{thm:MISPower}.
\newpage
\thmkkBetaRulingRand*

We can use our MIS of $G^k$ algorithm in combination with the sparsification procedure of \cite{KP12} and \cite{BKP14} to compute ruling sets of power graphs. We start by briefly explaining the algorithm of \cite{KP12} in $G$. Let $H=(V_H, E_H)$ be any graph with maximum degree $\Delta_H$ and let $f\ge 2$ be a parameter. The algorithm of \cite{KP12} samples a set of nodes $Q \subseteq V_H$ in $O(\log_f \Delta_H)$ rounds, such that (1) the maximum degree in $H[Q]$ is $O(f \cdot \log n)$ with high probability, and (2) $Q$ is a dominating set of $V_H$. All communication is done with beeps, sent by sampled nodes to notify their neighbors about being sampled. This can be simulated in $G^k$ in $k$ rounds, using \Cref{lem:simBeeping} (in fact without including IDs, because beeping nodes do not need to listen to other beeps). The algorithm also does not require nodes to know their degree (see algorithm \texttt{Sparsify-GG} of \cite{BKP14}). Hence, we can simulate it in $G^k$ with communication network $G$, with an $O(k)$-factor slowdown. 

Given $\beta \ge 2$, it is possible to compute a $(2,\beta)$-ruling set by iterating the the algorithm of \cite{KP12} $\beta-1$ times, combined with a MIS computation on the final sampled subgraph \cite{BKP14, Gha19}. The graph is sparsified iteratively, producing a sequence of subsets $V \supseteq Q_1 \supseteq \dots \supseteq Q_{\beta-1}$, where $Q_s$ is the result of the $s$th iteration, computed with the algorithm of \cite{KP12} on $G[Q_{s-1}]$. Let $f_1 > \dots > f_{\beta-1}$ be the parameters used in each iteration. For each $1 \le s \le \beta-1$, it holds that (1) the maximum degree $\Delta_s$ of $G[Q_s]$ is $O(f_i \cdot \log n)$ with high probability, and (2) $Q_s$ is an $s$-dominating set of $G$. The $s$th iteration takes $O(\log_{f_s} \Delta_{s-1})=O(\frac{\log f_{s-1} + \log\log n}{\log f_s})$ rounds. By setting $f_s=2^{(\log \Delta)^{1-s/(\beta-1)}}$, the runtimes of the $\beta-1$ iterations are balanced, taking $O((\beta-1)\cdot \log^{1/(\beta-1)} \Delta)$ rounds in total. The maximum degree of $G[Q_{\beta-1}]$ is $O(f_{\beta-1} \cdot \log n) = O(\log n)$. Now, a maximal independent set algorithm can be used to compute a MIS of $G[Q_{\beta-1}]$, resulting in a $(2,\beta)$-ruling set of $G$.

\begin{proof}[Proof of \Cref{cor:kkBetaRulingRand}]
    We start by sparsifying $G^k$ for $\beta-1$ iterations, using the algorithm of \cite{KP12}. For $1 \le s \le \beta-1$, let $Q_s \subseteq V$ be the result of the $s$th iteration. Initially, $Q_0 := V$. In the $s$th iteration, we set $f_s=2^{(\log \Delta^k)^{1-s/(\beta-1)}}$. We simulate the algorithm of \cite{KP12} on $G^k[Q_{s-1}]$ and let $Q_s$ be the sampled nodes in this iteration. By the analysis in \cite{KP12} applied for $G^k$, it holds that (1) the maximum degree of $G^k[Q_s]$ is $O(f_s \cdot \log n)$ with high probability, and (2) $Q_s$ is a dominating set of $Q_{s-1}$ in $G^k$. This implies that $Q_s$ is a $k$-dominating set of $Q_{s-1}$ in $G$. The algorithm of \cite{KP12} runs in 
    $$O\left(\log_{f_s} \Delta_{s-1}\right) 
    = O\left(\frac{\log f_{s-1}}{\log f_s} + \log\log n \right)
    = O\left((\log \Delta^k)^{1/(\beta-1)} + \log \log n\right)$$
    The simulation in $G^k$ runs with a $k$-factor slowdown.
    
    The result of the last iteration is a set $Q_{\beta-1}$, such that the maximum degree in $G^k[Q_{\beta-1}]$ is 
    $O(f_{\beta-1} \cdot \log n) = O(2^{(\log \Delta^k)^{0}} \cdot \log n) = O(\log n)$. 
    We compute a MIS $\IS$ of $G^k[Q_{\beta-1}]$ with the algorithm of \Cref{thm:MISPower} (see \Cref{cor:MISPowerInduced}). 
    Given that $\Delta \big(G^k[Q_{\beta-1}] \big) = O(\log n)$, computing the MIS takes $\widetilde{O}(k^4 \cdot \log^5 \log n)$ rounds. The result is $(k+1)$-independent in $G$. The domination is $(\beta - 1) \cdot k + k = \beta \cdot k$. Hence, the result is an $(k+1, \beta \cdot k)$-ruling set of $G$. The total runtime is 
    \begin{align*}
        &\widetilde{O}\left( (\beta-1)\cdot k \cdot \left((\log \Delta^k)^{1/(\beta-1)} + \log \log n\right) + k^4 \cdot  \log^5 \log n\right)\\
        &= \widetilde{O}\left( \beta \cdot k^{1 + 1/(\beta-1)} \cdot (\log \Delta)^{1/(\beta-1)} + \beta \cdot k \cdot \log\log n + k^4 \cdot \log^5 \log n\right) & &\qedhere
    \end{align*}
\end{proof}

\clearpage 
\bibliographystyle{alpha}
\bibliography{ref}

\appendix

\section{Network decomposition for power graphs}
\label{app:networkDecomp}
Our algorithms use network decompositions of power graphs as a subroutine. More formally, we require that clusters of the same color class are at least $k+1$ hops apart for some parameter $k$. The respective decompositions can be computed with the algorithm of \cite{MU21}. However, as recent results on network decomposition \cite{GGH22} also work in \CONGEST, are significantly faster than the result of \cite{MU21}, and provide better guarantees, we sketch on how to adapt their algorithm to provide the necessary decompositions.\footnote{The methods of \cite{GGH22} essentially directly provide the required results. Additionally, we have confirmed this fact with one of the authors of \cite{GGH22}.} Recall \Cref{def:networkDecomp} for the definition of a network decomposition.

\begin{theorem}[Network decomposition of $G^k$]\label{thm:networkDecomp}
    Let $k \ge 1$ (potentially a function of $n$). There is a deterministic \congest algorithm that in $\widetilde{O}(k\cdot \log^3 n)$ rounds computes a network decomposition of $G^k$ with $O(\log n \cdot \log \log n)$ colors and weak-diameter $O(k \cdot \log n)$ in $G$. For each color $c$ and each cluster $C \subseteq V$, there is a Steiner tree $T_C$ with radius $O(k \cdot \log n)$ in $G$, with $C$ as the terminal nodes. Each edge of $G$ is in at most one tree of clusters of color $c$. 
\end{theorem}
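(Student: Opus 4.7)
The plan is to invoke the recent deterministic \CONGEST network decomposition algorithm of \cite{GGH22} with input graph $H := G^k$, simulating the algorithm using $G$ as the physical communication network. The GGH22 algorithm (in the variant relevant here) produces an $(O(\log n \cdot \log\log n), O(\log n))$-network decomposition of any input graph $H$ in $\widetilde{O}(\log^3 n)$ rounds on $H$, and its Steiner trees have congestion $1$ per color class inside $H$. If this can be faithfully simulated on $G$ with only a $O(k)$-factor overhead, the resulting decomposition has $O(\log n \cdot \log\log n)$ colors, weak diameter $O(k \log n)$ in $G$, Steiner-tree radius $O(k \log n)$ in $G$, and runtime $\widetilde{O}(k \log^3 n)$, which matches the statement.

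The first step is to inspect the internal primitives used by the GGH22 algorithm and observe that it only needs (i) synchronous BFS-style ball growing from a set of current cluster centers, and (ii) routing short messages between cluster nodes and their centers along already-established BFS trees. Both primitives lift from $H = G^k$ to $G$ essentially for free: one hop of BFS in $G^k$ is a $k$-hop BFS in $G$ (a token that is forwarded to all $G$-neighbors for $k$ rounds), and routing $O(\log n)$-bit messages along the resulting rooted trees in $G$ simply costs a factor of $k$ in depth. No algorithmic step of GGH22 needs to know $\Delta(G^k)$ or enumerate all $G^k$-edges, which is crucial because such primitives could not be simulated cheaply in \CONGEST on $G$.

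Having simulated the algorithm, the correctness of the output as a network decomposition of $G^k$ is immediate: clusters of the same color are non-adjacent in $G^k$ (i.e.\ at $G$-distance $> k$), which is exactly the separation requirement of \Cref{def:networkDecomp}. The weak diameter and Steiner-tree radius scale from $O(\log n)$ in $G^k$ to $O(k\log n)$ in $G$ by the same $k$-hop expansion. For the congestion claim, I would use the fact that in GGH22 each cluster's Steiner tree is induced by a tree of shortest paths in $H$, constructed so that nodes belong to the tree of at most one cluster per color. Reinterpreting these paths in $G$, each \emph{$G$-edge} used to realize a single $G^k$-edge of a Steiner tree is assigned together with that parent pointer to a unique (cluster, color) pair, yielding congestion $1$ per color in $G$.

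\textbf{Main obstacle.} The one place where care is needed is verifying that the edge-disjointness in $H$ indeed translates into edge-disjointness in $G$. A priori, two Steiner trees of distinct same-color clusters could use the same $G$-edge while realizing different $G^k$-edges, since a single $G^k$-edge is a $k$-path in $G$. I would resolve this by making the BFS in $G$ deterministic with a fixed tie-breaking rule (smaller cluster-ID wins, and within a cluster the BFS parent is fixed by the BFS round in which a node is discovered). Then every $G$-edge is oriented by its BFS parent pointer and assigned to the unique cluster that first reached it, which means it can appear in at most one Steiner tree per color. This, together with the $O(k)$-factor simulation overhead, gives the claimed $\widetilde{O}(k \log^3 n)$ runtime.
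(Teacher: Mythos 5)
There is a genuine gap, and it sits exactly at the point you flag as the ``main obstacle'' but then dismiss with a one-line fix. Your plan treats the \cite{GGH22} algorithm as a black box run on the virtual graph $H=G^k$ and asserts that one virtual round costs only $O(k)$ rounds of $G$. That is not justified: in the ball-growing process each \emph{virtual} node forwards at most $\kappa=O(\log\log n)$ cluster tokens, but a faithful simulation of one $G^k$-hop forces every $G$-node on the $k$-hop relay paths to carry \emph{all distinct} tokens forwarded by the (up to $\Delta^{k-1}$ many) virtual nodes within distance $k-1$ of it; there is no a priori polylogarithmic bound on the number of distinct cluster-center IDs crossing a single $G$-edge in one simulated step, so the $O(k)$ overhead does not follow from ``it is only BFS-style broadcasting''. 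If instead you let relay nodes drop tokens or apply priority rules to stay within the bandwidth, you are no longer simulating the algorithm faithfully (the frontier counts and cluster-joining decisions of \cite{GGH22} depend on which tokens arrive in which window), so correctness would need a non-black-box argument. The same tension breaks your congestion fix: edge-disjointness of Steiner trees in $H$ does not survive realizing each $H$-edge as a $k$-hop $G$-path, and assigning every $G$-edge to ``the unique cluster that first reached it'' either forbids relaying other clusters' tokens over that edge (destroying faithfulness) or, if relaying continues, does not prevent two same-color trees from using the edge; moreover, after such a reassignment you have not shown that each cluster still owns a \emph{connected} tree of radius $O(k\log n)$ spanning its terminals.

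For comparison, the paper does not simulate anything on $G^k$. It adapts the two building blocks of \cite{GGH22} (\Cref{lem:lowDegClustering} and \Cref{lem:isolatedClustering}) so that the ball growing runs \emph{directly in $G$} with the separation parameter set to $s=k+1$: the delays, the $2s$-step frontier window, and the BFS all live in $G$, with non-living nodes merely relaying so that distances are measured in $G$, and the derandomized delay choice counts only living nodes. The factor $k$ in the runtime $\widetilde{O}(k\log^3 n)$ and in the radius $O(k\log n)$ then comes from the separation/delay scale, not from paying for virtual edges, and the congestion-$1$ guarantee per color is inherited from the fact that the trees are genuine BFS-token paths grown in $G$. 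Iterating the two lemmas $O(\log n\cdot\log\log n)$ times over the remaining living nodes gives the color classes. Your proposal would need either this kind of internal adaptation or a real argument bounding per-edge token multiplicity in the virtual simulation; as written, neither is provided.
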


\begin{lemma}[Low-degree clustering. Adaptation of {\cite[Theorem 4.1]{GGH22}}] \label{lem:lowDegClustering}
    Let $s \ge 2$ (potentially a function of $n$). Let $A \subseteq V$ be a set of living nodes. There is a deterministic \congest algorithm that, in $\widetilde{O}(s \cdot \log^2 n)$ rounds, computes a clustering $\mathcal{C}$, where $C \subseteq A$ for each cluster $C \in \mathcal{C}$, with   
    \begin{enumerate}
        \item weak diameter $O(s \cdot \log n)$ in $G$. For each cluster $C \in \mathcal{C}$, there is a Steiner tree of radius $O(s \cdot \log n)$ in $G$, with $C$ as the terminal nodes. Each edge of $G$ is in at most one tree.
        
        \item $s$-hop degree\footnote{See \cite{GGH22} for the definition of $s$-hop degree of a clustering} of $\mathcal{C}$ is at most $\lceil 100 \log \log n \rceil$, defined with respect to the Steiner trees,
        \item the number of clustered nodes is at least $|A|/2$.
    \end{enumerate}
\end{lemma}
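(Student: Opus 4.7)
The plan is to adapt the low-degree clustering algorithm of \cite{GGH22} (their Theorem 4.1) by executing it on the virtual graph $G^s[A]$ while using the original graph $G$ as the communication network. In its standard form, \cite{GGH22} constructs a clustering of the input graph with weak diameter $O(\log n)$, low $s$-hop degree $O(\log\log n)$, Steiner trees of edge-congestion one, and covering at least half the nodes, all in $\widetilde{O}(\log^2 n)$ rounds. The three properties of \Cref{lem:lowDegClustering} then map onto the guarantees of \cite{GGH22} once we reinterpret its input as $G^s[A]$: a cluster of weak diameter $O(\log n)$ in $G^s$ has weak diameter $O(s\log n)$ in $G$, while the $s$-hop degree and size guarantees transfer unchanged because they are combinatorial properties of the virtual input.

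The first step of the simulation is to implement each hop of growth in $G^s$ by $s$ consecutive rounds of BFS in $G$. Nodes in $V\setminus A$ participate as Steiner relays: they forward BFS tokens (which carry $O(\log n)$-bit cluster identifiers and distance markers) but do not themselves make clustering decisions. Because \cite{GGH22} only needs per-round exchanges of $O(\log n)$ bits per edge and only a bounded number of tokens pass through any given edge (thanks to the sparsity of the growing clusters), the bandwidth fits within standard \CONGEST messages. This yields the claimed total round complexity of $\widetilde{O}(s\log^2 n)$.

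The second step is to build the Steiner trees in $G$ so that each edge of $G$ is used by at most one tree. When a BFS token from cluster center $c$ first reaches a node $v$ along an edge $e$, we add $e$ to the Steiner tree $T_c$ and record $e$ as the parent edge of $v$; ties among tokens arriving in the same round are broken by center identifier. Each node therefore has a unique parent edge and contributes to a unique Steiner tree, which yields the congestion-one property. Terminals in the tree are precisely the nodes of $A$ that joined the corresponding cluster, while non-terminal nodes in $V\setminus A$ may appear as internal Steiner points.

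The main obstacle is verifying that the $s$-hop degree analysis of \cite{GGH22} carries over to the power-graph reinterpretation and to the Steiner trees produced by the simulation. The $s$-hop degree is defined with respect to the Steiner trees, so any two clusters whose Steiner trees come within $s$ hops of one another in $G^s$ are connected in $G$ by a path of length at most $s^2$, which the BFS exploration necessarily discovers and makes visible to the potential function used by \cite{GGH22}. Thus the adjacency structure on which the original potential-function argument relies is preserved verbatim, and the $O(\log\log n)$ bound on the $s$-hop degree follows directly. The size guarantee (property 3) is then inherited from the original theorem applied with $G^s[A]$ as the input, completing all three required properties.
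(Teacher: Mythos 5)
Your route is genuinely different from the paper's, and it has a real gap at its core. The paper does \emph{not} simulate \cite{GGH22} on the power graph $G^s[A]$: the low-degree clustering of \cite[Theorem 4.1]{GGH22} is already parameterized by the separation/hop parameter, so the paper runs it natively on $G$ with parameter $s$ (delays ranging up to $O(s\log n)$, a $2s$-step frontier window after the first token arrival, each node forwarding at most $\lceil 100\log\log n\rceil$ tokens), and the only adaptations needed are that dead nodes $V\setminus A$ relay tokens but never initiate or join clusters (whence weak rather than strong diameter, with the token paths serving as Steiner trees), and that the potential functions in the derandomized delay selection are summed over living nodes only, which yields the $|A|/2$ guarantee. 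Your proposal instead treats $s$ as a graph power and simulates the unmodified algorithm on the virtual graph $G^s[A]$, and the step where you claim the simulation fits in \CONGEST is unjustified: the per-node cap of $\kappa$ forwarded tokens bounds what a \emph{virtual} node accepts, not how many distinct virtual nodes' floods cross a single edge of $G$ when each virtual hop is expanded into an $s$-hop flood. A relay edge can lie within distance $s$ of up to $\Delta^{s}$ virtual nodes, each pushing up to $\kappa$ distinct center tokens in one simulated step, so the bandwidth claim ``only a bounded number of tokens pass through any given edge'' does not hold; and if you let relays drop or cap tokens to fix this, the first-arrival times and the $\mathrm{frontier}$ counts that the delay-based clustering and its derandomized potential argument rely on are no longer faithfully computed. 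Avoiding exactly this congestion problem is why the paper uses the separation-parameterized algorithm directly in $G$.

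A second, smaller gap is the transfer of the $s$-hop degree bound. The lemma requires the bound \emph{with respect to the Steiner trees in $G$}, which contain relay nodes outside $A$; running \cite{GGH22} on $G^s[A]$ gives you a hop-degree bound in $G^s$-distance between cluster members of $A$, and your argument that the ``adjacency structure \ldots is preserved verbatim'' is asserted rather than shown (the observation that trees within $s$ hops in $G^s$ are within $s^2$ hops in $G$ points in the wrong direction --- it does not bound, for a given node or tree, the number of clusters whose trees come within $s$ hops in $G$). In the paper's version this issue does not arise because the frontier counting is performed in $G$ itself and the Steiner trees are exactly the token paths whose proximity the frontier window measures.
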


\noindent \textit{Proof sketch.}
    We start by giving an overview of the algorithm of \cite[Theorem 5.1]{GGH22}. 
    Their algorithm consists of two main components, clustering from given delays and computing a suitable delay function. 
    A delay function assigns each $v \in V$ a positive integer delay $\text{del}(v)$ from 1 to $O(s \log n)$. 
    Let $\kappa= \lceil 100\log\log n\rceil$ be a constant (replacing $k$ in \cite{GGH22}, not to be confused with the parameter for the power of $G$ in \Cref{thm:networkDecomp}).
    Together with the separation parameter $s$, a delay function induces a clustering, computed by running a breadth first search from each node after waiting for a number of steps given by the delay. 
    Each $v \in V$ forwards a subset of at most $\kappa$ tokens per step. 
    $v$ is associated with the cluster of the node whose BFS token arrives first (choosing the smallest ID to break ties).  
    After the arrival of the first token, $v$ continues to forward arriving tokens to others  for $2s$ steps, or until a total of $\kappa$ tokens have been forwarded. 
    The total number of tokens received during the interval of $2s$ steps from the arrival of the first token is counted. 
    The set of nodes whose tokens arrived to $v$ during this interval is denoted $\text{frontier}^{2s}(v)$. 
    Each $v \in V$ with $\text{frontier}^{2s}(v) \le \kappa$ joins the cluster of the node whose token arrived first.
    In total, the BFS procedure takes $\widetilde{O}(s \cdot \log n)$ rounds.
    The cluster diameter is bounded by the maximum delay $O(s \log n)$, since after that any node initiates a token itself. 
    The second main component of the algorithm is computing a suitable delay function. The delay function is computed such that a constant fraction of nodes satisfy $\text{frontier}^{2s}(v) \le \kappa$, i.e., a constant fraction of nodes are clustered. The delay function is chosen from an exponential distribution, by derandomizing a pairwise independent coin-flipping procedure step by step. We make the following modifications to the two main components.

    Clustering from given delays: Only living nodes join clusters. \textit{Dead} nodes $V \setminus A$ do not initiate or join clusters. Dead nodes forward the incoming BFS tokens normally. This is necessary, for distances to be measured in the input graph and not just $G[A]$. The produced clusters have weak diameter instead of strong diameter, because there may be dead nodes inside the corresponding strong diameter cluster. The paths taken by the BFS token of the cluster acts as the Steiner tree, which has radius $O(s \log n)$. Note that the distance to a nearest cluster leader is only bounded for living nodes (the distance is at most $O(s \log n)$, because after at most $O(s \log n)$ steps, any living node starts a cluster itself). To guarantee that the algorithm terminates for dead nodes, we stop execution after $5sR + 2s = O(s \log n)$ steps, which is the maximum number of steps required to compute the clusters and frontiers for all living nodes. 
    
    Computing the delay function: Originally, the choice of delay function is derandomized to guarantee that at least $|V|/2$ nodes are clustered. For our application, the potential functions are computed as sums over living nodes, instead of all nodes in the graph. This guarantees that the number of clustered nodes is at least $|A|/2$. 
\QED

\begin{lemma}[From Low-Degree to Isolation. Adaptation of {\cite[Theorem 5.1]{GGH22}}] \label{lem:isolatedClustering}
    Let $s \ge 2$ (potentially a function of $n$). Given a set $A \subseteq V$ of living nodes, and a clustering $\mathcal{C}$ from \Cref{lem:lowDegClustering}, there is a deterministic \congest algorithm that, in $\widetilde{O}(s \cdot \log^2 n)$ rounds, computes a clustering $\mathcal{C}^{\text{out}}$, where $C \subseteq A$ for each cluster $C \in \mathcal{C}^{\text{out}}$, with
    \begin{enumerate}
        \item weak diameter $O(s \cdot \log n)$. For each cluster $C \in \mathcal{C}^\text{out}$, there is a Steiner tree $T_C$ with radius $O(s \cdot \log n)$ in $G$, with $C$ as the terminal nodes. Each edge of $G$ is in at most one tree.
        \item separation $s$
        \item the number of clustered nodes of $A$ is at least $\frac{|\cup \mathcal{C}|}{1000 \log \log n}$
    \end{enumerate} 
\end{lemma}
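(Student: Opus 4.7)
The plan is to adapt the ``isolation'' procedure of \cite{GGH22} (Theorem~5.1) so that it runs in the setting where the communication network is $G$ while operations are carried out on the cluster graph induced by $\mathcal{C}$. Define a virtual cluster graph $H$ whose vertex set is $\mathcal{C}$ and whose edges connect two clusters $C,C'\in\mathcal{C}$ with $\dist_G(C,C')\le s$. The bounded $s$-hop degree assumption on $\mathcal{C}$ translates to a bound of $\kappa:=\lceil 100\log\log n\rceil$ on the maximum degree of $H$ (every cluster is within $s$ hops of at most $\kappa$ other clusters, by a standard averaging argument on the Steiner trees of $\mathcal{C}$ and the $s$-hop degree definition). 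The target of the procedure is to select a sub-collection of clusters of $\mathcal{C}$ that is an independent set in $H$, while still containing a $\Omega(1/\kappa)$ fraction of the originally clustered nodes. Independence in $H$ is exactly the required separation in $G$.

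The first technical step is to set up the ability to simulate one round on $H$ using the communication network $G$. Each cluster $C$ elects a leader at the root of its Steiner tree $T_C$ (given by input property 1, with radius $O(s\log n)$ and globally edge-disjoint). Using two phases: (i) a broadcast/convergecast of $O(\log n)$ bits along $T_C$, costing $O(s\log n)$ rounds, and (ii) an $s$-hop exploration from the terminals of $T_C$ to deliver the message to all neighboring clusters in $H$, we can simulate one $H$-round in $\widetilde{O}(s\log n)$ rounds of $\congest$ on $G$. The exploration step can be scheduled without prohibitive congestion because every node forwards at most $\kappa$ distinct cluster identifiers per step (after deduplication), and $\kappa=O(\log\log n)$, so all such tokens fit in $O(\log n)$-bit messages within $O(s)$ rounds.

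Given this simulation primitive, I would then have the leaders of clusters compute a proper $(\kappa+1)$-coloring of $H$ deterministically. Since $H$ has maximum degree $\kappa=O(\log\log n)$, this can be done in $\widetilde{O}(\kappa)$ $H$-rounds by a combination of Linial-style coloring followed by iterative color reduction (or directly by the derandomized conditional-expectations scheme used in \cite{GGH22}). Simulated on $G$, this costs $\widetilde{O}(\kappa \cdot s\log n)=\widetilde{O}(s\log^2 n)$ rounds, matching the claimed runtime. Once the coloring is in hand, each cluster learns the size $|C\cap A|$ of its $A$-contribution via a convergecast in $T_C$, and these sums are aggregated per color class across the whole graph (using a global BFS tree of $G$, or by piggybacking on the existing decomposition machinery in the calling algorithm). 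The leader broadcasts the index $j^\star$ of the color class with the largest total; this class is guaranteed by pigeonhole to cover at least $|\cup\mathcal{C}|/(\kappa+1)\ge |\cup\mathcal{C}|/(1000\log\log n)$ $A$-nodes. Set $\mathcal{C}^{\text{out}}$ to be the clusters of color $j^\star$; all other clusters are discarded. Correctness is immediate: same-color clusters in $H$ are non-adjacent, hence at distance $>s$ in $G$; the Steiner trees and their edge-disjointness are inherited directly from $\mathcal{C}$, preserving weak diameter $O(s\log n)$ and the ``at most one tree per edge'' property.

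The main obstacle is controlling the simulation cost of one round of $H$ in $G$. In particular, the $s$-hop exploration from every cluster's terminals could in principle pile up if many clusters have intersecting $s$-neighborhoods; this is precisely where the bounded $s$-hop degree assumption is used. A careful counting argument (each node of $G$ participates in at most $\kappa$ active explorations in any step, and each edge of each $T_C$ carries only messages destined for the $O(\kappa)$ neighboring clusters of $C$ in $H$) bounds the per-edge congestion by $O(\kappa)$, which is swallowed into a $\widetilde{O}(\cdot)$ factor. The rest of the argument is a bookkeeping exercise: verifying that all simulated primitives (coloring, aggregation of color-class sizes, announcement of $j^\star$) respect the radius and congestion bounds of the input Steiner trees, and that dropping clusters neither breaks edge-disjointness of the surviving trees nor their weak-diameter guarantee.
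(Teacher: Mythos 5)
Your plan diverges from the paper's (the paper simply invokes \cite[Theorem 5.1]{GGH22} and only remarks that weak diameter is harmless since each edge lies in at most one Steiner tree, and that $C^{\text{out}}$-clusters are subsets of the input clusters and hence of $A$), and it contains a genuine gap at its very first step. The $s$-hop degree of the clustering $\mathcal{C}$ is a \emph{per-node} quantity: it bounds, for every vertex $v\in V$, the number of clusters whose Steiner trees come within $s$ hops of $v$ by $\kappa=\lceil 100\log\log n\rceil$. It does \emph{not} bound the degree of your virtual cluster graph $H$. A single cluster has weak diameter $O(s\log n)$ and may contain many terminals spread over a large region; different nodes of that cluster can each see $\kappa$ \emph{different} nearby clusters, so one cluster can be within distance $s$ of up to roughly $|C|\cdot\kappa$ other clusters. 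Hence $\Delta(H)$ can be enormous, $H$ need not admit a $(\kappa+1)$-coloring, and the coloring-plus-pigeonhole argument (which is what delivers both the separation and the $|\cup\mathcal{C}|/(1000\log\log n)$ coverage bound) collapses. This is precisely why the isolation step in \cite{GGH22} is nontrivial: the correct mechanism is a per-node competition, e.g.\ (in the randomized view) rank the clusters and let a clustered node survive inside its own cluster only if that cluster outranks all of the at most $\kappa$ clusters within $s$ hops of \emph{that node}; then two surviving nodes of distinct clusters cannot be within $s$ hops of each other, and each node survives with probability at least $1/\kappa$, which after derandomization gives the stated $1/(1000\log\log n)$ fraction. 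Note that this produces output clusters that are \emph{subsets} of input clusters, exactly as the paper's proof note requires, whereas your scheme would keep whole clusters.

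A secondary problem: even granting a proper coloring of $H$, selecting the single globally best color class $j^\star$ requires a global aggregation (you propose a BFS tree of $G$), which costs $\Theta(\diam(G))$ rounds and is not covered by the claimed $\widetilde{O}(s\cdot\log^2 n)$ bound; choosing colors locally is not an option because separation must hold between every pair of output clusters. Your simulation primitive (forwarding at most $\kappa$ cluster identifiers per node per step, broadcast/convergecast along the edge-disjoint Steiner trees) is fine as far as it goes, but it does not rescue the argument, because the object you want to color simply does not have bounded degree.
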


\begin{proof}
    See proof of \cite[Theorem 5.1]{GGH22}. The only difference to the original statement is that clusters have weak diameter, which does not matter because there is no congestion. Note that the clusters in $\mathcal{C}^\text{out}$ consist of only nodes in $A$, because each $C \in \mathcal{C}^\text{out}$ is a subset of some $C' \in \mathcal{C}$, where $C \subseteq A$ by definition in \Cref{lem:lowDegClustering}.
\end{proof}

\begin{proof}[Proof of \Cref{thm:networkDecomp}]
    Fix $G=(V,E)$. Let $A := V$ be a set of living nodes. The result is obtained in $O(\log n \cdot \log \log n)$ iterations. Each iteration forms a color class of the network decomposition with a clustering procedure combining \Cref{lem:lowDegClustering} and \Cref{lem:isolatedClustering}. 

    Fix some iteration $i$. Start by computing a low-degree clustering $\mathcal{C}$ of the living nodes in $G$ using \Cref{lem:lowDegClustering} with $s=k+1$. The runtime is $\widetilde{O}(k \cdot \log^2 n)$ rounds, and the number of clustered nodes is at least $|A|/2$. Using the computed clustering as input, we apply \Cref{lem:isolatedClustering}, which computes a $(k+1)$-separated clustering $\mathcal{C^\text{out}}$ that clusters at least $|A| / 2000 \log \log n$ nodes, in $\widetilde{O}(k \cdot \log^2 n)$ rounds. We color the clusters in $\mathcal{C}^{out}$ with the $i$th color. Let $A = A \setminus (\cup \mathcal{C}^\text{out})$ be the remaining living nodes. 
    
    Each iteration clusters at least a $\log \log n$-fraction of the remaining nodes, so $O(\log n \cdot \log \log n)$ iterations is sufficient. The total runtime is $\widetilde{O}(k \cdot \log^3 n)$.
\end{proof}

\vspace{3mm}
For the MIS algorithm in \Cref{ssec:MISPower}, we additionally prove that the network decomposition algorithm of \cite{GGH22} (and consequently \Cref{thm:networkDecomp}) can be simulated on a ball graph.
The simulation is specific to the algorithm of \cite{GGH22}, and cannot be done efficiently for \congest algorithms in general. All messages from a node in the ball graph to its neighbors are sent along a Steiner tree of the ball. Communication must consist of simple primitives such as broadcast and convergecast, for efficient simulation to be possible.

\begin{claim}[Network decomposition of ball graph]\label{claim:ndAlgoBallGraph}
    Let $R \subseteq V$ and let $\{ball(v) \subseteq V : v \in R\}$ be a set of disjoint balls. Let $\mathcal{B}$ be a ball graph for $\{ball(v) \subseteq V : v \in R\}$, with nodes $R$ and an edge between $v,w \in R$ if $\ball(v)$ and $\ball(w)$ are adjacent in $G$. Assume that for each $v \in R$, there is a Steiner tree $T_v$ with weak diameter $O(r)$, with $\ball(v)$ as the terminal nodes, such that any edge in $E$ is in at most $O(\tau)$ trees. \Cref{thm:networkDecomp} can be simulated on $\mathcal{B}$ with communication network $G$ with an $O(r\cdot \tau)$ slowdown factor. 
\end{claim}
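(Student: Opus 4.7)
The plan is to inspect the communication primitives used by the network decomposition algorithm behind \Cref{thm:networkDecomp} (which inherits its structure from \Cref{lem:lowDegClustering} and \Cref{lem:isolatedClustering}) and argue that each of them can be faithfully simulated on $\mathcal{B}$ when every virtual node $v \in R$ is represented in $G$ by the ball $\ball(v)$ together with its Steiner tree $T_v$. Concretely, the algorithm only uses three kinds of primitives: (i) broadcasts and convergecasts along a cluster's Steiner tree to distribute delay values and to aggregate frontier counts, (ii) BFS-token propagation in which every node forwards at most $\kappa = O(\log \log n)$ tokens per step, and (iii) aggregation along a global spanning tree of the communication network for the derandomization of the delay function (the method of conditional expectations). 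These are exactly the primitives that tolerate the ``blow up one virtual node to a ball'' translation with only an $O(r \cdot \tau)$ cost per virtual round.

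First, I would designate for every $v \in R$ the ball center $v$ itself as the virtual representative, so that the internal state of the simulated node lives at $v$. Any time the simulated algorithm on $\mathcal{B}$ demands that $v$ broadcast a message to (or aggregate a message from) its neighbors in $\mathcal{B}$, the representative first broadcasts/converge\-casts the message within $\ball(v)$ along $T_v$ in $O(r)$ rounds, and boundary nodes of $\ball(v)$ then exchange the message with boundary nodes of adjacent balls $\ball(w)$ across the $G$-edges that witness the $\mathcal{B}$-edge $\{v,w\}$, before the analogous $O(r)$-round in-ball phase on the receiver side. Because each edge of $G$ belongs to at most $\tau$ Steiner trees simultaneously, running all balls' phases in parallel multiplies the per-ball $O(r)$ cost by an $O(\tau)$ congestion factor, giving the advertised $O(r \cdot \tau)$ slowdown for a single virtual round.

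The step I expect to be most delicate is the BFS-token phase of \Cref{lem:lowDegClustering}, because tokens must travel along $\mathcal{B}$ at a controlled virtual speed and the algorithm counts distances in $\mathcal{B}$, not in $G$. I would handle this by treating each ball as a relay: when a token first enters $\ball(v)$ through a boundary node, the representative broadcasts the token through $T_v$ in $O(r)$ rounds so that all boundary nodes learn it, and only then does the ball emit the token to its $\mathcal{B}$-neighbors through their shared $G$-edges, which counts as a single virtual hop. Since at most $\kappa = O(\log\log n)$ distinct tokens (each with an $O(\log n)$-bit identifier and a distance counter) pass through any virtual node per step, the pipelined data fit in $O(\log n)$-bit messages even after accounting for the $O(\tau)$ overlap of Steiner trees, and the total cost per virtual BFS step is again $O(r \cdot \tau)$. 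The derandomization of the delay function uses the same template: the per-bit conditional expectations are now sums indexed by balls rather than by nodes, but each ball can compute its local contribution in $O(r)$ rounds via $T_v$, and these contributions can then be aggregated along a global spanning tree of $G$; congestion across Steiner trees again contributes at most an $O(\tau)$ factor.

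Putting the ingredients together, every virtual round of the network decomposition algorithm of \Cref{thm:networkDecomp} executed on $\mathcal{B}$ is simulated by $O(r \cdot \tau)$ rounds in $G$, and this is uniform over the three primitive types. Since the correctness of the simulated algorithm is preserved verbatim (each virtual node receives precisely the messages it would have received when executing on $\mathcal{B}$), the overall runtime on $\mathcal{B}$ is that of \Cref{thm:networkDecomp} multiplied by $O(r \cdot \tau)$, which is exactly the claim. The one sanity check I would carry out explicitly is that the Steiner trees produced for the output clustering of $\mathcal{B}$ can be ``expanded'' back to Steiner trees in $G$ by substituting each node $v$ in a tree by $T_v$; because a $\mathcal{B}$-Steiner tree has congestion one, the resulting $G$-Steiner trees inherit congestion $O(\tau)$, matching the interfaces required by the callers in \Cref{ssec:MISPower}.
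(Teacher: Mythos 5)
Your proposal is correct and follows essentially the same route as the paper: it identifies the same communication primitives of the \cite{GGH22}-based algorithm behind \Cref{thm:networkDecomp} (BFS-token propagation with at most $\kappa=O(\log\log n)$ tokens per step, broadcasts/convergecasts along cluster trees, and global aggregation for the delay derandomization) and simulates each on $\mathcal{B}$ via the balls' Steiner trees, paying dilation $O(r)$ and congestion $O(\tau)$ per virtual step, plus the same expansion of ball-graph trees to $G$-Steiner trees at the end. The only minor quibble is your remark that the $\kappa$ tokens ``fit in $O(\log n)$-bit messages'': the paper instead pipelines them and charges $O(\kappa\cdot r\cdot\tau)$ rounds per simulated step (absorbing the $\kappa$ factor exactly as the original algorithm's per-step cost does), which does not affect the claimed $O(r\cdot\tau)$ slowdown.
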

\begin{proof}
    The proof of \Cref{thm:networkDecomp} consists of iterating \Cref{lem:lowDegClustering} and \Cref{lem:isolatedClustering}. These are based on \cite[Theorem 4.1, Theorem 5.1]{GGH22}, respectively. In particular, the communication aspects remain unchanged. Communication in \cite[Theorem 4.1, Theorem 5.1]{GGH22} is based on the communication primitives of \cite[Lemma 4.7, 5.2]{GGH22}, respectively. We show that these communication primitives can be implemented for the ball graph with a slowdown factor of $O(r \cdot \tau)$, using the Steiner trees of the balls.

    \cite[Lemma 4.7]{GGH22}: Each node $v \in R$ in the ball graph starts a breadth first search after a given delay. The BFS token includes $\ID(v)$ and whether the delay of $v$ is still being decreased (whether $v \in V^{\text{active}}_i$ or not). Let $\kappa := \lceil 100 \log \log n\rceil$ be a constant. During the entire BFS process, each node $v \in R$ in the ball graph forwards at most $\kappa$ tokens to its neighbors in $\mathcal{B}$. During one \textit{step} of the BFS, a subset of at most $\kappa$ tokens are forwarded, with priority given for nodes $x \not\in V^{\text{active}}_i$. One step of the procedure can be implemented in the ball graph in $O(\kappa \cdot r \cdot \tau)$ \congest rounds, i.e., with an $O(r \cdot \tau)$ slowdown factor. A set of tokens sent from $v \in R$ is sent to nodes in $\ball(v)$ along $T_v$ by pipelining, which can be done in $O(\kappa \cdot r \cdot \tau)$ rounds. Each $w \in \ball(v)$ sends the set of tokens to its neighbors $w' \in N(w) \setminus \ball(v)$ in $O(\kappa)$ rounds. Incoming tokens are received by each $w \in \ball(v)$. A subset of at most $\tau$ tokens are forwarded to the root $v$ along $T_v$, with priority given for nodes $x \not\in V^{\text{active}}_i$, which can similarly be done in $O(\kappa \cdot r \cdot \tau)$ rounds.

    The paths taken by the tokens can be used to implement the required communication primitives. For each $v \in R$, the process defines a tree in the ball graph. The trees formed by the paths of the token contain all the necessary nodes (namely $M_{i-1}(v)$) in the ball graph, by the original analysis of the BFS procedure in \cite{GGH22}. The trees in the ball graph can be extended to $G$ by choosing one path inside the Steiner tree for each ball in the tree. Any edge in $E$ is part of at most $O(\kappa \cdot \tau)$ extended trees, since any edge is part of at most $O(\tau)$ Steiner trees of balls, and each node $v \in R$ in the ball graph forwards at most $\kappa$ tokens in total. Using the extended tree, each $v \in R$ can send one $O(\log n)$-bit message to nodes in $M_{i-1}(v)$. One step of this process runs in $O(\kappa \cdot \tau \cdot r)$ rounds, as in the BFS procedure. Reversing the direction, we can send an aggregation of messages from $M_{i-1}(v)$ to $v \in R$, where each step takes $O(\kappa \cdot \tau \cdot r)$ rounds. Hence the slowdown factor is $O(\tau \cdot r)$. 
    
    \cite[Lemma 5.2]{GGH22}: The lemma is also based on a BFS procedure, consisting of two phases. In the first phase, each center of a \textit{cluster} (subset of $R$) in the ball graph starts a BFS, propagated for at most $s$ hops in the ball graph ($s$ is the separation parameter in \cite[Theorem 5.1]{GGH22}). Each node $v \in R$ forwards at most $\kappa$ tokens per step to its neighbors in $\mathcal{B}$. In the second phase, balls that belong to a cluster propagate the set of received tokens along a tree of the cluster, from the root toward the leaves. By the clustering procedure (s-hop degree is at most $\kappa$ for clustered nodes), the number of tokens forwarded is at most $\kappa$. Both phases of this procedure can be implemented for the ball graph with an $O(\tau \cdot r)$ factor slowdown, with the same principle as for \cite[Lemma 4.7]{GGH22}. The process defines a tree in the ball graph, which is extended to $G$ by choosing one path inside the Steiner tree for each ball in the tree. The required communication primitives are similar in principle, including sending a single message from the cluster center, and convergecasting to the cluster center. These are implemented in the extended tree in the same way as above. 
\end{proof}

\section{Pseudocode for \Cref{sec:sparsGk}}
\label{app:pseudocodes}

\begin{algorithm}[h]
    \DontPrintSemicolon
    \SetAlgoLined
    \LinesNumbered
    \SetKwIF{If}{ElseIf}{Else}{if}{:}{else if}{else}{}
    \SetKwFor{For}{for}{:}{}
    \SetKwFor{ForEach}{foreach}{:}{}
    \KwIn{$k \ge 1$; Each $v \in V$ knows if it's in a set of initially active nodes $A \subseteq V$.}

    $Q_0 := A$\;

    \For{\upshape{iteration} $s= 1, \dots, k$}{
        $H_1 := Q_{s-1}$\;
        $\maxactivedeg^{(s)} := \Delta \text{ if } s=1 \text{ else } 72\Delta \log n$\;
        $r := \lfloor \log \maxactivedeg - \log \log n\rfloor-5$\;
        \For{\upshape{stage} $i= 1, \dots, r$}{
            Find $M_i \subseteq H$ using \Cref{lem:conditionalEVs} s.t. neither $\Phi_v$ nor $\Psi_v$ (\Cref{lem:simDetSparsification}) occur for any $v \in V$

            \ForEach{$v \in H_i$ in parallel}{
                \If{$v \in M_i$}{
                    Send flag \textit{sampled}, propagated to distance-$2s$ neighborhood (distance-2 neighborhood in $G^s$)\;
                }
                \If{$v \in M_i$ \upshape{or received} \textit{sampled}}{
                    Remove $v$ from $H_i$\;
                    Send broadcast (\textit{deactivated}, ID($v$)) to distance-$s$ neighborhood (distance-1 neighborhood in $G^s$) using \Cref{lem:comms}\;
                }
            }
            \ForEach{$v \in V$ in parallel}{
                Form knowledge of $N^s(v, H_{i+1}) \subseteq N^s(v, H_{i})$ (remove nodes $w \in H_i$ who sent \textit{deactivated})
            }
        }
        $M_{r+1} := H_{r+1}$\;
        $Q_s := \cup_{i=1}^{r+1} M_i$\;
        \ForEach{$v \in V$ in parallel}{
            Send $N^s(v, Q_s)$ to all neighbors $w \in N(v)$ to learn $N^{s+1}(v,Q_{s})$ and extend BFS trees (\Cref{lem:sendingIDs})\;
        }
    }
    \Return{$Q_k$}
    \caption{Sparsification for $G^k$}
    \label{algo:sparsification}
\end{algorithm}

\end{document}